\definecolor{TUGreen}{rgb}{0.517,0.721,0.094}
\definecolor{TUOrange}{rgb}{1.0,0.7176,0.0}
\definecolor{BrightGray}{gray}{0.9}
\definecolor{DarkGray}{gray}{0.2}
\definecolor{white}{rgb}{1,1,1}
\definecolor{black}{rgb}{0,0,0}
\definecolor{red}{rgb}{1,0,0}
\definecolor{itemizeblue}{rgb}{0.2,0.2,0.7}
\definecolor{shorse}{rgb}{0.84,0.84,0.94}
\definecolor{shorselight}{rgb}{0.93,0.93,0.98}
\definecolor{shorsedark}{rgb}{0.76,0.76,0.91}
\newcommand{\dnode}[1]{\textnormal{d-node}(#1)}
\newcommand{\vars}[1]{\textnormal{vars}(#1)}
\newcommand{\rootset}[1]{\textnormal{root}(#1)}
\newcommand{\nodefunc}[1]{\textnormal{node}(#1)}
\newtheorem{proposition}{Proposition}
\newtheorem{theorem}{Theorem}
\newtheorem{lemma}{Lemma}
\newtheorem{corollary}{Corollary}
\newtheorem{simulation}{Simulation}
\theoremstyle{remark}
\newtheorem*{proofidea}{Proof idea}
\theoremstyle{definition}
\newtheorem{definition}{Definition}
\renewenvironment{proof}[1][Proof]{
                \begin{trivlist}
                \item \upshape \bfseries #1.
                \upshape\mdseries}
               {\nopagebreak[4]\hspace*{\fill}\mbox{$\Box$}\end{trivlist}}
\begin{document}

\title{On the Relative Succinctness of Sentential Decision Diagrams}

\author{Beate Bollig\thanks{TU Dortmund, LS2 Informatik, Germany, Email: beate.bollig@tu-dortmund.de} \and Matthias Buttkus\thanks{TU Dortmund, LS2 Informatik, Germany, Email: matthias.buttkus@tu-dortmund.de}}


\maketitle

\begin{abstract}

Sentential decision diagrams (SDDs) introduced by Darwiche in 2011
are a promising representation type used in knowledge compilation.
The relative succinctness of representation types is an important subject in this area.
The aim of the paper is to identify which kind of Boolean functions can be represented by SDDs of small size
with respect to the number of variables the functions are defined on.
For this reason the sets of Boolean functions representable by different representation 
types in polynomial size are investigated and SDDs are compared with representation types from the 
classical knowledge compilation map of Darwiche and Marquis.
Ordered binary decision diagrams (OBDDs) which are a popular data structure for Boolean functions
are one of these representation types. 
SDDs are more general than OBDDs by definition but only recently, a Boolean function was 
presented with polynomial SDD size but exponential OBDD size. This result is strengthened in several ways.
The main result is a quasipolynomial simulation of SDDs by equivalent unambiguous nondeterministic 
OBDDs, a nondeterministic variant where there exists exactly one accepting computation for each satisfying input.
As a side effect an open problem about the relative succinctness between SDDs and 
free binary decision diagrams (FBDDs) which are more general than OBDDs is answered.\\

\noindent
{\bf Keywords} complexity theory $\cdot$ decomposable negation normal forms $\cdot$
          knowledge compilation $\cdot$ ordered binary decision diagrams 
          $\cdot$ sentential decision diagrams $\cdot$ storage access functions
\end{abstract}

\section{Introduction}\label{sec1}

Knowledge compilation is an area of research with a long tradition in artificial
intelligence (see, e.g., \cite{CD97}).
An input formula is converted into a representation of the Boolean function
that the formula defines from which some tasks can (hopefully) be done efficiently.
Developing their knowledge compilation map
Darwiche and Marquis identified sets of
useful queries and transformations in the area of knowledge compilation
and compared systematically different representation types w.r.t.\ their succinctness
and efficient support of these operations \cite{DM02}.
One aim of their work was 
to decide whether representations can be transformed into equivalent ones of another
representation type at the cost of increasing the representation size at most polynomially.
Here we continue this part of their work. 
Sentential decision diagrams, or SDDs for short, introduced by Darwiche \cite{Dar11}
are a promising representation type for propositional knowledge bases in artificial intelligence.
Our main motivation in the paper is to characterize which kind of Boolean functions can be represented
by SDDs of small size.

\paragraph{{\bf Contribution and related work}}

For a representation type $\mathcal{M}$ let $\mathcal{P}(\mathcal{M})$
be the set of all Boolean functions representable by $\mathcal{M}$
in polynomial size w.r.t.\ the number of Boolean variables the functions are defined on.
We call $\mathcal{P}(\mathcal{M})$ a complexity class.
Our aim is to characterize the complexity class $\mathcal{P}($SDD) as precisely as possible.
For the formal definitions of the following representation types see Section \ref{sec2}.

If one likes to have representations of small size for Boolean functions, 
circuits are the most powerful model. 
The desire to find representation types with better algorithmic properties 
leads to restricted circuits.
Decomposable negation normal form circuits, or DNNFs for short, 
introduced  by Darwiche \cite{Dar01}
are the most general one of these representation types discussed in this paper.
The subcircuits leading into each $\wedge$-gate (conjunction) are defined on 
disjoint sets of variables. Darwiche also defined deterministic DNNFs, or $d$-DNNFs for short,
where the subcircuits leading into each $\vee$-gate (disjunction) never simultaneously evaluate 
to the function value $1$. This restriction allows polynomial-time equivalence testing \cite{HD07}.

In his seminal paper Bryant showed that ordered binary decision diagrams, or OBDDs for short,
are well suited as data structure for Boolean functions \cite{Bry86}.
Since some important functions have exponential OBDD size,
many variants and extensions have been considered (for an extensive discussion 
see, e.g., the monograph of Wegener \cite{Weg00}). 
Besides nondeterministic variants and co-nondeterministic variants,
free binary decision diagrams (FBDDs) and $k$-OBDDs, 
for constant $k$, have been investigated. FBDDs and $k$-OBDDs are by definition 
more general than OBDDs.

SDDs 
are restricted $d$-DNNFs more general than OBDDs.
Recently, Bova 
provided a function
in $\mathcal{P}($SDD) whose OBDD size is exponential \cite{Bov16}.
This result is strengthened by our proof that there exist Boolean functions
representable by SDDs of polynomial size but with exponential FBDD size
(see Section \ref{sec:storage}).
This result answers a question posed by Beame and Liew (see {\it Discussion} in \cite{BL15})
in the affirmative
whether SDDs are ever more concise than so-called decision-DNNFs which are also restricted $d$-DNNFs
considered in database theory in the context of probabilistic databases. (See, e.g., \cite{OD14} for a discussion
on the importance of decision DNNFs in model counting, the problem to compute the number of satisfying 
assignments of a Boolean formula.) There exists a
quasipolynomial simulation of decision-DNNFs by equivalent FBDDs \cite{BLR13}.
Moreover, Beame and Liew showed that SDDs are sometimes exponentially less concise than FBDDs \cite{BL15}.
Therefore, we can conclude that SDDs and FBDDs are incomparable w.r.t.\ polynomial-size 
representations (see also Figure \ref{fig:landscape2}).
In other words, $\mathcal{P}($SDD) is not a subset of $\mathcal{P}($FBDD) and vice versa.
Furthermore, we prove that SDDs are even more powerful w.r.t.\ polynomial-size representations than $k$-OBDDs,
where $k$ is a constant
(see Section \ref{sec:comparison}). 
For this result we use a polynomial transformation from $k$-OBDDs for $k$ into equivalent
unambiguous nondeterministic OBDDs.
Until now it is open whether the set of Boolean functions representable by
polynomial-size unambiguous nondeterministic OBDDs,
or $\vee_1$-OBDDs for short,
that have exactly one accepting path for every satisfying input is a subset of $\mathcal{P}($SDD)
(see also Figure \ref{fig:landscape1}).
One of our main results is the proof  
that every Boolean function
$f$ for which $f$ and its negated function $\overline{f}$ can be represented by 
polynomial-size unambiguous nondeterministic OBDDs w.r.t.\ the same variable ordering can also be 
represented by SDDs of polynomial size (see Section \ref{section:simulation_vee_one_obdds_by_sdds}).
This result is sufficient to prove that $\mathcal{P}(k$-OBDD)$\subseteq \mathcal{P}($SDD).
Adapting a result from Sauerhoff that nondeterministic OBDDs where all nondeterministic decisions
are made at the beginning of the computations are less powerful w.r.t.\ polynomial-size representation
than general nondeterministic OBDDs \cite{Sau03b}, we can strengthen our result to
$\mathcal{P}(k$-OBDD)$\subsetneq \mathcal{P}($SDD).

Razgon proved a quasipolynomial separation between decision-DNNFs and nondeterministic FBDDs,
or $\vee$-FBDDs for short, \cite{Raz16}. He presented a Boolean function
with polynomial decision-DNNF size but only quasipolynomial nondeterministic FBDD size.
A careful inspection of his results (Theorem 2 and 3 in \cite{Raz16}) in combination with
a result from Darwiche (Theorem 13 in \cite{Dar11}) also leads to a quasipolynomial separation 
between SDDs and nondeterministic FBDDs. Since FBDDs are more general than OBDDs this
is also a quasipolynomial separation between SDDs and nondeterministic OBDDs. 
Recently, strengthening his result, Razgon presented a quasipolynomial separation between SDDs and a representation typ
more general than nondeterministic OBDDs \cite{Raz17}.
The second main result of our paper is the proof that
SDDs can be simulated with only a quasipolynomial size increase
by equivalent unambiguous nondeterministic OBDDs (see Sections \ref{section:simulation_sdnnfs_by_vee_obdds}
and \ref{section:simulating_structured_d_dnnfs}).
This simulation yields directly lower bounds on the SDD size of Boolean functions $f$ from
unambiguous nondeterministic OBDD lower bounds for $f$.
Because of Razgon's quasipolynomial separation \cite{Raz17} 
our result is tight.
For our simulation we extend ideas described independently by Beame and Liew  and by Razgon for a quasipolynomial transformation
from DNNFs to equivalent nondeterministic FBDDs \cite{BL15,Raz15}.
We prove that so-called structured DNNFs can be simulated by equivalent nondeterministic
OBDDs with only a quasipolynomial increase in representation size. Moreover, if the structured DNNF is deterministic
the result is an unambiguous nondeterministic OBDD. Since SDDs are restricted deterministic structured DNNFs, we are done.

Figure \ref{fig:landscape1} and \ref{fig:landscape2} illustrate the relative succinctness of some of the representation
types mentioned above.
$\mathcal{P}($OBDD)$\subsetneq \mathcal{P}($SDD) was shown in \cite{Bov16}. It is known that 
$\mathcal{P}($SDD)$\not\subseteq \mathcal{P}(\vee_1-$OBDD) (see \cite{Raz17} and \cite{Dar11,Raz16}).
We prove that the separation between 
$\mathcal{P}($SDD) and $\mathcal{P}(\vee_1-$OBDD) is only quasipolynomial.
The question whether $\mathcal{P}(\vee_1-$OBDD)$ \subsetneq \mathcal{P}($SDD) is open.

$\mathcal{P}($SDD)$\not\subseteq \mathcal{P}(\vee-$FBDD) can be proved with results in \cite{Dar11,Raz16}
but the separation is only quasipolynomial.  An exponential separation exists between 
$\mathcal{P}($SDD) and $\mathcal{P}($FBDD) and vice versa (see Section \ref{sec:storage} and \cite{BL15}).

\paragraph{{\bf Remarks}}
SDDs are structured w.r.t.\ so-called vtrees whose leaves are labeled by Boolean variables
and OBDDs respect  so-called variable orderings which are lists of variables (see Section \ref{sec2}).
Xue, Choi, and Darwiche showed a Boolean function whose SDD size w.r.t.\ a given vtree
$T$ is linear but whose OBDD size w.r.t.\ a variable ordering that 
corresponds to 
a left-right traversal of the leaves in $T$ is exponential (Theorem 1 in \cite{XCD12}).
Their result demonstrates that for a space-efficient simulation of SDDs by equivalent unambiguous nondeterministic
OBDDs the choice of the variable ordering is not trivial. 
As a side effect, our quasipolynomial simulation of SDDs by equivalent unambiguous nondeterministic OBDDs 
presented in Section \ref{section:simulation_sdnnfs_by_vee_obdds} and in Section 
\ref{section:simulating_structured_d_dnnfs} generates a variable ordering from a given vtree. 
For the SDD given in \cite{XCD12} it generates a variable ordering
for which the represented function has polynomial OBDD size.

\vspace{0.2cm}
Only recently, Cali, Capelli, and Razgon investigated two restricted variants of decision DNNFs, 
so-called structured decision DNNFs and so-called decomposable $\wedge$-OBDDs 
which are OBDDs augmented with decomposable $\wedge$-nodes \cite{CCR17}.
Since our quasipolynomial simulation of SDDs by equivalent unambiguous nondeterministic OBDDs 
generates a variable ordering from a given vtree, 
our constructon can be used to show that each structured decision
DNNF can be seen as a decomposable  $\wedge$-OBDD
of the same asymptotical size. This answers the question in \cite{CCR17} in the affirmative
whether a polynomial transformation from structured decision DNNFs to equivalent 
decomposable $\wedge$-OBDDs exists.
Moreover, our simulation shows that every function representable by 
decomposable $\wedge$-OBDDs can be represented by
OBDDs with only a quasipolynomial increase in representation size in general
(a fact already mentioned in \cite{OD15} but without proof).

\begin{figure}[!b]
\centering
\resizebox{0.6\textwidth}{!}{
\begin{tikzpicture}[scale=1.5]

\def\kreisAussen{(0,0) ellipse (4.5cm and 2.5cm)};
\def\kreisBlau{(-1,0) ellipse (2.5cm and 1.5cm)};
\def\kreisRot{(1,0) ellipse (2.5cm and 1.5cm)};
\def\kreisGelb{(0.2,0) ellipse (1cm and 0.75cm)};

\draw \kreisAussen;
\draw[dotted] \kreisBlau;
\draw[dashed] \kreisRot;
\draw[dashed, thick] \kreisGelb;

\node at (0,-2.7) {\large{$\mathcal{P}(d$-DNNF$)$}};
\node at (-1.3,-1.7) {\large{$\mathcal{P}($SDD$)$}};
\node at (1.3,-1.7) {\large{$\mathcal{P}(\lor_1$-OBDD$)$}};
\node at (0.1,-0.95) {\large{$\mathcal{P}($OBDD$)$}};

\node[draw,circle,inner sep=2pt] (raz) at (-2.5,0) {};
\node[draw,circle,inner sep=2pt] (here) at (-1.1,0) {};
\node[draw,circle,inner sep=2pt] (bl) at (2.5,0) {};

\node (qm) at (-3,2.5) {\large{\cite{Raz17} and \cite{Dar11,Raz16}}}; 
\draw[-stealth] (qm) to (raz);

\node (h) at (-1,3) {\large{\cite{Bov16}}};
\draw[-stealth] (h) to (here);

\node (citation) at (3.75,2.5) {\large{?}};
\draw[-stealth] (citation) to (bl);

\end{tikzpicture}
}
\caption{On the relative succinctness of SDDs and (unambiguous nondeterministic) OBDDs.
}\label{fig:landscape1}
\end{figure}
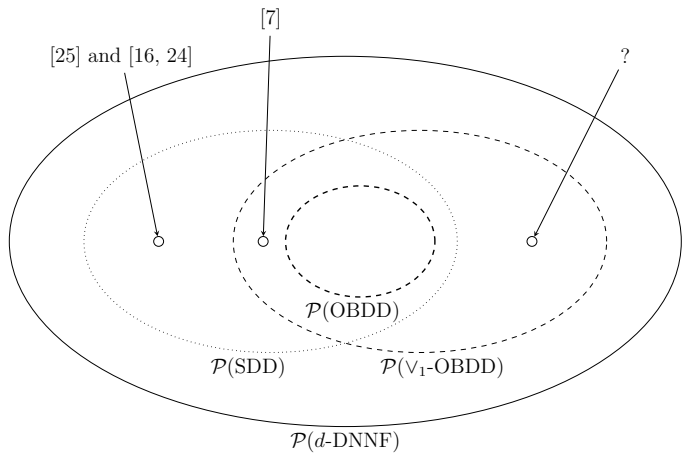

\begin{figure}[!t]
\centering
\resizebox{0.6\textwidth}{!}{
\begin{tikzpicture}[scale=1.5]

\def\kreisAussen{(0,0) ellipse (4.5cm and 2.5cm)};
\def\kreisBlau{(-1,0) ellipse (2.5cm and 1.5cm)};
\def\kreisRot{(1,0) ellipse (2.5cm and 1.5cm)};
\def\kreisGelb{(0.2,0) ellipse (1cm and 0.75cm)};
\def\kreisGrau{(-1.1,0) ellipse (1cm and 0.75cm)};

\draw \kreisAussen;
\draw[dotted] \kreisBlau;
\draw[dashed] \kreisRot;
\draw[dashed,thick] \kreisGelb;
\draw[dotted,thick] \kreisGrau;

\node at (0,-2.7) {\large{$\mathcal{P}($DNNF$)$}};
\node at (-1.3,-1.7) {\large{$\mathcal{P}(d $-DNNF$)$}};
\node at (1.3,-1.7) {\large{$\mathcal{P}(\lor$-FBDD$)$}};
\node at (-1.6,-0.95) {\large{$\mathcal{P}($SDD$)$}};
\node at (0.2,-0.95) {\large{$\mathcal{P}($FBDD$)$}};

\node[draw,circle,inner sep=2pt] (?) at (-1.8,0) {};
\node[draw,circle,inner sep=2pt] (here) at (-1.1,0) {};
\node[draw,circle,inner sep=2pt] (bl) at (0.5,0) {};

\node (qm) at (-2.5,2.5) {\large{\cite{Dar11,Raz16}}};
\draw[-stealth] (qm) to (?);
\node (h) at (-1,3) {\large{new}};
\draw[-stealth] (h) to (here);

\node (citation) at (2.5,2.8) {\large{\cite{BL15}}}; 
\draw[-stealth] (citation) to (bl);
\end{tikzpicture}
}
\caption{On the relative succinctness of \textup{SDDs} and \textup{FBDDs}.
}\label{fig:landscape2}
\end{figure}
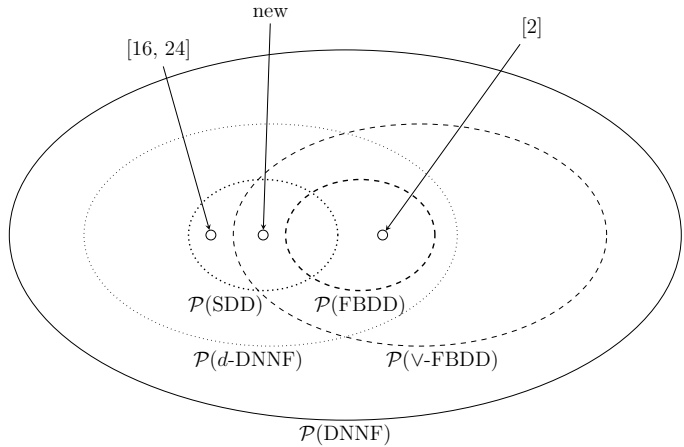

\paragraph{{\bf Organization of the paper}}

The rest of the paper is organized as follows.
In Section \ref{sec2} we recall the main definitions concerning binary decision diagrams and
decomposable negation normal forms. Moreover, important Boolean functions
which are discussed later on in the paper are formally defined.
The next sections contain our main results.
In Section \ref{section:simulation_vee_one_obdds_by_sdds} it is shown that every Boolean function
$f$ for which $f$ and its negated function $\overline{f}$ can be represented by 
polynomial-size unambiguous nondeterministic OBDDs w.r.t.\ the same variable ordering can also be 
represented by SDDs of polynomial size.
Section \ref{section:simulation_sdnnfs_by_vee_obdds} and
Section \ref{section:simulating_structured_d_dnnfs} 
are devoted to the new quasipolynomial transformation from structured (deterministic) DNNFs into
equivalent (unambiguous) nondeterministic OBDDs. 
Section \ref{sec:storage} uses 
the results from Section \ref{section:simulation_vee_one_obdds_by_sdds}
to derive small size SDDs for an 
important class of Boolean functions called strorage access functions.
Moreover, we obtain 
as a corollary the result that there are functions with polynomial SDD size but exponential FBDD size.
The proof that SDDs are more powerful w.r.t.\ polynomial-size representations than $k$-OBDDs for constant $k$,
a generalization of OBDDs, is shown in Section $\ref{sec:comparison}$.
This is done by demonstrating that Boolean functions representable by $k$-OBDDs of polynomial size,
where $k$ is a constant, can be represented by equivalent restricted unambiguous nondeterministic OBDDs
of polynomial size.
Finally, we finish the paper with some open questions.
For readability some tedious technical proofs are delegated into the appendix.

\section{Preliminaries}\label{sec2}

In the following we assume familiarity with fundamental concepts on circuits
(otherwise see, e.g., \cite{Vol99} and \cite{Weg87} for more details). 
In this section, we briefly recall the main notions concerning binary decision diagrams and decomposable
negation normal forms,
discuss the relation between ordered binary decision diagrams and sentential decision diagrams,
and introduce some Boolean functions.

\subsection{\bf Binary decision diagrams}

In complexity theory binary decision diagrams or in this area more often called branching programs
are a well established 
representation type for discrete functions and
the binary decision diagram size of a Boolean function is known 
to be a measure for the space complexity of nonuniform Turing machines and 
known to lie between the circuit size of the considered function 
and its $\{\wedge, \vee, \neg\}$-formula size (see, e.g., \cite{Weg87,Weg00}).

Since binary decision diagrams are a nonuniform model of computation, 
usually sequences of binary decision diagrams
$G=(G_n)$ representing sequences of Boolean functions $f=(f_n)$
are considered, where $f_n$ is defined on $n$ variables and $n\in \mathbb{N}$. 
In the following we simplify the notation for all nonuniform computation models
because the meaning is clear from the context.
Moreover, in the remaining part of the paper the size of a representation for a Boolean function 
refers to the number of variables
the function is defined on if nothing else is explicitly mentioned.

\begin{definition}[BDDs]
A {\em binary decision diagram} \textup{(BDD)}
on a variable set $X=\{x_1, \ldots, x_n\}$
is a directed acyclic graph with one source and sinks labeled
by the constants $0$ and $1$, respectively.
Each 
internal node \textup{(}or {\em decision node}\textup{)} is labeled
by a Boolean variable
and has two outgoing edges, one labeled by $0$ and the other by $1$.
A {\em nondeterministic binary decision diagram}
\textup{(}$\vee$-\textup{BDD}\textup{)}
is a binary decision diagram with some
additional nodes called {\em nondeterministic nodes} \textup{(}$\vee$-nodes\textup{)}
whose outgoing edges are unlabeled.

An input $b\in \{0,1\}^n$ activates all edges consistent with $b$,
i.e., the edges labeled by $b_i$ which leave nodes labeled by $x_i$
\textup{(}and all unlabeled edges in a nondeterministic binary decision diagram\textup{)}.
A {\em computation path} for an input $b$ in a \textup{BDD} is
a directed path of edges activated by the input $b$ that leads from the source to a sink.
A computation path 
for an input $b$ 
that leads to the 1-sink is called
{\em accepting path} for $b$.

Let $B_n$ denote the set of all Boolean functions defined on $n$ variables.
A \textup{(}nondeterministic\textup{)} \textup{BDD}
represents the function $f\in B_n$ 
for which $f(b)=1$ iff there exists an accepting path for the input $b$.
A nondeterministic \textup{BDD} is {\em unambiguous nondeterministic}, or a $\vee_1$-BDD for short,
iff there exists at most one accepting path for every input.

The {\em size} of a \textup{(}nondeterministic\textup{)} binary decision diagram $G$ is the number of
its nodes and is denoted by $|G|$.
The {\em \textup{(}nondeterministic\textup{)} binary decision diagram size} of a Boolean function $f$ is the size of a smallest
\textup{BDD} representing $f$.
\end{definition}

Our definition of the (nondeterministic) binary decision diagram size as the number of nodes 
and not the number of edges is justified 
because both numbers are polynomially related.


In many applications, such as symbolic verification or the analysis of circuits and automata,
data structures for Boolean functions are necessary that represent important functions in small size
and allow the efficient execution of important operations 
(for the choice of these operations and a discussion see, e.g., Section 10.2 in \cite{BSSW10} and \cite{Weg94}).
Since satisfiability test and equality check are two important operations that are NP-hard for general 
BDDs, restricted variants are considered.
FBDDs (with some restrictions) and $k$-OBDDs, where $k$ does not depend on the number of Boolean variables 
the represented function is defined on, 
allow polynomial time algorithms for important operations.
OBDDs introduced by Bryant \cite{Bry86} are restricted FBDDs and restricted $k$-OBDDs.
\begin{definition}
\begin{enumerate}
\item[\textup{(}i\textup{)}] A {\em free binary decision diagram} \textup{(FBDD)} or {\em read-once branching program}
           is a \textup{BDD} where each directed path contains for each variable at most one node 
           labeled by this variable.
            (See Figure \ref{bdds} for an example of an \text{FBDD}.)
\item[\textup{(}ii\textup{)}] An {\em ordered binary decision diagram} \textup{(OBDD)} 
            is a binary decision diagram where on each directed
            path the node labels of the decision nodes are a subsequence of a given variable ordering
            $x_{\pi(1)}, x_{\pi(2)}, \ldots, x_{\pi(n)}$, where $\pi$ is a permutation on $\{1, \ldots, n\}$.
            (See Figure \ref{dsa} for an example of an \text{OBDD}.)
\item[\textup{(}iii\textup{)}] A {\em $k$-\textup{OBDD}} is a binary decision diagram that can be partitioned 
             into $k$ layers. Each layer is an \textup{OBDD} \textup{(}with possibly many 
             sources\textup{)} such that the edges leaving the $i$-th layer, $1\leq i < k$, reach only nodes of a
             layer $j>i$ and the sinks. Moreover, all \textup{OBDDs} respect the same variable ordering 
             which means that on all directed paths in a layer 
            the node labels of the decision nodes are a subsequence of a given variable ordering and this
            ordering is the same for all layers. 
            (See Figure \ref{bdds} for an example of a $2$-\text{OBDD}.)
\end{enumerate}
\end{definition}

Nondeterministic 
variants of restricted BDDs can be defined similarly as for BDDs.
In the rest of the paper we consider $k$-OBDDs, where $k$ is a constant, 
if nothing else is mentioned.
Since a variable ordering can be identified with the corresponding permutation,
$\pi$ also denotes the ordering of the variables by abuse of notation.

A {\em $1$-input} or {\em satisfying input} for a function $f$ is an assignment to the input variables 
whose function value is $1$, in other words this assignment is mapped to $1$ by $f$.
A function is {\em satisfiable} if there exists a satisfying input for $f$.
In the following, by abuse of notation we say that a (nondeterministic) BDD $G$ has a $1$-input or a satisfying input
if $G$ does not represent the constant $0$ function.

Since OBDDs are restricted FBDDs and restricted $k$-OBDDs by definition, 
$\mathcal{P}(\textup{OBDD})\subseteq \mathcal{P}(\textup{FBDD})$ and 
$\mathcal{P}(\textup{OBDD})\subseteq \mathcal{P}(k\textup{-OBDD})$. 
Moreover, we know that
$\mathcal{P}(\textup{OBDD})\subsetneq \mathcal{P}(\textup{FBDD})$ and 
$\mathcal{P}(\textup{OBDD})\subsetneq \mathcal{P}(k\textup{-OBDD})$.
The hidden weighted bit 
function HWB$_n$ defined below is an example of a Boolean function representable by
$2$-OBDDs and FBDDs of size $\mathcal{O}(n^2)$ but its OBDD size is $\Omega(2^{n/5})$
(\cite{BLSW99} and \cite{SW95}).
It is well-known that the complexity classes 
$\mathcal{P}(\textup{FBDD})$ 
and 
$\mathcal{P}(k\textup{-OBDD})$
are incomparable which means 
$\mathcal{P}(\textup{FBDD}) \not\subseteq \mathcal{P}(k\textup{-OBDD})$ and
$\mathcal{P}(k\textup{-OBDD})\not\subseteq \mathcal{P}(\textup{FBDD})$.
Moreover,
there are Boolean functions representable in polynomial size by one model 
but only in exponential size by the other one and vice versa (see, e.g., \cite{Weg00}).
The same result holds for $\mathcal{P}($FBDD) and $\mathcal{P}(\vee_1$-OBDD). 

\subsection{\bf Decomposable negation normal forms}

Many known representations of propositional knowledge bases 
are restricted negation normal form circuits (NNFs)
and correspond to specific properties on NNFs \cite{DM02}. 
Decomposability and determinism are two of these fundamental properties.

\begin{definition}[NNFs]
A {\em negation normal form circuit} on a variable set 
$X$
is a Boolean circuit over fanin 2 conjunction and unbounded fanin disjunction gates, 
labeled by $\wedge$ and $\vee$,
whose inputs are labeled by literals 
$x$ and $\overline{x}$, $x\in X$,
and the Boolean constants $0$ and $1$.
The {\em size} of an \textup{NNF} $C$, denoted by $|C|$, is the number of its gates.
The {\em \textup{NNF} size} of a Boolean function $f$ is the size of a smallest
negation normal form circuit representing $f$.
The Boolean function $f_C:\{0,1\}^X\rightarrow \{0,1\}$ represented by $C$ is defined in the usual way.
For an \textup{NNF} $C$ and a gate $g$ in $C$ the subcircuit rooted at $g$ is denoted by $C_g$.
An \textup{NNF} is {\em decomposable}, or a \textup{DNNF} for short, 
iff the children of each $\wedge$-gate are reachable from disjoint sets of input variables.
A set of Boolean functions $\{f_1, \ldots, f_\ell\}$ on the same variable set is {\em disjoint}
if each pair of functions $f_i,f_j$, $i\not= j$, 
is not simultaneously satisfiable.
A \textup{DNNF} is {\em deterministic}, or a $d$-\textup{DNNF} for short, iff the functions 
computed at the children of each $\vee$-gate are disjoint.
\end{definition}

Our assumption that each $\wedge$-gate has only fan-in $2$ is justified because it
affects the NNF size only polynomially.

Sentential decision diagrams introduced by Darwiche \cite{Dar11} 
result from so-called structured decomposability and strong determinism.
They are restricted $d$-DNNFs and a generalization of OBDDs.

\begin{definition}
For a variable set $X$ let $\bot:\{0,1\}^X\rightarrow \{0,1\}$ and
$\top:\{0,1\}^X\rightarrow \{0,1\}$ denote the constant $0$ function
and constant $1$ function, respectively.
A set of Boolean functions $\{f_1, \ldots, f_\ell\}$ on the same variable set is called a {\em partition}
iff the functions $\{f_1, \ldots, f_\ell\}$ are disjoint, none of the functions is the constant $0$ function $\bot$,
and $\bigvee\limits_{i=1}^\ell f_i=\top$.
\end{definition}

\begin{definition}\label{def:sdds}
A  {\em vtree} for a variable set $X$ is a full, rooted binary tree 
whose leaves are in one-to-one correspondence with the variables in $X$.
A {\em sentential decision diagram} $C$, or \textup{SDD} for short, respecting a vtree $T$
on the variable set $X=\{x_1, \ldots, x_n\}$
is defined inductively in the follwing way:
\begin{itemize}
\item $C$ represents $\bot$ or $\top$ or $C$ represents a projective function $p(X)=x_i$ or $p(X)=\overline{x}_i$,
      $1\leq i \leq n$.
\item The output gate of $C$ is a disjunction whose inputs are wires from $\wedge$-gates $g_1, \ldots, g_\ell$,
      where each $g_i$ has wires from $p_i$ and $s_i$,
      $v$ is an internal node in $T$ with children $v_L$ and $v_R$,  
      $C_{p_1}, \ldots, C_{p_\ell}$
       are \textup{SDDs} that respect the subtree of $T$ rooted at $v_L$,
      $C_{s_1}, \ldots, C_{s_\ell}$
      are \textup{SDDs} that respect the subtree of $T$ rooted at $v_R$,
      and the functions represented by $C_{p_1}, \ldots, C_{p_\ell}$ are a partition.
\end{itemize}
\end{definition}

Vtrees were introduced by Pipatsrisawat and Darwiche \cite{PD08}.
The ordering w.r.t.\ a vtree and the so-called \emph{partition property} 
ensure that SDDs are decomposable and deterministic and therefore, restricted $d$-DNNFs. 
The partition property is also called \emph{strong determinism}.
It ensures that $\mathcal{P}($SDD) is closed under negation which means
that for each function $f$ representable by polynomial-size SDDs also the negated function
$\overline{f}$ is in $\mathcal{P}($SDD).
To the best of our knowledge it is open whether SDDs are even more restricted 
in the sense of polynomial-size representations
than structured $d$-DNNFs which are $d$-DNNFs respecting a vtree.

\begin{definition}\label{definition:dnnf_respecting_vtree}
For a node $u$ let $vars(u)$ denote the set of variables that appear in a subgraph rooted at $u$.
Let $T$ be a vtree for the set of variables $X$ and $\mathcal{D}$ be a $\textnormal{DNNF}$. 
$\mathcal{D}$ \emph{respects} the vtree $T$, if for every $\wedge$-node $u$ of $\mathcal{D}$ 
with children $u_l, u_r$, there is a node $v$ of $T$ with children $v_l, v_r$ such that 
$\vars{u_l} \subseteq \vars{v_l}$ and $\vars{u_r} \subseteq \vars{v_r}$.

A (deterministic) $\textnormal{DNNF}$ that respects a given vtree $T$ is called a (deterministic) 
$\textnormal{DNNF}_T$. 
Moreover, a \emph{structured} (deterministic) $\textnormal{DNNF}$, or (deterministic) SDNNF for short,
is a (deterministic) $\textnormal{DNNF}_T$ for an arbitrary vtree $T$.
\end{definition}

Note that for each $\wedge$-node $u$ in Definition \ref{definition:dnnf_respecting_vtree} 
there is only one node $v$ of $T$ fulfilling the requirement mentioned above. 
We call $v$ the \emph{decomposition node} of $u$ and $\dnode{u} = v$. 

In the rest of the paper, we look at (restricted) NNFs as classes of Boolean circuits.

\subsection{\bf On the relation between OBDDs and SDDs}

A vtree is linear if for every internal node one child is a leaf.
It is right-linear if for every internal node the left child is a leaf.
In the following let $T_\pi$ be a vtree whose 
left-right traversal of the leaves in $T$
corresponds to
the variable ordering $\pi$.
OBDDs are based on the Shannon decomposition 
$$f=\overline{x}_if_{|x_i=0} \vee x_i f_{|x_i=1},$$
where $f_{|x_i=c}$ denotes the subfunction of $f$ obtained by replacing the Boolean variable $x_i$ 
by the Boolean constant $c$. Since the subfunctions $f_{|x_i=0}$ and $f_{|x_i=1}$ do not essentially depend
on the variable $x_i$, i.e., there is no assignment to the remaining variables such that 
the function values for $x_i=0$ and $x_i=1$ differ, and
the disjunction of the projective functions $p_0=\overline{x}_i$ and $p_1=x_i$ 
is the constant function $\top$ but their conjunction is the function $\bot$,
OBDDs respecting the variable ordering $\pi$ can be seen as restricted SDDs w.r.t.\ the right-linear
vtree $T_\pi$ and vice versa (see also \cite{Dar11}).
Figure \ref{dsa} shows an OBDD for a Boolean function 
w.r.t.\ the variable ordering $\pi= a_1,a_0, x_0, x_1, x_2, x_3$, 
Figure \ref{vtree} illustrates the corresponding right-linear vtree $T_\pi$ and an SDD respecting $T_\pi$
for the same Boolean function.

Structured decomposability on the notion of vtrees was originally introduced by 
Pipatsrisawat and Darwiche \cite{PD08}
but without distinction between the left and right child of a node.
Xue, Choi, and Darwiche showed that switching the left and right child of a vtree node
may lead to an exponential change in the size of the corresponding SDDs \cite{XCD12}.
An SDD w.r.t.\ a 
linear vtree $T_{\pi}$ can be seen as an unambiguous nondeterministic OBDD repecting
$\pi$. 
Since it is well-known that
$\mathcal{P}(\textup{OBDD})\subsetneq \mathcal{P}(\vee_1$-\textup{OBDD}), 
it is not astonishing that swapping the children of nodes in a vtree may lead to an exponential blow-up
in the representation size. We will see in Section \ref{sec:comparison} that SDDs respecting 
linear vtrees can represent all Boolean functions in $\mathcal{P}(k$-\textup{OBDD}) in polynomial size.

\subsection{\bf Storage access functions}

In the BDD literature Boolean functions modeling different aspects of storage access are well investigated.
A storage access sometimes also called pointer function outputs a single bit of the input
for which 
the
address or index 
is also computed from the input.
A very simple one is 
the {\it multiplexer} function MUX$_n$ (alternative names are  {\it direct storage access function}
or {\it index function}) 
that is defined on $n+k$ variables
 $a_{k-1},\dots ,a_0,x_0,\dots ,x_{n-1}$, where $n=2^k$.
The function is given as $\text{MUX}_n(a,x)=x_{|a|_2}$, 
where $|a|_2$ is the number in $\mathbb{N}$ whose binary
 representation equals $(a_{k-1},\dots ,a_0)$. (See Figure \ref{dsa} for an example of an OBDD representing
MUX$_4$.)

\begin{figure}[!ht]
\centering
\resizebox{0.3\textwidth}{!}{
\begin{tikzpicture}

\node[draw, circle] at (0,0) (a) {$a_1$};
\node[draw, circle] at (-1.5,-2) (b1) {$a_0$};
\node[draw, circle] at (1.5,-2) (b2) {$a_0$};

\draw[-stealth, dashed] (a) -- (b1);
\draw[-stealth] (a) -- (b2);

\node[draw, circle] at (-2.25, -4) (c1) {$x_0$};
\node[draw, circle] at (-0.75, -5) (c2) {$x_1$};
\node[draw, circle] at (0.75, -6) (c3) {$x_2$};
\node[draw, circle] at (2.25, -7) (c4) {$x_3$};

\draw[-stealth, dashed] (b1) -- (c1);
\draw[-stealth] (b1) -- (c2);
\draw[-stealth, dashed] (b2) -- (c3);
\draw[-stealth] (b2) -- (c4);

\node[draw, rectangle, minimum size=0.62cm] at (-1.5, -9) (d0) {$0$};
\node[draw, rectangle, minimum size=0.62cm] at (1.5, -9) (d1) {$1$};

\draw[-stealth, dashed] (c1) -- (d0.125);
\draw[-stealth, dashed] (c2) -- (d0.100);
\draw[-stealth, dashed] (c3) -- (d0.75);
\draw[-stealth, dashed] (c4) -- (d0.50);
\draw[-stealth] (c1) -- (d1.125);
\draw[-stealth] (c2) -- (d1.100);
\draw[-stealth] (c3) -- (d1.75);
\draw[-stealth] (c4) -- (d1.50);

\end{tikzpicture}
}
\caption{An OBDD for the Boolean function MUX$_4$ w.r.t.\ the 
variable ordering $a_1,a_0, x_0, x_1, x_2, x_3$.
Dashed lines represent edges with label 0 and solid ones represent edges with label 1.
}\label{dsa}
\end{figure}
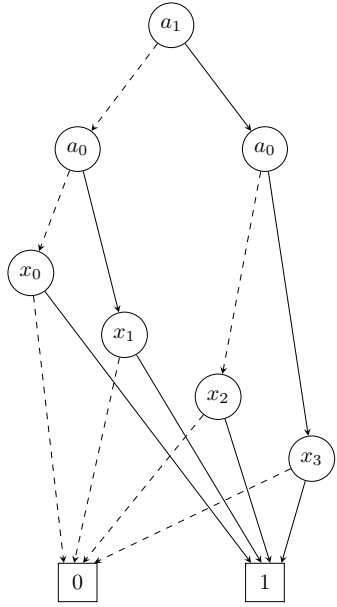

\begin{figure}[!ht]
\resizebox{0.85\textwidth}{!}{
\begin{minipage}[b]{0.3\linewidth}
\centering
\begin{tikzpicture}[minimum size = 0.75cm, level distance=1.5cm, sibling distance=1cm, every node/.style={thin},
    strong/.style={edge from parent/.style={thick,draw}}]
\node[draw,circle] (wurzel) {}
child[strong] {
	node[draw,circle]{$a_1$}
}
child[strong] {
        node[draw,circle] {}
        child {
                node[draw,circle] {$a_0$}
        }
        child {
                node[draw,circle] {}
                 child {
					node[draw,circle] {$x_0$}
				}
				child {
					node[draw,circle] {}
					 child {
						node[draw,circle] {$x_1$}
					}
					 child {
						node[draw,circle] {}
						 child {
						 	node[draw,circle] {$x_2$}
						 }
						 child {
						 	node[draw,circle] {$x_3$}
						 }					}
				}
        }
};
\end{tikzpicture}
\end{minipage}
\hphantom{XXXX}
\hfill
\begin{minipage}[b]{0.69\linewidth}
\centering
\begin{tikzpicture}[minimum size = 0.75cm]
\tikzset{level 1/.style={sibling distance=4.5cm},
	level 2/.style={sibling distance=2cm},
	level 3/.style={sibling distance=2cm},
	level 4/.style={sibling distance=1cm},
	every node/.style={thin},
    strong/.style={edge from parent/.style={thick,draw}}};
	
\node[draw,circle] (wurzel) {$\lor$}
child[strong] {
	node[draw,circle]{$\land$}
	 child {
	 	node[draw,circle] {$\lor$}
	 	 child {
	 	 	node[draw,circle] {$\land$}
	 	 	 child {
	 	 	 	node[draw,circle] {$x_0$}
	 	 	 }
	 	 	 child {
	 	 	 	node[draw,circle] {$\overline{a}_0$}
	 	 	 }
	 	 }
	 	 child {
	 	 	node[draw,circle] {$\land$}
	 	 	 child {
	 	 	 	node[draw,circle] {$x_1$}
	 	 	 }
	 	 	 child {
	 	 	 	node[draw,circle] {$a_0$}
	 	 	 }
	 	 }
	 }
	 child {
	 	node[draw,circle] {$\overline{a}_1$}
	 }
}
child[strong] {
        node[draw,circle] {$\land$}
        child {
                node[draw,circle] {$\lor$}
                 child {
                 	node[draw,circle] {$\land$}
                 	 child {
                 	 	node[draw,circle] {$x_2$}
                 	 }
                 	 child {
                 	 	node[draw,circle] {$\overline{a}_0$}
                 	 }
                 }
                 child {
                 	node[draw,circle] {$\land$}
                 	 child {
                 	 	node[draw,circle] {$x_3$}
                 	 }
                 	 child {
                 	 	node[draw,circle] {$a_0$}
                 	 }
                 }
        }
        child {
                node[draw,circle] {$a_1$}
        }
};
\end{tikzpicture}
\end{minipage}
}
\caption{A right-linear vtree 
whose 
left-right traversal of the leaves corresponds to the variable ordering $a_1,a_0,x_0,x_1,x_2,x_3$
and an \textup{SDD} for the Boolean function \textup{MUX}$_4$ w.r.t.\ this vtree.
}\label{vtree}
\end{figure}
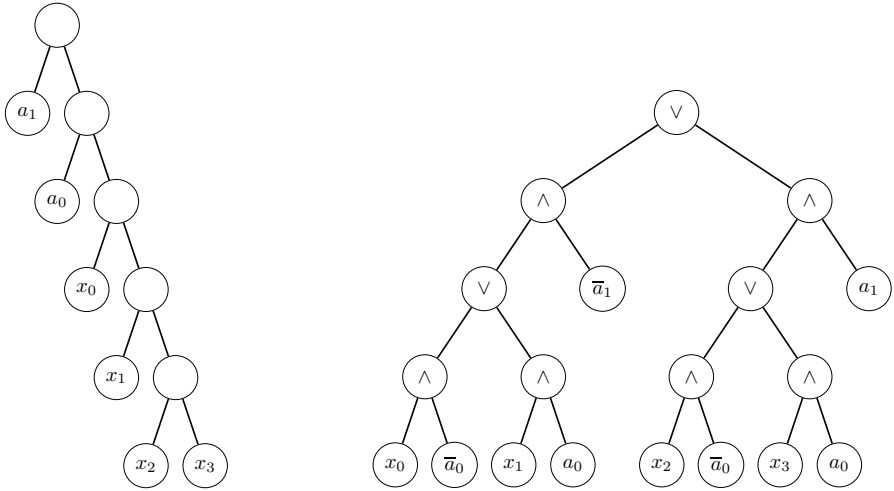

The following three Boolean functions 
are generalized storage access functions, 
where variables may serve as address as well as data variables.
The {\it hidden weigthed bit function} HWB$_n$ is defined by
$$
\textup{HWB}_n(x_1, \ldots, x_n)= x_{\|x\|},
$$
where $\|x\|= x_1 + \cdots + x_n$ is the number of variables set to $1$ in the input $x$ and $x_0:=0$
which means that the output is $0$ if $x_1 + \cdots + x_n=0$.
HWB$_n$ is an example of a function with a clear and simple structure,
nevertheless the OBDD size is exponential \cite{Bry91}. 
(See Figure \ref{bdds} for restricted BDDs representing the function HWB.)

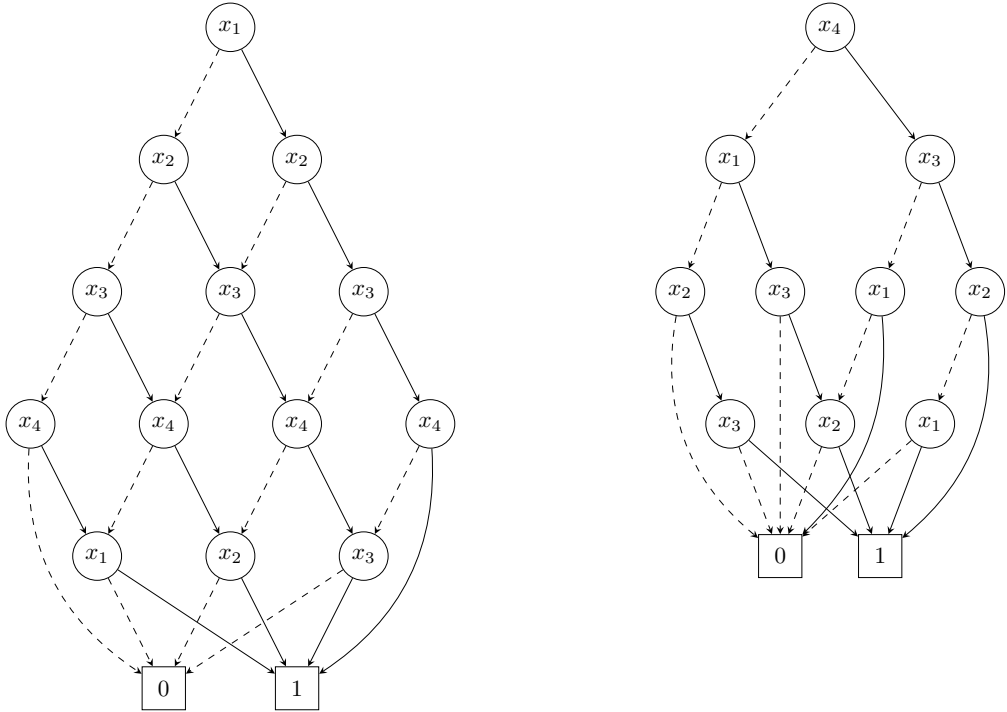
\begin{figure}[!ht]
\centering
\resizebox{0.89\textwidth}{!}{
\begin{tikzpicture}

\node[draw, circle] at (0,0) (a) {$x_1$};

\node[draw, circle] at (-1,-2) (b) {$x_2$};
\node[draw, circle] at (1,-2) (c) {$x_2$};

\node[draw, circle] at (-2,-4) (d) {$x_3$};
\node[draw, circle] at (0,-4) (e) {$x_3$};
\node[draw, circle] at (2,-4) (f) {$x_3$};

\node[draw, circle] at (-3,-6) (g) {$x_4$};
\node[draw, circle] at (-1,-6) (h) {$x_4$};
\node[draw, circle] at (1,-6) (i) {$x_4$};
\node[draw, circle] at (3,-6) (j) {$x_4$};

\node[draw, circle] at (-2,-8) (k) {$x_1$};
\node[draw, circle] at (0,-8) (l) {$x_2$};
\node[draw, circle] at (2,-8) (m) {$x_3$};

\node[draw, rectangle, minimum size=0.65cm] at (-1,-10) (null) {$0$};
\node[draw, rectangle, minimum size=0.65cm] at (1,-10) (eins) {$1$};

\draw[-stealth, dashed] (a) -- (b);
\draw[-stealth] (a) -- (c);

\draw[-stealth, dashed] (b) -- (d);
\draw[-stealth] (b) -- (e);
\draw[-stealth, dashed] (c) -- (e);
\draw[-stealth] (c) -- (f);

\draw[-stealth, dashed] (d) -- (g);
\draw[-stealth] (d) -- (h);
\draw[-stealth, dashed] (e) -- (h);
\draw[-stealth] (e) -- (i);
\draw[-stealth, dashed] (f) -- (i);
\draw[-stealth] (f) -- (j);
\draw[-stealth] (g) -- (k);
\draw[-stealth, dashed] (h) -- (k);
\draw[-stealth] (h) -- (l);
\draw[-stealth, dashed] (i) -- (l);
\draw[-stealth] (i) -- (m);
\draw[-stealth, dashed] (j) -- (m);

\draw[-stealth, dashed] (g) to[bend right] (null);
\draw[-stealth, dashed] (k) -- (null);
\draw[-stealth] (k) -- (eins);
\draw[-stealth, dashed] (l) -- (null);
\draw[-stealth] (l) -- (eins);
\draw[-stealth, dashed] (m) -- (null);
\draw[-stealth] (m) -- (eins);
\draw[-stealth] (j) to[bend left] (eins);

\begin{scope}[xshift=9cm,yshift=0cm]
\node[draw, circle] at (0,0) (a) {$x_4$};
\node[draw, circle] at (-1.5,-2) (b1) {$x_1$};
\node[draw, circle] at (1.5,-2) (b2) {$x_3$};

\draw[-stealth, dashed] (a) -- (b1);
\draw[-stealth] (a) -- (b2);

\node[draw, circle] (c1) at (-2.25,-4) {$x_2$};
\node[draw, circle] (c2) at (-0.75,-4) {$x_3$};
\node[draw, circle] (c3) at (0.75,-4) {$x_1$};
\node[draw, circle] (c4) at (2.25,-4) {$x_2$};

\draw[-stealth, dashed] (b1) -- (c1);
\draw[-stealth] (b1) -- (c2);
\draw[-stealth, dashed] (b2) -- (c3);
\draw[-stealth] (b2) -- (c4);

\node[draw, circle] (d1) at (-1.5, -6) {$x_3$};
\node[draw, circle] (d2) at (0,-6) {$x_2$};
\node[draw, circle] (d3) at (1.5, -6) {$x_1$};
\node[draw, rectangle, minimum size=0.65cm] at (-0.75, -8) (e0) {$0$};
\node[draw, rectangle, minimum size=0.65cm] at (0.75, -8) (e1) {$1$};

\draw[-stealth, dashed] (c1) to[bend right] (e0);
\draw[-stealth] (c1) -- (d1);
\draw[-stealth, dashed] (c2) -- (e0);
\draw[-stealth] (c2) -- (d2);
\draw[-stealth, dashed] (c3) -- (d2);
\draw[-stealth] (c3) to[bend left=25] (e0);
\draw[-stealth, dashed] (c4) -- (d3);
\draw[-stealth] (c4) to[bend left] (e1);

\draw[-stealth, dashed] (d1) -- (e0);
\draw[-stealth] (d1) -- (e1);
\draw[-stealth, dashed] (d2) -- (e0);
\draw[-stealth] (d2) -- (e1);
\draw[-stealth, dashed] (d3) -- (e0);
\draw[-stealth] (d3) -- (e1);

\end{scope}

\end{tikzpicture}
}
\caption{A $2$-OBDD w.r.t.\ the variable ordering 
$x_1,x_2,x_3,x_4$ and an FBDD for the function HWB$_4$.
Dashed lines
represent edges with label 0 and solid ones edges with label 1. (See also \cite{BSSW10}.)}\label{bdds}
\end{figure}

The {\it indirect storage access function} ISA$_n$ can be described in the following way.
Let $n=2^k$, $k=2^\ell$, and $m=n/k=2^{k-\ell}$. ISA$_n$ is defined on $n+k-\ell$ Boolean variables,
an address vector $a=(a_{k-\ell -1}, \ldots, a_0)$ and a vector
$x=(x_0, \ldots, x_{n-1})$. 
The address vector is interpreted
as the binary number with value $|a|_2$ pointing to a block
$x(a)=(x_{|a|_2k}, \ldots, x_{(|a|_2 +1)k-1})$.
Then
$$
\textup{ISA}_n(a,x)=x_{|x(a)|_2}.
$$
The function ISA$_n$ has small size representation for BDD models like FBDDs and $2$-OBDDs 
but its OBDD size is exponential \cite{BHR95}.
To be more precisely its FBDD and $2$-OBDD size is $\mathcal{O}(n^2)$ but its OBDD size is 
$\Omega(2^{\lfloor n/\log n \rfloor})$.

Another kind of storage access or pointer function is the following one.
Let $p$ be the smallest prime larger than $n$.
The function {\it weighted sum} WS$_n$ is defined  by
$$
\textup{WS}_n(x_1, \ldots, x_n)= x_s,
$$
where $s$ is the sum of all $ix_i$ in the field $\mathbb{Z}_p$, $1\leq i \leq n$, if this sum is between $1$ and $n$ 
and $1$ otherwise. 
The weighted sum function was introduced and analyzed by 
Savick\'y and \u{Z}\'ak \cite{SZ00} in order to prove a lower bound of order $2^{n-o(1)}$ on the 
FBDD size of a Boolean function. It is not difficult to see that the $2$-OBDD size of WS$_n$ is $\mathcal{O}(n^2)$.
\section{Simulating Unambiguous Nondeterministic OBDDs by SDDs}\label{section:simulation_vee_one_obdds_by_sdds}
In this section, we will examine the relationship between unambiguous nondeterministic OBDDs and SDDs. More precisely, we will derive a way of representing a Boolean function $f$ as an SDD provided that $f$ and $\overline{f}$ can both be represented by unambiguous nondeterministic OBDDs which respect a common variable ordering.

\subsection{\bf Main ideas and simulation}

Let $\mathcal{F}_u$ denote the subgraph of a given BDD $\mathcal{F}$ rooted at node $u$ and let $f_u$ be the Boolean function which is represented by $\mathcal{F}_u$.
In order to avoid corner cases, we will assume that the given unambiguous nondeterministic OBDDs are of the following form.

\begin{definition}
	Let $\mathcal{F}$ be an unambiguous nondeterministic OBDD. We call $\mathcal{F}$ \emph{simple}, if
	\begin{itemize}
		\item there exist no edges between $\vee$-nodes,
		\item all $\vee$-nodes have at least two children,
		\item no $\vee$-node is connected to a sink, and 
		\item for each inner node $u$ of $\mathcal{F}$ holds that $\mathcal{F}_u$ does not represent the constant function $\top$ or $\bot$.
	\end{itemize}
\end{definition}

Observe that for each unambiguous nondeterministic OBDD that has polynomial size there exists a simple one of polynomial size representing the same function. Furthermore, we will assume w.l.o.g. that the variable ordering is given by the list of variables $x_1, \dots, x_n$ in the rest of this section. Next, we will present the main ideas of the simulation.\\

Let $f$ and $\overline{f}$ be Boolean functions that can be represented by unambiguous nondeterministic OBDDs $\mathcal{F}$ and $\overline{\mathcal{F}}$, respectively. Moreover, assume $\mathcal{F}$ and $\overline{\mathcal{F}}$ respect a common variable ordering. Darwiche already mentioned how a (deterministic) OBDD can be converted to an equivalent SDD respecting a right-linear vtree \cite{Dar11}. Therefore, the main question is how to deal with the occurrence of $\vee$-nodes in $\mathcal{F}$. Let $f_u$ be the Boolean function that is computed at an $\vee$-node $u$ of $\mathcal{F}$. Since $u$ can occur at any position in the given unambiguous nondeterministic OBDD $\mathcal{F}$, we would like to derive a way of representing $f_u$ by an SDD. Let $f_{u_1}, \dots, f_{u_k}$ be the functions that are represented at the child nodes of $u$. Due to the assumed variable ordering, we know that the functions $f_{u_1}, \dots, f_{u_k}$ essentially depend on a subset of variables $Y = \{x_i, \dots, x_n\} \subseteq X$ for $i \geq 1$. The function $f_u$ can be represented by $f_u = (f_{u_1} \wedge \top) \vee (f_{u_2} \wedge \top) \vee \dots \vee (f_{u_k} \wedge \top)$. However, for an SDD representing $f_u$ in such a way it would not be guaranteed that $f_{u_1}, \dots, f_{u_k}$ form a partition. Hence, the main idea is to find further functions represented at inner nodes of $\mathcal{F}$ and $\overline{\mathcal{F}}$ which essentially depend on $Y$ and together with $f_{u_1}, \dots, f_{u_k}$ yield a partition.\\

We use the notation $f_{|x_1 = c_1, \dots, x_{i-1} = c_{i-1}}$ for the subfunction that emerges of $f$ by replacing all occurrences of $x_1, \dots, x_{i-1}$ by constants $c_1, \dots, c_{i-1} \in \{0,1\}$. Now, observe that the subfunctions $f_{|x_1 = c_1, \dots, x_{i-1} = c_{i-1}}$ and $\overline{f}_{|x_1 = c_1, \dots, x_{i-1} = c_{i-1}}$ yield a partition for arbitrary assignments of the variables $x_1, \dots, x_{i-1}$. Fix an $\vee$-node $u$ of $\mathcal{F}$. Define $\beta(u)$ to be the set of variable assignments over $X \backslash Y = \{x_1, \dots, x_{i-1}\}$ which can be extended by an assignment of the variables of $Y = \{x_i, \dots, x_n\}$ such that there exists an accepting path containing $u$ for the resulting assignment in $\mathcal{F}$. For an arbitrary assignment $\beta \in \beta(u)$ with $\beta = (\beta_1, \dots, \beta_{i-1}) \in \{0,1\}^{i-1}$ we get the relation $f_u \leq f_{|x_1 = \beta_1, \dots, x_{i-1} = \beta_{i-1}}$ which means that the satisfying assignments of $f_u$ are a subset of the satisfying assignments of $f_{|x_1 = \beta_1, \dots, x_{i-1} = \beta_{i-1}}$ and $f_u \neq \bot$.\\

Next, we want to identify all nodes $u_1', \dots, u_l'$ in $\mathcal{F}$ for a fixed $\beta \in \beta(u)$ such that $f_{u_j'} \leq f_{|x_1 = \beta_1, \dots, x_{i-1} = \beta_{i-1}}$ and $f_{u_j'} \neq \bot$ hold. In order to get these nodes, we consider each node $u'$ in $\mathcal{F}$ with $\vars{u'} \subseteq Y$ such that there is no other node $u''$ fulfilling $\vars{u'} \subset \vars{u''} \subseteq Y$ and $\mathcal{F}_{u'}$ is a subgraph of $\mathcal{F}_{u''}$. Afterwards, for each resulting candidate $u'$ we check whether $\beta$ can be extended by an assignment of the variables of $Y$ such that there is an accepting path in $\mathcal{F}$ containing $u'$. If $u'$ is an $\vee$-node, we add the children of $u'$ instead to our set of nodes since we want to resolve $\vee$-nodes of $\mathcal{F}$.\\

Let $f_{u_1'}, \dots, f_{u_l'}$ be the Boolean functions that are represented at the nodes $u_1', \dots, u_l'$ in $\mathcal{F}$. Then, $f_{|x_1 = \beta_1, \dots, x_{i-1} = \beta_{i-1}} = f_{u_1} \vee \dots \vee f_{u_k} \vee f_{u_1'} \vee \dots \vee f_{u_l'}$. Analogously, we identify nodes $v_1, \dots, v_m$ of $\overline{\mathcal{F}}$ such that $\overline{f}_{v_j} \leq \overline{f}_{|x_1 = \beta_1, \dots, x_{i-1} = \beta_{i-1}}$ and $\overline{f}_{v_j} \neq \bot$ where the functions represented at the nodes $v_1, \dots, v_m$ of $\overline{\mathcal{F}}$ are denoted by $\overline{f}_{v_1}, \dots, \overline{f}_{v_m}$. Hence, we get $\overline{f}_{|x_1 = \beta_1, \dots, x_{i-1} = \beta_{i-1}} = \overline{f}_{v_1} \vee \dots \vee \overline{f}_{v_m}$. Now, we are able to represent the function calculated at the $\vee$-node $u$ as 
\vspace{-0.18cm}
\begin{eqnarray}
	\label{equation:vee_one_obdd_to_sdd}
	f_u &=& (f_{u_1} \wedge \top) \vee \dots \vee (f_{u_k} \wedge \top) \vee (f_{u_1'} \wedge \bot) \vee \dots \vee (f_{u_l'} \wedge \bot) \vee\\ 
	&&(\overline{f}_{v_1} \wedge \bot) \vee \dots \vee (\overline{f}_{v_m} \wedge \bot) \,.\nonumber
\end{eqnarray}

\vspace{-0.2cm}
We know that the functions $f_{u_1}, \dots, f_{u_k}, f_{u_1'}, \dots, f_{u_l'}, \overline{f}_{v_1}, \dots, \overline{f}_{v_m}$ yield a partition because $f_{|x_1 = \beta_1, \dots, x_{i-1} = \beta_{i-1}}$ and $\overline{f}_{|x_1 = \beta_1, \dots, x_{i-1} = \beta_{i-1}}$ are a partition and $\mathcal{F}$ and $\overline{\mathcal{F}}$ are unambiguous nondeterministic.\\

Finally, we have a look at how to construct an SDD representing $f_u$. Suppose there are already SDDs representing $f_{u_1}, \dots, f_{u_k}, f_{u_1'}, \dots, f_{u_l'}, \overline{f}_{v_1}, \dots, \overline{f}_{v_m}$ and respecting a vtree $T$. Now, we construct an SDD $C$ representing $f_u$ composed like in Equation \ref{equation:vee_one_obdd_to_sdd} from the given SDDs. $C$ respects a new vtree $T'$ which is structured in the following way. The left subtree of $T'$ is $T$. The right subtree of $T'$ is just a leaf labeled by a help variable $h_{x_i, \dots, x_n}$. We need this help variable since $f_{u_1}, \dots, f_{u_k}, f_{u_1'}, \dots, f_{u_l'}, \overline{f}_{v_1}, \dots, \overline{f}_{v_m}$ and $\bot, \top$ formally have to be defined on disjoint variable sets.\\

If the sub-OBDDs $\mathcal{F}_{u_1}, \dots, \mathcal{F}_{u_k}, \mathcal{F}_{u_1'}, \dots, \mathcal{F}_{u_l'}, \overline{\mathcal{F}}_{v_1}, \dots, \overline{\mathcal{F}}_{v_m}$ contain $\vee$-nodes as well, we apply the described idea recursively in order to get the needed SDDs. Observe that all functions that are represented at $\vee$-nodes of the mentioned sub-OBDDs essentially depend on a proper subset of variables $Y' = \{x_j, \dots, x_n\} \subset Y$ for $j > i$ since by assumption there are no edges between $\vee$-nodes. Hence, the termination of the recursion is guaranteed.\\

Next, we will define some notation in order to prove that the described selection of functions always yields a partition. Afterwards, we will give the formal definition of the simulation. We start with the set $\beta(u)$. 

\begin{definition}
	Let $\mathcal{F}$ be an unambiguous nondeterministic OBDD on the variable set $X = \{x_1, \dots, x_n\}$ respecting the variable ordering $\pi = \textnormal{id}$. Furthermore, let $u$ be a node of $\mathcal{F}$ and $Y = \{x_i, \dots, x_n\} \subseteq X$ is chosen with the maximum value of $i \in \{1, \dots, n\}$ fulfilling $\vars{u} \subseteq Y$. Then, $\beta(u)$ is defined as the set of variable assignments over $X \backslash Y$ which can be extended by an assignment of $Y$ such that there exists an accepting path in $\mathcal{F}$ containing $u$.
\end{definition}

The following definition helps us to identify all nodes $u'$ of $\mathcal{F}$ for a fixed $\beta \in \beta(u)$ at which parts of $f_{|x_1 = \beta_1, \dots, x_{i-1} = \beta_{i-1}}$ will be computed. 

\begin{definition}
		Let $\mathcal{F}$ be an unambiguous nondeterministic OBDD on the variable set $X = \{x_1, \dots, x_n\}$ respecting the variable ordering $\pi = \textnormal{id}$. In addition, let $Y = \{x_i, \dots, x_n\} \subseteq X$ and $\beta$ be a variable assignment over $X \backslash Y$. We call a node $u$ of $\mathcal{F}$ with $\vars{u} \subseteq Y$ \emph{maximal w.r.t.} $Y$, if there exists no other node $u'$ in $\mathcal{F}$ such that $\vars{u} \subset \vars{u'} \subseteq Y$ and $\mathcal{F}_u$ is a subgraph of $\mathcal{F}_{u'}$. Moreover, let $R(\mathcal{F}, \beta)$ be the set of all inner nodes $u$ of $\mathcal{F}$ such that $u$ is maximal w.r.t. $Y$ and $\beta$ can be extended by an assignment of $Y$ with the result that there is an accepting path for the extended assignment in $\mathcal{F}$ containing $u$.
\end{definition}

Since we want to resolve $\vee$-nodes of $\mathcal{F}$, we will replace $\vee$-nodes in the following way.

\begin{definition}
	Let $R^{+}(\mathcal{F}, \beta)$ be the set of nodes arising from $R(\mathcal{F}, \beta)$, if every $\vee$-node will be replaced by its children.
\end{definition}

The next lemma will be used in our simulation of unambiguous nondeterministic OBDDs by SDDs in order to get a partition for Boolean functions that are represented at $\vee$-nodes of $\mathcal{F}$.

\begin{lemma}
	\label{lemma:partitionlemma_vee_one_obdd_to_sdd}
	Let $\mathcal{F}$ and $\overline{\mathcal{F}}$ be unambiguous nondeterministic \textnormal{OBDDs}  respecting the variable ordering $\pi = \textnormal{id}$ and representing the Boolean functions $\Phi_{\mathcal{F}}$ and $\Phi_{\overline{\mathcal{F}}}$ such that $\Phi_{\mathcal{F}} = \overline{\Phi_{\overline{\mathcal{F}}}}$. Let $u$ be an $\vee$-node of $\mathcal{F}$ and $\beta \in \beta(u)$. Furthermore, the sets $R^{+}(\mathcal{F}, \beta) = \{u_1, \dots, u_k\}$ and $R^{+}(\overline{\mathcal{F}}, \beta) = \{v_1, \dots, v_l\}$ are given. Let $\Phi_{u_i}$ and $\Phi_{v_j}$ with $i \in [k]$ and $j \in [l]$ be the functions that are represented at the nodes $u_i$ of $\mathcal{F}$ and $v_j$ of $\overline{\mathcal{F}}$, respectively. Then, the set of functions $\Phi = \{\Phi_{u_1}, \dots, \Phi_{u_k}, \Phi_{v_1}, \dots, \Phi_{v_l}\}$ is a partition.
\end{lemma}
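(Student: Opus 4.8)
The plan is to verify the three defining properties of a partition — that no $\Phi_w$ equals $\bot$, that $\bigvee\Phi=\top$, and that the functions are pairwise disjoint — all of which I would reduce to one structural fact, into which almost all of the work goes. Write $Y=\{x_i,\dots,x_n\}$ for the block of $u$, so that $\beta$ fixes precisely $x_1,\dots,x_{i-1}$. The fact I would isolate and prove first is: \emph{every accepting computation of $\mathcal{F}$ (resp.\ $\overline{\mathcal{F}}$) for an input extending $\beta$ passes through exactly one node of $R^{+}(\mathcal{F},\beta)$ (resp.\ $R^{+}(\overline{\mathcal{F}},\beta)$), and each $\Phi_w$ with $w\in R^{+}(\mathcal{F},\beta)$ satisfies $\Phi_w\le\Phi_{\mathcal{F}}|_{\beta}$, and symmetrically for $\overline{\mathcal{F}}$.}

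To prove the fact I would argue in three steps. First, since $\mathcal{F}$ respects the ordering $x_1,\dots,x_n$, once a computation reaches an inner node $z$ with $\vars{z}\subseteq Y$ every later inner node has its variable set inside $Y$ as well, and the predecessor of the first such $z$ on the path must test some $x_m$ with $m<i$; hence the prefix up to $z$ tests only variables of $X\setminus Y$, so its decision choices are determined by $\beta$ alone. In particular any $w\in R^{+}(\mathcal{F},\beta)$ — which by the definition of $R$ lies on an accepting computation for some extension of $\beta$ and is maximal w.r.t.\ $Y$, or is a child of such a maximal $\vee$-node — is reachable from the source by a $\beta$-consistent prefix, and appending any accepting computation inside $\mathcal{F}_w$ gives $\Phi_w\le\Phi_{\mathcal{F}}|_{\beta}$. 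Second, I would use $\beta\in\beta(u)$ and unambiguity to rule out ``bypassing'' computations: as $u$ is a $\vee$-node with $\vars{u}\subseteq Y$ lying on an accepting computation for some extension of $\beta$, any accepting computation for another extension $(\beta,y)$ that tests only $X\setminus Y$-variables would, being constant in the $Y$-variables, also accept that first extension, producing two accepting computations for one input; so every accepting computation for an input extending $\beta$ genuinely enters the $Y$-region.

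Third — and this is where I expect the main obstacle — I would show that the first inner node $z$ with $\vars{z}\subseteq Y$ on such a computation is, after being replaced by the traversed child when $z$ is a $\vee$-node, a node of $R^{+}(\mathcal{F},\beta)$. The content is that $z$, or an appropriate maximal-w.r.t.-$Y$ node containing it, really lies in $R(\mathcal{F},\beta)$, i.e.\ is both maximal w.r.t.\ $Y$ and on an accepting computation for some extension of $\beta$; this requires feeding the facts from the first two steps — that $z$ is reachable $\beta$-consistently and sits on an accepting computation — carefully into the subgraph-containment condition in the definition of ``maximal'', and it is where the interplay of the OBDD ordering, that definition, and unambiguity is most delicate. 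Uniqueness (``exactly one'') is comparatively easy: distinct maximal-w.r.t.-$Y$ nodes are pairwise incomparable in the DAG, and the ``no edges between $\vee$-nodes'' clause of simplicity makes the $\vee$-replacement in $R^{+}$ behave cleanly, so a single computation crosses the frontier only once.

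Granting the fact, I would finish quickly. For $\bigvee\Phi=\top$: given $y\in\{0,1\}^{Y}$, the input $(\beta,y)$ is accepted by $\mathcal{F}$ or by $\overline{\mathcal{F}}$ since these compute complementary functions, so its accepting computation meets some $w\in R^{+}(\mathcal{F},\beta)\cup R^{+}(\overline{\mathcal{F}},\beta)$, and the suffix from $w$ witnesses $\Phi_w(y)=1$. That no $\Phi_w$ is $\bot$: a decision node of $R(\mathcal{F},\beta)$ lies on an accepting computation by definition, and a child of a $\vee$-node of $R(\mathcal{F},\beta)$ is an inner decision node by the ``no $\vee$-sink'' and ``no $\vee$-$\vee$'' clauses of simplicity, hence $\mathcal{F}_w$ is non-constant by the last clause of simplicity; likewise for $\overline{\mathcal{F}}$. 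For disjointness: if $\Phi_{u_a}$ and $\Phi_{v_b}$ come from opposite sides and share a satisfying $y$, then by the prepending construction $(\beta,y)$ would be accepted by both $\mathcal{F}$ and $\overline{\mathcal{F}}$, which is impossible; and if $\Phi_{u_a},\Phi_{u_b}$ come from the same side with $a\ne b$ and share a satisfying $y$, prepending $\beta$-consistent prefixes yields two accepting computations of that diagram for $(\beta,y)$, distinct because each meets $R^{+}$ in exactly one node, contradicting unambiguity. This covers all pairs, so $\Phi$ is a partition.
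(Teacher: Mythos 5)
Your overall plan coincides with the paper's: the satisfiability argument (inner nodes of a simple $\vee_1$-OBDD never compute $\bot$), the two disjointness cases (unambiguity within one diagram, complementarity across the two), and the reduction of the cover property to the claim that every accepting computation for an input extending $\beta$ meets $R^{+}(\mathcal{F},\beta)\cup R^{+}(\overline{\mathcal{F}},\beta)$ in exactly one node are all exactly what the paper does. Your steps 1 and 2 (the prefix up to the first node $z$ with $\vars{z}\subseteq Y$ is decided by $\beta$ alone, and no accepting computation for an extension of $\beta$ can avoid the $Y$-region without violating unambiguity against the path through $u$) are sound, and the "at most one crossing" half of your fact follows cleanly from the ordering, as you say.

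The genuine gap is your step 3, which you flag as the main obstacle but do not close — and it is not merely delicate bookkeeping. You need the first node $z$ with $\vars{z}\subseteq Y$ on a $\beta$-consistent accepting path to be maximal w.r.t.\ $Y$ (or a child of a maximal $\vee$-node). But the definition of "maximal w.r.t.\ $Y$" quantifies over \emph{all} nodes $u''$ of $\mathcal{F}$ with $\vars{z}\subset\vars{u''}\subseteq Y$ and $\mathcal{F}_z$ a subgraph of $\mathcal{F}_{u''}$, including nodes $u''$ that are only reachable under assignments of $X\setminus Y$ that disagree with $\beta$. Such a $u''$ disqualifies $z$ from $R(\mathcal{F},\beta)$ even though it is irrelevant to the $\beta$-slice, and $z$ need not be the child of any $\vee$-node at all; one can build small simple $\vee_1$-OBDDs in which a $\beta$-consistent accepting path then meets $R^{+}$ zero times, so the union $\bigvee\Phi$ misses some satisfying $y$ and the cover property fails. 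Your first two steps give you information only about the predecessor of $z$ \emph{on the particular path}, which is not enough to rule out a dominating $u''$ elsewhere in the diagram; closing the gap requires restricting the maximality comparison to nodes reachable consistently with $\beta$ (or an equivalent repair). For what it is worth, the paper's own proof of the cover property asserts the same "exactly one node" fact without justification, so you have correctly isolated where the real content lies — but your proposal does not supply it.
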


\begin{proofidea}
	First, we have to show that the set of functions $\Phi$ contains at least two elements. Otherwise, $\Phi$ cannot yield a partition. For this purpose, it can be shown that the children of the $\vee$-node $u$ are elements of $R^{+}(\mathcal{F}, \beta)$. Next, we have to prove that $\Phi$ fulfills all partition properties. One can show that the violation of at least one property will lead to a contradiction. The entire proof can be found in Appendix A.
\end{proofidea}

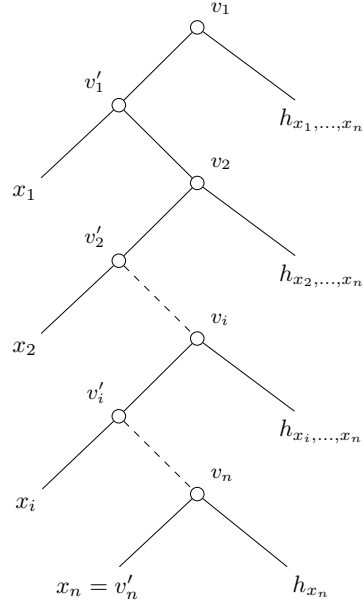
\begin{figure*}[!t]
	\centering{
		\resizebox{!}{0.35\textheight}{
			\begin{tikzpicture}[>= stealth']

\tikzstyle{vertex} = [draw, circle, inner sep = 2pt]

\node[vertex] (v1) at (0,0) [label={[black, font = \small] above right:{$v_1$}}] {};

\node[vertex] (v1s) [below left = of v1, label={[black, font = \small] above left:{$v_1'$}}] {};
\node (h1n) [below right = of v1] {$h_{x_{1}, \dots, x_{n}}$};

\node (x1) [below left = of v1s] {$x_1$};
\node[vertex] (v2) [below right = of v1s, label={[black, font = \small] above right:{$v_2$}}] {};

\node[vertex] (v2s) [below left = of v2, label={[black, font = \small] above left:{$v_2'$}}] {};
\node (h2n) [below right = of v2] {$h_{x_{2}, \dots, x_{n}}$};

\node (x2) [below left = of v2s] {$x_2$};
\node[vertex] (vi) [below right = of v2s, label={[black, font = \small] above right:{$v_i$}}] {};

\node[vertex] (vis) [below left = of vi, label={[black, font = \small] above left:{$v_i'$}}] {};
\node (hin) [below right = of vi] {$h_{x_{i}, \dots, x_{n}}$};

\node (xi) [below left = of vis] {$x_i$};
\node[vertex] (vn1) [below right = of vis, label={[black, font = \small] above right:{$v_{n}$}}] {};

\node (vn1s) [below left = 1 and 0.65 of vn1] {$x_n = v_n'$};
\node (hn1n) [below right = 1 and 1.2 of vn1] {$h_{x_{n}}$};

\draw (v1) to (v1s);
\draw (v1) to (h1n);

\draw (v1s) to (x1);
\draw (v1s) to (v2);

\draw (v2) to (v2s);
\draw (v2) to (h2n);

\draw (v2s) to (x2);
\draw[dashed] (v2s) to (vi);

\draw (vi) to (vis);
\draw (vi) to (hin);

\draw (vis) to (xi);
\draw[dashed] (vis) to (vn1);

\draw (vn1) to (vn1s);
\draw (vn1) to (hn1n);

\end{tikzpicture}
		}
	}
	\caption{The vtree $T$ for the set of variables $X \cup H$.}
	\label{figure:vtree_vee_one_obdds_to_sdds}
\end{figure*}

Now, we give the formal definition of the simulation.

\begin{simulation}
	\label{simulation:vtree_vee_one_obdds_to_sdds}
	Let $f \in B_n$ be a Boolean function such that $f$ and $\overline{f}$ can be represented by unambiguous nondeterministic \textnormal{OBDDs} respecting the variable ordering $\pi = \textnormal{id}$. Let $\mathcal{F}$ and $\overline{\mathcal{F}}$ be those $\vee_1$-\textnormal{OBDDs}. We construct an \textnormal{SDD} $C$ representing $f$ from the $\vee_1$-\textnormal{OBDDs} $\mathcal{F}$ and $\overline{\mathcal{F}}$ in the following.
	
	First, in order to define the vtree $T$ that will be respected by $C$ we augment $X = \{x_1, \dots, x_n\}$ by help variables $H = \{h_{x_1, \dots, x_n}, h_{x_2, \dots, x_n}, \dots, h_{x_n}\}$. We define the vtree $T$ for the set of variables $X \cup H$ as depicted in Figure \ref{figure:vtree_vee_one_obdds_to_sdds}:
	
	\begin{itemize}
		\item $T$ consists of the inner nodes $v_1, \dots, v_{n}, v_1', \dots, v_{n}'$ and leaves for the variables of $X \cup H$.
		\item The node $v_1$ is the root of $T$.
		\item For all $i \in \{1, \dots, n\}:$ $(v_i, v_i')$ and $(v_i, h_{x_{i}, \dots, x_{n}})$ are edges of $T$.
		\item For all $i \in \{1, \dots, n-1\}:$ $(v_i', x_i)$ and $(v_i', v_{i+1})$ are edges of $T$.
		\item The node $v_n'$ is equal to the leaf labeled by $x_n$.
	\end{itemize}
	
	Let $(V,E)$ and $(\overline{V},\overline{E})$ be the sets of nodes and edges of the $\vee_1$-\textnormal{OBDDs} $\mathcal{F}$ and $\overline{\mathcal{F}}$, respectively. Furthermore, let $X' \subseteq X$ be the set of variables for which there is decision node of $\mathcal{F}$ or $\overline{\mathcal{F}}$ labeled by a variable of $X'$. Moreover, we have $Y = V \cup \overline{V}$ and $Z = \{\wedge_0, \wedge_1, \emptyset\} \cup X' \cup Y$. The nodes of $C$ are tuples $(u,v) \in Y \times Z$. We construct $C$ respecting $T$ by mapping nodes and edges of $\mathcal{F}$ and $\overline{\mathcal{F}}$ to nodes and edges of $C$ according to the following cases:
	
	\vspace{-0.09cm}
	\begin{itemize}
		\item[(a)] For each decision node $u \in (V \cup \overline{V})$ for a variable $x_i \in X$ which is only connected to sinks, add a decision node $(u,\emptyset)$ to $C$ that is labeled by a literal $\overline{x_i}$ or $x_i$ according to the semantics of $u$.
		\item[(b)] For each decision node $u \in (V \cup \overline{V})$ for a variable $x_i \in X$ which is not only connected to sinks, add the $\vee$-node $(u, \emptyset)$, both $\wedge$-nodes $(u, \wedge_0)$, $(u, \wedge_1)$ and the decision nodes $(u,\overline{x_i})$, $(u,x_i)$ that are labeled by $\overline{x_i}$ and $x_i$, respectively. In addition, add the following edges to $C$:
		\begin{itemize}
			\item $((u, \emptyset), (u, \wedge_0))$ and $((u, \emptyset), (u, \wedge_1))$,
			\item $((u, \wedge_0), (u, \overline{x_i}))$ and $((u, \wedge_1), (u, x_i))$,
			\item the $0$-edge $(u, u_0) \in (E \cup \overline{E})$ is mapped to edge $((u, \wedge_0), (u_0, \emptyset))$,
			\item the $1$-edge $(u, u_1) \in (E \cup \overline{E})$ is mapped to edge $((u, \wedge_1), (u_1, \emptyset))$.
		\end{itemize}
		The case of $u \in V$ is depicted in Figure \ref{figure:vee_one_obdds_to_sdds_case_b}.
		
		\begin{figure}[!t]
			\centering
			\begin{subfigure}[b]{0.34\textwidth}
				\hspace*{1em}\resizebox{!}{0.2\textheight}{\begin{tikzpicture}[>= stealth', ]

\tikzstyle{vertex} = [draw, circle, font = \footnotesize]
\tikzstyle{sink} = [draw, rectangle]
\tikzstyle{empty} = [inner sep = 0pt]

\node (a) at (0,1) {};

\node[vertex] (x_i) at (0,0) [label={[font = \footnotesize, label distance = 0.01cm] above right:{$u$}}] {$x_i$};

\node[empty] (v2_top) [below left = 1.5 and 1 of x_i, label={[font = \footnotesize, label distance = 0.6cm] below:{$\mathcal{F}_{u_0}$}}] {};
\node[empty] (v2_left) [below left = 1.25 and 0.5 of v2_top] {};
\node[empty] (v2_right) [below right = 1.25 and 0.5 of v2_top] {};

\node[empty] (v3_top) [below right = 1.5 and 1 of x_i, label={[font = \footnotesize, label distance = 0.6cm] below:{$\mathcal{F}_{u_1}$}}] {};
\node[empty] (v3_left) [below left = 1.25 and 0.5 of v3_top] {};
\node[empty] (v3_right) [below right = 1.25 and 0.5 of v3_top] {};

\draw[dashed] (a) to (x_i);

\draw[->, dashed] (x_i) to (v2_top);
\draw[->] (x_i) to (v3_top);

\draw (v2_left) to (v2_right);
\draw (v2_top) to (v2_left);
\draw (v2_top) to (v2_right);

\draw (v3_left) to (v3_right);
\draw (v3_top) to (v3_left);
\draw (v3_top) to (v3_right);

\end{tikzpicture}}
				\caption{A segment of the $\vee_1$-OBDD $\mathcal{F}$, solid edges represent edges labeled by $1$, dashed ones edges labeled by $0$.}
				\label{figure:vee_one_obdds_to_sdds_case_b_links}
			\end{subfigure}	
			\begin{subfigure}[b]{0.65\textwidth}
				\hspace*{5em}\resizebox{!}{0.24\textheight}{\begin{tikzpicture}[>= stealth', ]

\tikzstyle{vertex} = [draw, circle, font = \footnotesize]
\tikzstyle{sink} = [draw, rectangle]
\tikzstyle{empty} = [inner sep = 0pt]

\node (a) at (0,1) {};

\node[vertex] (vee) at (0,0) [label={[font = \footnotesize, label distance = 0.01cm] above right:{$(u, \emptyset)$}}] {$\vee$};

\node[vertex] (wedge1)  [below left = of vee, label={[font = \footnotesize, label distance = 0.01cm] above left:{$(u, \wedge_0)$}}] {$\wedge$};
\node[vertex] (wedge2) [below right = of vee, label={[font = \footnotesize, label distance = 0.01cm] above right:{$(u, \wedge_1)$}}] {$\wedge$};

\node[empty] (v2_top) [below right = 1.5 and 0.5 of wedge1, label={[font = \footnotesize, label distance = 0.6cm] below:{$C_{u_0}$}}, label={[font = \footnotesize, label distance = 0.01cm] above right:{$(u_0, \emptyset)$}}] {};
\node[empty] (v2_left) [below left = 1.25 and 0.5 of v2_top] {};
\node[empty] (v2_right) [below right = 1.25 and 0.5 of v2_top] {};

\node[empty] (v3_top) [below right = 1.5 and 0.5 of wedge2, label={[font = \footnotesize, label distance = 0.6cm] below:{$C_{u_1}$}}, label={[font = \footnotesize, label distance = 0.01cm] above right:{$(u_1, \emptyset)$}}] {};
\node[empty] (v3_left) [below left = 1.25 and 0.5 of v3_top] {};
\node[empty] (v3_right) [below right = 1.25 and 0.5 of v3_top] {};

\node[vertex] (nx_i) [below left = 1.5 and 0.5 of wedge1, label={[font = \footnotesize, label distance = 0.01cm] below:{$(u, \overline{x_i})$}}] {$\overline{x_i}$};

\node[vertex] (x_i) [below left = 1.5 and 0.5 of wedge2, label={[font = \footnotesize, label distance = 0.01cm] below:{$(u, x_i)$}}] {$x_i$};

\draw[dashed] (a) to (vee);

\draw (vee) to (wedge1);
\draw (vee) to (wedge2);

\draw (wedge1) to (v2_top);
\draw (wedge1) to (nx_i);

\draw (wedge2) to (v3_top);
\draw (wedge2) to (x_i);

\draw (v2_left) to (v2_right);
\draw (v2_top) to (v2_left);
\draw (v2_top) to (v2_right);

\draw (v3_left) to (v3_right);
\draw (v3_top) to (v3_left);
\draw (v3_top) to (v3_right);

\end{tikzpicture}}
				\caption{A segment of the constructed SDD $C$.}
				\label{figure:vee_one_obdds_to_sdds_case_b_rechts}
			\end{subfigure}
			\caption[Skizze zur Konstruktion eines SDD $C$ aus $\vee_1$-OBDDs (a)]{Case (\textit{b}) in Simulation
                         \ref{simulation:vtree_vee_one_obdds_to_sdds}.}
			\label{figure:vee_one_obdds_to_sdds_case_b}
		\end{figure}
		
		\begin{figure}[!t]
			\centering
			\begin{subfigure}[b]{0.3\textwidth}
				\resizebox{!}{0.2\textheight}{\begin{tikzpicture}[>= stealth', ]

\tikzstyle{vertex} = [draw, circle, font = \footnotesize]
\tikzstyle{sink} = [draw, rectangle]
\tikzstyle{empty} = [inner sep = 0pt]

\node (a) at (0,1) {};

\node[vertex] (vee) at (0,0) [label={[font = \footnotesize, label distance = 0.01cm] above right:{$u$}}] {$\vee$};

\node[empty] (v2_top) [below left = 1.5 and 1 of vee, label={[font = \footnotesize, label distance = 0.01cm] above left:{$v$}}] {};
\node[empty] (v2_left) [below left = 1.25 and 0.5 of v2_top] {};
\node[empty] (v2_right) [below right = 1.25 and 0.5 of v2_top] {};

\node[empty] (v3_top) [below right = 1.5 and 1 of vee] {};
\node[empty] (v3_left) [below left = 1.25 and 0.5 of v3_top] {};
\node[empty] (v3_right) [below right = 1.25 and 0.5 of v3_top] {};

\node (dots1) [right = 0.5 of v2_top] {};
\node (dots2) [below = 1.3 of vee] {};
\node (dots3) [left = 0.5 of v3_top] {};

\node (dots4) [below = 0.5 of dots2] {$\dots$};

\draw[dashed] (a) to (vee);

\draw[->] (vee) to (v2_top);
\draw[->] (vee) to (v3_top);

\draw[dashed] (vee) to (dots1);
\draw[dashed] (vee) to (dots2);
\draw[dashed] (vee) to (dots3);

\draw (v2_left) to (v2_right);
\draw (v2_top) to (v2_left);
\draw (v2_top) to (v2_right);

\draw (v3_left) to (v3_right);
\draw (v3_top) to (v3_left);
\draw (v3_top) to (v3_right);

\draw[decoration={brace,mirror,raise=5pt},decorate]
(v2_left.south) -- node[below=6pt, text width = 3cm, align = center] {$(u,v) \in E$} (v3_right.south);

\end{tikzpicture}}
				\caption{A segment of the $\vee_1$-OBDD $\mathcal{F}$.}
				\label{figure:vee_one_obdds_to_sdds_case_c_links}
			\end{subfigure}	
			\begin{subfigure}[b]{0.65\textwidth}
				\resizebox{!}{0.22\textheight}{\begin{tikzpicture}[>= stealth', ]

\tikzstyle{vertex} = [draw, circle, font = \footnotesize]
\tikzstyle{sink} = [draw, rectangle]
\tikzstyle{empty} = [inner sep = 0pt]

\tikzset{draw half paths/.style 2 args={%
		decoration={show path construction,
			lineto code={
				\draw [#1] (\tikzinputsegmentfirst) -- 
				($(\tikzinputsegmentfirst)!0.7!(\tikzinputsegmentlast)$);
				\draw [#2] ($(\tikzinputsegmentfirst)!0.7!(\tikzinputsegmentlast)$)
				-- (\tikzinputsegmentlast);
			}
		}, decorate
	}}

\node (a) at (0,1) {};

\node[vertex] (vee) at (0,0) [label={[font = \footnotesize, label distance = 0.01cm] above right:{$(u, \emptyset)$}}] {$\vee$};

\node[vertex] (wedge1) [below left = 1 and 4 of vee, label={[font = \footnotesize, label distance = 0.01cm] above left:{$(u, v)$}}] {$\wedge$};
\node[vertex] (wedge2) [below left = 1 and 2 of vee] {$\wedge$};
\node[vertex] (wedge5) [below right = 1 and 2 of vee] {$\wedge$};
\node[vertex] (wedge6) [below right = 1 and 4 of vee] {$\wedge$};

\node[vertex] (wedge3) [right = 1 of wedge2] {$\wedge$};
\node[vertex] (wedge4) [left = 1  of wedge5] {$\wedge$};

\node[empty] (v1_top) [below left = 1.5 and 1 of wedge1, , label={[font = \footnotesize, label distance = 0.01cm] above left:{$(v, \emptyset)$}}] {};
\node[empty] (v1_left) [below left = 1.25 and 0.5 of v1_top] {};
\node[empty] (v1_right) [below right = 1.25 and 0.5 of v1_top] {};

\node[empty] (v2_top) [below left = 1.5 and 1 of wedge2] {};
\node[empty] (v2_left) [below left = 1.25 and 0.5 of v2_top] {};
\node[empty] (v2_right) [below right = 1.25 and 0.5 of v2_top] {};

\node[empty] (v3_top) [below left = 1.5 and 1 of wedge3] {};
\node[empty] (v3_left) [below left = 1.25 and 0.5 of v3_top] {};
\node[empty] (v3_right) [below right = 1.25 and 0.5 of v3_top] {};

\node[empty] (v4_top) [below left = 1.5 and 1 of wedge4] {};
\node[empty] (v4_left) [below left = 1.25 and 0.5 of v4_top] {};
\node[empty] (v4_right) [below right = 1.25 and 0.5 of v4_top] {};

\node[empty] (v5_top) [below left = 1.5 and 1 of wedge5] {};
\node[empty] (v5_left) [below left = 1.25 and 0.5 of v5_top] {};
\node[empty] (v5_right) [below right = 1.25 and 0.5 of v5_top] {};

\node[empty] (v6_top) [below left = 1.5 and 1 of wedge6] {};
\node[empty] (v6_left) [below left = 1.25 and 0.5 of v6_top] {};
\node[empty] (v6_right) [below right = 1.25 and 0.5 of v6_top] {};

\node[vertex] (top1) [below = 0.5 of wedge1, label={[font = \footnotesize, label distance = 0.01cm] above right:{$(u, \top)$}}] {$\top$};
\node[vertex] (top2) [below = 0.5 of wedge2] {$\top$};
\node[vertex] (bot1) [below = 0.5 of wedge3] {$\bot$};
\node[vertex] (bot2) [below = 0.5 of wedge4] {$\bot$};
\node[vertex] (bot3) [below = 0.5 of wedge5] {$\bot$};
\node[vertex] (bot4) [below = 0.5 of wedge6, label={[font = \footnotesize, label distance = 0.01cm] above right:{$(u, \bot)$}}] {$\bot$};

\node (dots1) [below right = 0.25 and 0.5 of v1_top] {$\dots$};
\node (dots2) [below right = 0.25 and 0.4 of v3_top] {$\dots$};
\node (dots3) [below right = 0.25 and 0.5 of v5_top] {$\dots$};

\node (dots4) [right = 0.25 of wedge1] {$\dots$};
\node (dots5) [right = 0.05 of wedge3] {$\dots$};
\node (dots6) [right = 0.25 of wedge5] {$\dots$};

\draw[dashed] (a) to (vee);

\draw (vee) to (wedge1);
\draw (vee) to (wedge2);
\draw (vee) to (wedge3);
\draw (vee) to (wedge4);
\draw (vee) to (wedge5);
\draw (vee) to (wedge6);

\draw (wedge1) to (v1_top);
\draw (wedge2) to (v2_top);
\draw (wedge3) to (v3_top);
\draw (wedge4) to (v4_top);
\draw (wedge5) to (v5_top);
\draw (wedge6) to (v6_top);

\draw (wedge1) to (top1);
\draw (wedge2) to (top2);
\draw (wedge3) to (bot1);
\draw (wedge4) to (bot2);
\draw (wedge5) to (bot3);
\draw (wedge6) to (bot4);

\draw (v1_left) to (v1_right);
\draw (v1_top) to (v1_left);
\draw (v1_top) to (v1_right);

\draw (v2_left) to (v2_right);
\draw (v2_top) to (v2_left);
\draw (v2_top) to (v2_right);

\draw (v3_left) to (v3_right);
\draw (v3_top) to (v3_left);
\draw (v3_top) to (v3_right);

\draw (v4_left) to (v4_right);
\draw (v4_top) to (v4_left);
\draw (v4_top) to (v4_right);

\draw (v5_left) to (v5_right);
\draw (v5_top) to (v5_left);
\draw (v5_top) to (v5_right);

\draw (v6_left) to (v6_right);
\draw (v6_top) to (v6_left);
\draw (v6_top) to (v6_right);

\path [draw half paths={dashed}{draw=none}] (vee) -- (dots4);
\path [draw half paths={dashed}{draw=none}] (vee) -- (dots5);
\path [draw half paths={dashed}{draw=none}] (vee) -- (dots6);

\draw[decoration={brace,mirror,raise=5pt},decorate]
(v1_left.south) -- node[below=6pt, text width = 3cm, align = center] {$v \in R^+$, $(u,v) \in E$} (v2_right.south);

\draw[decoration={brace,mirror,raise=5pt},decorate]
(v3_left.south) -- node[below=6pt, text width = 3cm, align = center] {$v \in R^+$, $(u,v) \notin E$} (v4_right.south);

\draw[decoration={brace,mirror,raise=5pt},decorate]
(v5_left.south) -- node[below=6pt, text width = 3cm, align = center] {$v \in \overline{R}^+$} (v6_right.south);

\end{tikzpicture}}
				\caption{A segment of the constructed SDD $C$.}
				\label{figure:vee_one_obdds_to_sdds_case_c_rechts}
			\end{subfigure}
			\caption[Skizze zur Konstruktion eines SDD $C$ aus $\vee_1$-OBDDs (b)]{Case (\textit{c})
				in Simulation \ref{simulation:vtree_vee_one_obdds_to_sdds}.}
			\label{figure:vee_one_obdds_to_sdds_case_c}
		\end{figure}
		
		\item[(c)] For each $\vee$-node $u \in (V \cup \overline{V})$, add an $\vee$-node $(u,\emptyset)$ to $C$. Let $\beta \in \beta(u)$ be a (partial) variable assignment (uniquely chosen). If $u \in V$ holds, let $R^+ = R^+(\mathcal{F}, \beta)$ and $\overline{R}^+ = R^+(\overline{\mathcal{F}}, \beta)$. Otherwise, let $\overline{R}^+ = R^+(\mathcal{F}, \beta)$ and $R^+ = R^+(\overline{\mathcal{F}}, \beta)$. For each node $v \in (R^+ \cup \overline{R}^+)$, add an $\wedge$-node $(u,v)$ to $C$. Moreover, add the nodes $(u,\bot)$ and $(u,\top)$ to $C$ which are labeled by the constants $\bot$ and $\top$, respectively. For each $u \in V$, add the following edges to $C$:
		\begin{itemize}
			\item For each node $v \in R^+$ fulfilling $(u,v) \in E$ insert
			\begin{itemize}
				\item $((u, \emptyset), (u, v))$
				\item $((u, v), (v, \emptyset))$
				\item $((u, v), (u, \top))$
			\end{itemize}
			\item For each node $v \in R^+$ fulfilling $(u,v) \notin E$ and each $v \in \overline{R}^+$ insert
			\begin{itemize}
				\item $((u, \emptyset), (u, v))$
				\item $((u, v), (v, \emptyset))$
				\item $((u, v), (u, \bot))$
			\end{itemize}
		\end{itemize}
		The case of $u \in V$ is depicted in Figure \ref{figure:vee_one_obdds_to_sdds_case_c}. If $u \in \overline{V}$ holds, then the edges will be inserted analogously by replacing the set of edges $E$ by $\overline{E}$ in the given description.
	\end{itemize}
	
	Furthermore, for each sink $u \in (V \cup \overline{V})$ we add a node $(u, \emptyset)$ labeled by the respective constant to $C$. The root of $C$ is given by $(\textnormal{root}(\mathcal{F}), \emptyset)$. Finally, we remove all nodes and edges from the resulting \textnormal{SDD} $C$ which cannot be reached from $\textnormal{root}(C) = (\textnormal{root}(\mathcal{F}), \emptyset)$.
	
\end{simulation}

In Figures \ref{figure:example_vee_one_obdd_to_sdd} and \ref{figure:example_vee_one_obdd_to_sdd_finished}, we give an example for the proposed simulation of unambiguous nondeterministic OBDDs by SDDs. Figure \ref{figure:example_vee_one_obdd_to_sdd} depicts two unambiguous nondeterministic OBDDs $\mathcal{F}$ and $\overline{\mathcal{F}}$ representing Boolean functions $f$ and $\overline{f}$, respectively. Whereas Figure \ref{figure:example_vee_one_obdd_to_sdd_finished} shows the SDD $C$ constructed by the simulation. 

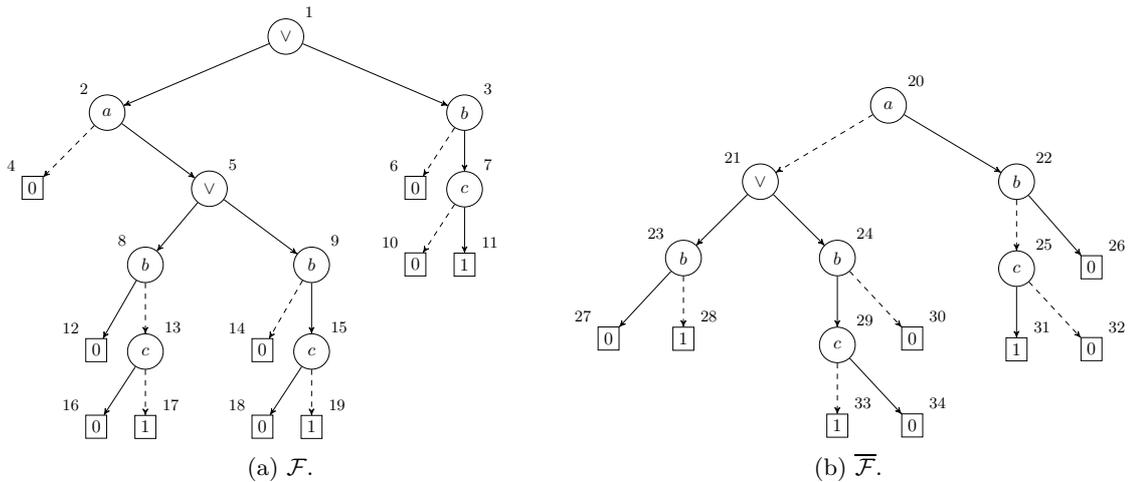
\begin{figure}[!b]
	\centering
	\begin{subfigure}[b]{0.49\textwidth}
		\resizebox{!}{0.25\textheight}{\begin{tikzpicture}[>= stealth']

\tikzstyle{vertex} = [draw, circle, minimum size=0.7cm]
\tikzstyle{sink} = [draw, rectangle]

\node[vertex] (v1) at (0,0) [label={[black, font = \small] above right:{$1$}}] {$\vee$};

\node[vertex] (v1m) [below left = 1 and 3 of v1, label={[black, font = \small] above left:{$2$}}] {$a$};
\node[vertex] (v1p) [below right = 1 and 3 of v1, label={[black, font = \small] above right:{$3$}}] {$b$};

\node[sink] (vbm) [below left = 1 and 1 of v1m, label={[black, font = \small] above left:{$4$}}] {$0$};
\node[vertex] (v3) [below right = 1 and 1.5 of v1m, label={[black, font = \small] above right:{$5$}}] {$\vee$};
\node[vertex] (v2) [below = 0.8 of v1p, label={[black, font = \small] above right:{$7$}}] {$c$};
\node[sink] (vmx) [below left = 1 and 0.5 of v1p, label={[black, font = \small] above left:{$6$}}] {$0$};

\node[vertex] (v3m) [below left = 1 and 0.75 of v3, label={[black, font = \small] above left:{$8$}}] {$b$};
\node[vertex] (v3p) [below right = 1 and 1.5 of v3, label={[black, font = \small] above right:{$9$}}] {$b$};
\node[sink] (vcm) [below = 0.9 of v2, label={[black, font = \small] above right:{$11$}}] {$1$};

\node[sink] (vm5) [below left = 1 and 0.5 of v2, label={[black, font = \small] above left:{$10$}}] {$0$};

\node[vertex] (vcp) [below = 1 of v3p, label={[black, font = \small] above right:{$15$}}] {$c$};
\node[vertex] (vxp) [below = 1 of v3m, label={[black, font = \small] above right:{$13$}}] {$c$};

\node[sink] (vm1) [below left = 1.2 and 0.5 of v3m, label={[black, font = \small] above left:{$12$}}] {$0$};
\node[sink] (vm2) [below left = 1.2 and 0.5 of v3p, label={[black, font = \small] above left:{$14$}}] {$0$};

\node[sink] (vm) [below left = 1 and 0.5 of vcp, label={[black, font = \small] above left:{$18$}}] {$0$};
\node[sink] (vp) [below = 0.9 of vcp, label={[black, font = \small] above right:{$19$}}] {$1$};

\node[sink] (vmm) [below left = 1 and 0.5 of vxp, label={[black, font = \small] above left:{$16$}}] {$0$};
\node[sink] (vpp) [below = 0.9 of vxp, label={[black, font = \small] above right:{$17$}}] {$1$};

\draw[->] (v1) to (v1m);
\draw[->] (v1) to (v1p);

\draw[->, dashed] (v1m) to (vbm);
\draw[->] (v1m) to (v3);
\draw[->] (v1p) to (v2);
\draw[->, dashed] (v1p) to (vmx);

\draw[->] (v3) to (v3m);
\draw[->] (v3) to (v3p);
\draw[->] (v2) to (vcm);
\draw[->, dashed] (v3m) to (vxp);
\draw[->, dashed] (v2) to (vm5);

\draw[->] (v3p) to (vcp);
\draw[->] (v3m) to (vm1);
\draw[->, dashed] (v3p) to (vm2);

\draw[->] (vcp) to (vm);
\draw[->, dashed] (vcp) to (vp);

\draw[->] (vxp) to (vmm);
\draw[->, dashed] (vxp) to (vpp);

\end{tikzpicture}}
		\caption{$\mathcal{F}$.}
		\label{figure:example_vee_one_obdd_to_sdd_left}
	\end{subfigure}	
	\begin{subfigure}[b]{0.49\textwidth}
		\resizebox{!}{0.21\textheight}{\begin{tikzpicture}[>= stealth']

\tikzstyle{vertex} = [draw, circle, minimum size=0.7cm]
\tikzstyle{sink} = [draw, rectangle]

\node[vertex] (a) at (0,0) [label={[black, font = \small] above right:{$20$}}] {$a$};

\node[vertex] (v1) [below left = 1 and 2 of a, label={[black, font = \small] above left:{$21$}}] {$\vee$};
\node[vertex] (b1) [below right = 1 and 2 of a, label={[black, font = \small] above right:{$22$}}] {$b$};

\node[vertex] (b2) [below left = of v1, label={[black, font = \small] above left:{$23$}}] {$b$};
\node[vertex] (b3) [below right = of v1, label={[black, font = \small] above right:{$24$}}] {$b$};
\node[vertex] (c1) [below = of b1, label={[black, font = \small] above right:{$25$}}] {$c$};
\node[sink] (z0) [below right = 1.2 and 1 of b1, label={[black, font = \small] above right:{$26$}}] {$0$};

\node[sink] (z1) [below left = 1.1 and 1 of b2, label={[black, font = \small] above left:{$27$}}] {$0$};
\node[sink] (e0) [below = of b2, label={[black, font = \small] above right:{$28$}}] {$1$};
\node[vertex] (c2) [below = of b3, label={[black, font = \small] above right:{$29$}}] {$c$};
\node[sink] (z2) [below right = 1.1 and 1 of b3, label={[black, font = \small] above right:{$30$}}] {$0$};
\node[sink] (e1) [below = of c1, label={[black, font = \small] above right:{$31$}}] {$1$};
\node[sink] (z4) [below right = 1.1 and 1 of c1, label={[black, font = \small] above right:{$32$}}] {$0$};

\node[sink] (e2) [below = of c2, label={[black, font = \small] above right:{$33$}}] {$1$};
\node[sink] (z5) [below right = 1.1 and 1 of c2, label={[black, font = \small] above right:{$34$}}] {$0$};

\draw[->, dashed] (a) to (v1);
\draw[->] (a) to (b1);

\draw[->] (v1) to (b2);
\draw[->] (v1) to (b3);

\draw[->, dashed] (b1) to (c1);
\draw[->] (b1) to (z0);

\draw[->, dashed] (b2) to (e0);
\draw[->] (b2) to (z1);

\draw[->] (b3) to (c2);
\draw[->, dashed] (b3) to (z2);

\draw[->] (c2) to (z5);
\draw[->, dashed] (c2) to (e2);

\draw[->] (c1) to (e1);
\draw[->, dashed] (c1) to (z4);

\end{tikzpicture}}
		\caption{$\overline{\mathcal{F}}$.}
		\label{figure:example_vee_one_obdd_to_sdd_right}
	\end{subfigure}
	\caption{Unambiguous nondeterministic OBDDs $\mathcal{F}$ and $\overline{\mathcal{F}}$ representing the Boolean functions $f(a,b,c) = (b \wedge c) \vee (a \wedge ((\overline{b} \wedge \overline{c}) \vee (b \wedge \overline{c})))$ and $\overline{f}$. Solid edges represent edges labeled by $1$, dashed ones edges labeled by $0$.}
	\label{figure:example_vee_one_obdd_to_sdd}
\end{figure}

\begin{figure}[!ht]
	\centering{
		\hspace*{-1cm}
		\resizebox{!}{0.38\textheight}{
			\input{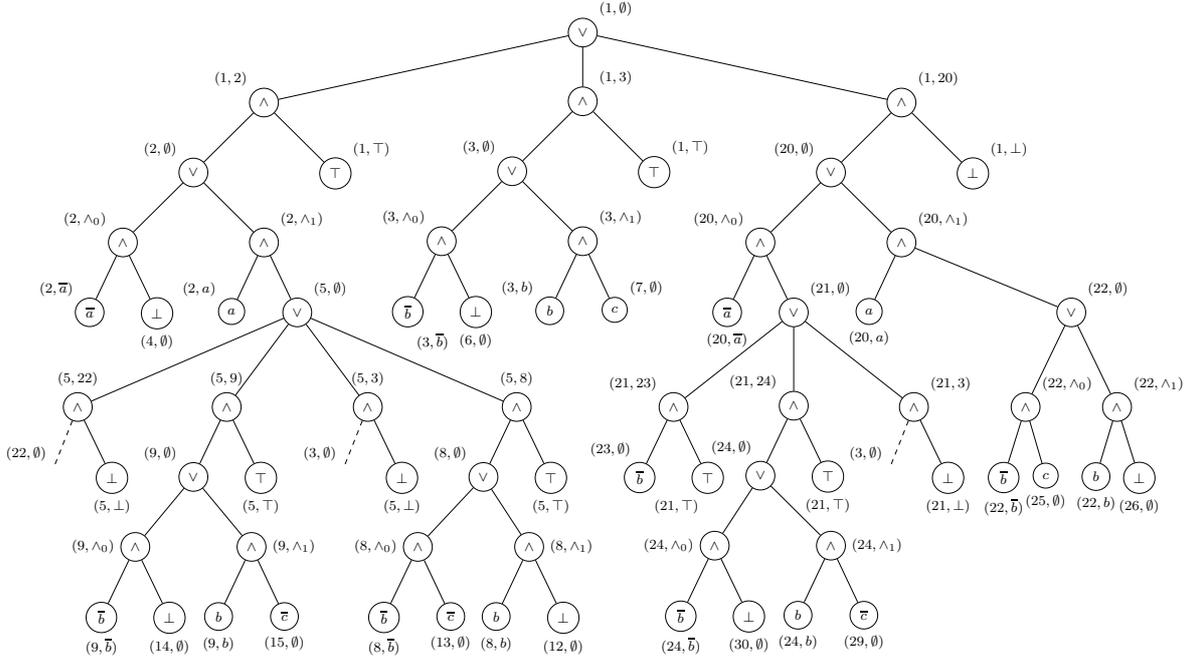}
		}
	}
	\caption{The SDD $C$ which also represents $f$ constructed by Simulation \ref{simulation:vtree_vee_one_obdds_to_sdds} with input $\mathcal{F}$ and $\overline{\mathcal{F}}$. The dashed lines depict connections to sub-SDDs that are already shown in the diagram.}
	\label{figure:example_vee_one_obdd_to_sdd_finished}
\end{figure}

\subsection{\bf Size, correctness, and equivalence}\label{subsection:simulation_vee_one_obdds_by_sdds_size_correctness}

We get a relationship between the sizes of the given unambiguous nondeterministic OBDDs and the constructed SDD by the following lemma which states that the increase in size is at most quadratic in $|\mathcal{F}| + |\overline{\mathcal{F}}|$.

\begin{lemma}
	\label{lemma:size_vee_one_obdd_to_sdd}
	Let $\mathcal{F}$ and $\overline{\mathcal{F}}$ be unambiguous nondeterministic \textnormal{OBDDs}  respecting the variable ordering $\pi = \textnormal{id}$ and representing Boolean functions $f, \overline{f} \in B_n$. Additionally, let $|\mathcal{F}| = N_1$, $|\overline{\mathcal{F}}| = N_2$, $N = N_1 + N_2$, and $X' \subseteq X$ be the set of variables for which there is decision node of $\mathcal{F}$ or $\overline{\mathcal{F}}$ labeled by a variable of $X'$. Then, the \textnormal{SDD} $C$ resulting from Simulation \ref{simulation:vtree_vee_one_obdds_to_sdds} contains at most $2N^2 + 3N$ nodes.
\end{lemma}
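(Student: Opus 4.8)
The plan is to count the nodes of $C$ case by case according to the construction rules (a), (b), (c) in Simulation~\ref{simulation:vtree_vee_one_obdds_to_sdds}, together with the sinks and the help-variable leaves, and to bound each contribution by something of the form $O(N)$ or $O(N^2)$. Recall the nodes of $C$ are tuples $(u,v)\in Y\times Z$ where $Y=V\cup\overline V$ (so $|Y|=N$) and $Z=\{\wedge_0,\wedge_1,\emptyset\}\cup X'\cup Y$. Since $|X'|\le|Y|=N$ (at most one variable per decision node, crudely), we have $|Z|\le N+3+N=2N+3$, hence the total number of tuples is at most $N(2N+3)=2N^2+3N$. The cleanest proof is therefore essentially this one-line observation: \emph{every} node added by the simulation is a tuple in $Y\times Z$ and distinct nodes correspond to distinct tuples, so $|C|\le|Y|\cdot|Z|\le 2N^2+3N$.

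To make that rigorous I would first verify that the node set is genuinely a subset of $Y\times Z$ and that the map from constructed nodes to tuples is injective. Walking through the cases: case~(a) contributes nodes $(u,\emptyset)$; case~(b) contributes $(u,\emptyset),(u,\wedge_0),(u,\wedge_1),(u,\overline{x_i}),(u,x_i)$ — here the second coordinates $\overline{x_i}$ and $x_i$ I would identify with the element $x_i\in X'$ (a node labelled by a literal on $x_i$ is recorded via $(u, x_i)$, with the literal polarity determined by which $\wedge$-child it hangs from), so these live in $Y\times(\{\emptyset,\wedge_0,\wedge_1\}\cup X')$; case~(c) contributes $(u,\emptyset)$, the $\wedge$-nodes $(u,v)$ for $v\in R^+\cup\overline R^+\subseteq Y$, and the constant nodes $(u,\top),(u,\bot)$ — the latter two I would again fold into the count by noting there are at most $|Y|$ of each, or more simply absorb $\top,\bot$ as two extra symbols (the bound $2N^2+3N$ has slack, since really $|Z|\le N+5$ once we add $\top,\bot$, and then $|C|\le N(N+5)$, still $\le 2N^2+3N$ for $N\ge 2$; for $N=1$ the diagram is trivial). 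Finally the sinks contribute nodes $(u,\emptyset)$ with $u$ a sink of $\mathcal F$ or $\overline{\mathcal F}$, again in $Y\times\{\emptyset\}$; and the help-variable leaves of the vtree are \emph{not} nodes of $C$ at all (they are vtree nodes), so they do not enter the count.

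The one genuinely non-trivial point — the main obstacle — is making precise that distinct SDD nodes really do get distinct tuples, i.e. that the encoding is well-defined and injective across \emph{all} cases simultaneously, including the constant nodes $(u,\top)$, $(u,\bot)$ that appear repeatedly in Figure~\ref{figure:example_vee_one_obdd_to_sdd_finished} with the same label for different $u$. Here one must be slightly careful: in the construction, the $\top$- and $\bot$-leaves are shared/merged, but even without merging, a node $(u,\top)$ is indexed by the originating $\vee$-node $u$, so the pair $(u,\top)$ is unique; the count $|Y|$ choices for the first coordinate times a constant number of ``constant'' symbols for the second still gives $O(N)$. Similarly one checks that in case~(b) the literal nodes for $\overline{x_i}$ and $x_i$ coming from the \emph{same} decision node $u$ are distinguished — and here I would use the three slots $\wedge_0$ versus $\wedge_1$, or formally record them as $(u,\overline{x_i})$ versus $(u,x_i)$ with $x_i\in X'$ but allowing both polarities, costing at most one extra factor that is absorbed in the slack between $N(N+5)$ and $2N^2+3N$. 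Having established injectivity into $Y\times Z$ with $|Y|=N$ and $|Z|\le 2N+3$, the bound $|C|\le|Y|\cdot|Z|\le 2N^2+3N$ is immediate, and since deleting unreachable nodes at the end of the simulation only decreases the count, the lemma follows. \hspace*{\fill}$\Box$
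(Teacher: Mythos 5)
Your proof is correct and follows essentially the same route as the paper's: bound $|C|$ by the number of tuples in $Y\times Z$, note $|Y|=N$ and $|Z|\le N+|X'|+3\le 2N+3$ via $|X'|\le N$, and conclude $|C|\le N(2N+3)=2N^2+3N$. Your additional care about the encoding of literal and constant second coordinates only tightens details the paper glosses over and does not change the argument.
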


\begin{proof}
	The nodes of $C$ are tuple $(u,v) \in Y \times Z$. By definition of $Y$ and $Z$ in Simulation \ref{simulation:vtree_vee_one_obdds_to_sdds} we have $Y = |\mathcal{F}| + |\overline{\mathcal{F}}| = N_1 + N_2$ and $Z = N_1 + N_2 + |X'| + 3$. Hence, $C$ contains at most $(N_1 + N_2) \cdot (N_1 + N_2 + |X'| + 3)$ nodes. Furthermore, by assumption $\mathcal{F}$ and $\overline{\mathcal{F}}$ contain at least one node for each variable $x \in X'$. Therefore, we also have $N_1 + N_2 \geq |X'|$. Altogether, we get the following quadratic upper bound:
	\begin{eqnarray*}
		|C| &\leq& (N_1 + N_2) \cdot (N_1 + N_2 + |X'| + 3) \\
		&=& N \cdot (N + |X'| + 3) \\
		&\leq& N \cdot (2N + 3) = 2N^2 + 3N \in \mathcal{O}(N^2)\,.
	\end{eqnarray*} 
	
\end{proof}

Simulation \ref{simulation:vtree_vee_one_obdds_to_sdds} maps each node $u \in (V \cup \overline{V})$ to a node $(u, \emptyset)$ of $C$. In order to show that $C$ is a syntactically correct SDD computing the same function as $\mathcal{F}$, we will prove that each node $(u, \emptyset)$ of $C$ is the root of a syntactically correct SDD $C_{(u, \emptyset)}$ which computes the same function as $\mathcal{F}_u$ or $\overline{\mathcal{F}}_u$. For this purpose, we map each node $u \in (V \cup \overline{V})$ to a node $v$ of $T$ such that we can show that $C_{(u, \emptyset)}$ respects subtree $T_v$.

\begin{definition}
	Let $T$ be the vtree as defined in Simulation \ref{simulation:vtree_vee_one_obdds_to_sdds} and $u \in (V \cup \overline{V})$ be an inner node of the given $\vee_1$-$\textnormal{OBDDs}$. We use the function \emph{node} in order to map inner nodes of $\mathcal{F}$ and $\overline{\mathcal{F}}$ to nodes of $T$ in the following way:
	\begin{eqnarray*}
		\textnormal{node}(u) :=
		\begin{cases}
			v_i & \hspace{-0.25cm}\mbox{, } u \textnormal{ is an} \vee\textnormal{-node, } x_i \in \vars{u}, \,\nexists x_j \in \vars{u} \textnormal{ such that } \\
			& \hspace{-0.05cm} x_j < x_i \textnormal{ w.r.t. } \pi. \\
			v_i' & \hspace{-0.25cm}\mbox{, } u \textnormal{ is not an} \vee\textnormal{-node, } x_i \in \vars{u}, \,\nexists x_j \in \vars{u} \textnormal{ such that } \\
			& \hspace{-0.05cm} x_j < x_i \textnormal{ w.r.t. } \pi. \\
		\end{cases}
	\end{eqnarray*}
\end{definition}

Now, we are ready to prove the stated properties of the SDDs $C_{(u, \emptyset)}$.

\begin{lemma}
	\label{lemma:hauptlemma_vee_one_obdd_to_sdd}
	Let $\mathcal{F}$ and $\overline{\mathcal{F}}$ be unambiguous nondeterministic \textnormal{OBDDs}  respecting the variable ordering $\pi = \textnormal{id}$, representing Boolean functions $f, \overline{f} \in B_n$. Let $C$ be the $\textnormal{SDD}$ resulting from Simulation \ref{simulation:vtree_vee_one_obdds_to_sdds}. Then, each node $(u, \emptyset)$ of $C$ is the root of a syntactically correct \textnormal{SDD} $C_{(u,\emptyset)}$ respecting the vtree $T_{v}$ of the inner node $v = \textnormal{node}(u)$. Moreover, $C_{(u,\emptyset)}$ represents the same Boolean function as $\mathcal{F}_u$ or $\overline{\mathcal{F}}_u$.
\end{lemma}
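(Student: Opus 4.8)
I would prove both assertions at once — that $C_{(u,\emptyset)}$ is a syntactically correct SDD respecting $T_{\nodefunc{u}}$, and that $C_{(u,\emptyset)}$ is equivalent to $\mathcal{F}_u$ (resp.\ to $\overline{\mathcal{F}}_u$) — by induction over a two-tier rank of the node $u\in V\cup\overline{V}$. For such a $u$ let $i(u)$ be the least index with $\vars{u}\subseteq\{x_{i(u)},\dots,x_n\}$ and assign $u$ the rank $(i(u),0)$ if it is a decision node and $(i(u),1)$ if it is a $\vee$-node, where $(i,b)$ precedes $(i',b')$ iff $i>i'$, or $i=i'$ and $b<b'$. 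Sinks, which the statement does not cover, are handled first: then $(u,\emptyset)$ is a constant leaf, a correct SDD representing the same constant as $\mathcal{F}_u$. This order is what closes the induction: the construction of $C_{(u,\emptyset)}$ in case~(b) refers only to the two successors $u_0,u_1$ of $u$, which (the OBDDs respecting $x_1,\dots,x_n$) satisfy $i(u_j)\ge i(u)+1$; and in case~(c) only to nodes $v\in R^+\cup\overline{R}^+$, which are decision nodes with $\vars{v}\subseteq\{x_{i(u)},\dots,x_n\}$, hence of rank $(i(v),0)$ with $i(v)\ge i(u)$ — in both cases already treated before $u$. First I would record two elementary facts about the vtree $T$ of Simulation~\ref{simulation:vtree_vee_one_obdds_to_sdds}: $\vars{v_m'}=\{x_m,\dots,x_n\}\cup\{h_{x_{m+1},\dots,x_n},\dots,h_{x_n}\}$ and $\vars{v_m}=\vars{v_m'}\cup\{h_{x_m,\dots,x_n}\}$, so that $v_j,v_j'$ are nodes of $T_{v_m}$ and $v_j'$ is a node of $T_{v_m'}$ whenever $j\ge m$; and that an SDD respecting $T_{w'}$ with $w'$ a node of $T_w$ also respects $T_w$ (immediate, since being an internal node of a tree is a local property). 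Case~(a) is then immediate: simplicity forbids $\mathcal{F}_u$ being constant, so $\mathcal{F}_u\in\{x_i,\overline{x_i}\}$, $(u,\emptyset)$ is the matching literal leaf, and since $x_i\in\vars{v_i'}$ it respects $T_{v_i'}=T_{\nodefunc{u}}$.

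\textbf{Case (b).}
Here $\nodefunc{u}=v_i'$, with children the leaf $x_i$ and $v_{i+1}$. The output $\vee$-node $(u,\emptyset)$ has the $\wedge$-children $(u,\wedge_0),(u,\wedge_1)$ with prime/sub pairs $(\overline{x_i},C_{(u_0,\emptyset)})$ and $(x_i,C_{(u_1,\emptyset)})$: decomposability holds because $\vars{u_0},\vars{u_1}\subseteq\{x_{i+1},\dots,x_n\}$ is disjoint from $\{x_i\}$, and the common decomposition node is $v_i'$ because $\{x_i\}=\vars{x_i}$ and $\vars{u_j}\subseteq\vars{v_{i+1}}$. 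By the induction hypothesis $C_{(u_j,\emptyset)}$ is a correct SDD respecting $T_{\nodefunc{u_j}}$ (or a constant leaf, if $u_j$ is a sink), and $\nodefunc{u_j}$ is a node of $T_{v_{i+1}}$, so by the locality observation it respects $T_{v_{i+1}}$, as a sub must. The primes $\{\overline{x_i},x_i\}$ form a partition, the $\vee$-nodes strictly below the root lie inside $C_{(u_0,\emptyset)}$ or $C_{(u_1,\emptyset)}$ and keep the partition property by the induction hypothesis, and the $\vee/\wedge$-layering is read off the construction; hence $C_{(u,\emptyset)}$ is a syntactically correct SDD respecting $T_{v_i'}$. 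With $C_{(u_j,\emptyset)}$ equivalent to $\mathcal{F}_{u_j}$ by induction, $(u,\emptyset)$ computes $(\overline{x_i}\wedge\mathcal{F}_{u_0})\vee(x_i\wedge\mathcal{F}_{u_1})$, which is $\mathcal{F}_u$ by the Shannon decomposition at $x_i$. (For $u\in\overline{V}$ one replaces $\mathcal{F}$ by $\overline{\mathcal{F}}$.)

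\textbf{Case (c).}
Now $\nodefunc{u}=v_i$, with children $v_i'$ and the leaf $h_{x_i,\dots,x_n}$; take $u\in V$ (the case $u\in\overline{V}$ is symmetric under $\mathcal{F}\leftrightarrow\overline{\mathcal{F}}$, $E\leftrightarrow\overline{E}$). The output $\vee$-node $(u,\emptyset)$ has one $\wedge$-child $(u,v)$ per $v\in R^+\cup\overline{R}^+$, with prime $C_{(v,\emptyset)}$ and sub $(u,\top)$ or $(u,\bot)$. Every such $v$ is a decision node — in a simple $\vee_1$-OBDD the members of $R$ are inner nodes, and children of $\vee$-nodes are decision nodes since $\vee$-nodes are neither adjacent nor joined to sinks — so by induction $C_{(v,\emptyset)}$ is a correct SDD respecting $T_{\nodefunc{v}}=T_{v_{i(v)}'}$ with $i(v)\ge i$; as $v_{i(v)}'$ is a node of $T_{v_i'}$ (the left subtree of $v_i$), it respects $T_{v_i'}$, as a prime must. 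A short check — the $\vee$-nodes reachable from a decision node query only variables strictly after that node's own — shows the help variables occurring in $C_{(v,\emptyset)}$ are all of the form $h_{x_m,\dots,x_n}$ with $m>i$, so $h_{x_i,\dots,x_n}\notin\vars{(v,\emptyset)}$; together with $\vars{v}\subseteq\{x_i,\dots,x_n\}$ this gives decomposability of $(u,v)$ and makes $v_i$ its decomposition node. The family of primes of $(u,\emptyset)$ is exactly $\{\mathcal{F}_v:v\in R^+(\mathcal{F},\beta)\}\cup\{\overline{\mathcal{F}}_v:v\in R^+(\overline{\mathcal{F}},\beta)\}$, which is a partition by Lemma~\ref{lemma:partitionlemma_vee_one_obdd_to_sdd} for the fixed $\beta\in\beta(u)$; the remaining $\vee$-nodes lie inside the $C_{(v,\emptyset)}$ and are covered by the induction hypothesis. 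Hence $C_{(u,\emptyset)}$ is a syntactically correct SDD respecting $T_{v_i}$. For its function, $\wedge\bot=\bot$ kills every $\wedge$-child with sub $(u,\bot)$, leaving $\bigvee_{v\in R^+,\,(u,v)\in E}\mathcal{F}_v$; the children of the $\vee$-node $u$ in $\mathcal{F}$ are precisely the $v$ with $(u,v)\in E$ and all lie in $R^+(\mathcal{F},\beta)$ (as noted in the proof idea of Lemma~\ref{lemma:partitionlemma_vee_one_obdd_to_sdd}), so this disjunction equals $\bigvee_{v:\,(u,v)\in E}\mathcal{F}_v=\mathcal{F}_u$.

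\textbf{Main obstacle.}
Most of the argument is bookkeeping on the vtree — the variable sets of $v_i,v_i'$, the locality observation, and tracking which help variables occur in which sub-SDD. The hard part will be case~(c): one has to recognise that the primes assembled by the simulation are exactly the family that Lemma~\ref{lemma:partitionlemma_vee_one_obdd_to_sdd} certifies to be a partition, and that the $\wedge\bot$-collapse, together with the fact that $u$'s children belong to $R^+$, recovers $\mathcal{F}_u$ on the nose. Getting the induction order right — so that the $R^+$-nodes sitting at the same level $x_{i(u)}$ as a $\vee$-node $u$ are already treated — is the other delicate point; the $\vee/\wedge$-layering and decomposability then reduce to routine checks.
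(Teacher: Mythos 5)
Your proposal is correct and follows essentially the same route as the paper's proof in Appendix~B: the same case analysis (a)/(b)/(c), the same invocation of Lemma~\ref{lemma:partitionlemma_vee_one_obdd_to_sdd} to certify the partition of primes in case~(c), and the same Shannon-decomposition and $\wedge\bot$-collapse computations for semantic equivalence; the only difference is that you induct on the rank $(i(u),\text{type})$ rather than on the depth of $C_{(u,\emptyset)}$, and both orders are well-founded here. (One wording slip: $i(u)$ should be the \emph{greatest} index with $\vars{u}\subseteq\{x_{i(u)},\dots,x_n\}$, i.e., the index of the first variable queried in $\mathcal{F}_u$, as your subsequent use of it makes clear.)
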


\begin{proofidea}
	Consider the different cases how the node $(u,\emptyset)$ was added to $C$ by the given simulation. We give a proof by induction on the depth $l$ of the subgraph $C_{(u,\emptyset)}$ of the SDD $C$ in Appendix B.
\end{proofidea}

As a consequence of Lemma \ref{lemma:hauptlemma_vee_one_obdd_to_sdd}, we know that $C$ is a syntactically correct SDD representing the same Boolean function as $\mathcal{F}$.

\begin{corollary}
	\label{corollary:syntaktische_und_semantische_korrektheit_vee_one_obdd_to_sdd}
	Let $\mathcal{F}$ and $\overline{\mathcal{F}}$ be unambiguous nondeterministic \textnormal{OBDDs} respecting the variable ordering $\pi = \textnormal{id}$, representing Boolean functions $f, \overline{f} \in B_n$. Then, $C$ is a syntactically correct $\textnormal{SDD}$ respecting the vtree $T$ as defined in the simulation. Furthermore, $C$ represents $f$.
\end{corollary}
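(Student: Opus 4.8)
}
The plan is to read the corollary off Lemma~\ref{lemma:hauptlemma_vee_one_obdd_to_sdd} by specializing it to the source of $\mathcal{F}$. By the last step of Simulation~\ref{simulation:vtree_vee_one_obdds_to_sdds} the root of $C$ is $(\textnormal{root}(\mathcal{F}),\emptyset)$, and removing the nodes unreachable from it changes neither the represented function nor syntactic correctness; hence $C$ is precisely the sub-SDD $C_{(\textnormal{root}(\mathcal{F}),\emptyset)}$. Instantiating Lemma~\ref{lemma:hauptlemma_vee_one_obdd_to_sdd} at $u=\textnormal{root}(\mathcal{F})\in V$ then gives simultaneously that $C$ is a syntactically correct SDD and that it computes the same Boolean function as $\mathcal{F}_{\textnormal{root}(\mathcal{F})}$, i.e.\ the function represented by $\mathcal{F}$, which is $f$. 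The lemma phrases the computed function only as ``$\mathcal{F}_u$ or $\overline{\mathcal{F}}_u$''; I would pin down the alternative by observing that for $u\in V$ the inductive construction of cases (a)--(c) substitutes exclusively sub-SDDs built from nodes of $\mathcal{F}$ and from the auxiliary set $R^{+}(\overline{\mathcal{F}},\beta)$, so the inherited function is the one coming from $\mathcal{F}$, namely $f$ and not $\overline{f}$.

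It remains to verify that the vtree $C$ respects is $T$ itself rather than a proper subtree: Lemma~\ref{lemma:hauptlemma_vee_one_obdd_to_sdd} only asserts that $C$ respects $T_v$ for $v=\textnormal{node}(\textnormal{root}(\mathcal{F}))$. Assuming w.l.o.g.\ that $f$ is non-constant and that $\mathcal{F}$ has a decision node for $x_1$ (otherwise $f$ does not essentially depend on $x_1$ and one reindexes), $x_1$ is $\pi$-minimal in $\vars{\textnormal{root}(\mathcal{F})}$, so by the definition of $\textnormal{node}(\cdot)$ we get $v=v_1$ when $\textnormal{root}(\mathcal{F})$ is a $\vee$-node and $v=v_1'$ when it is a decision node. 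In the first case $T_v=T_{v_1}=T$ and we are done; in the second case $T_{v_1'}$ is $T$ with the (then unused) leaf $h_{x_1,\dots,x_n}$ and the node $v_1$ deleted, and $C$ is upgraded to an SDD respecting $T$ by one harmless top-level decomposition at $v_1$ --- consistently with the size bound of Lemma~\ref{lemma:size_vee_one_obdd_to_sdd} --- or else $C$ is simply kept as an SDD for $f$ over $X\cup H\setminus\{h_{x_1,\dots,x_n}\}$, which is all that is used later. (One may equivalently arrange w.l.o.g.\ that $\textnormal{root}(\mathcal{F})$ is a $\vee$-node for a simple $\vee_1$-OBDD, after which this second case does not arise.)

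The main obstacle is not located in this corollary: essentially all the content sits in Lemma~\ref{lemma:hauptlemma_vee_one_obdd_to_sdd} and, underneath it, the partition Lemma~\ref{lemma:partitionlemma_vee_one_obdd_to_sdd}, both proved in the appendix. Given those, the corollary is bookkeeping, and the only genuinely delicate step at this level is the top-level case just discussed --- checking that $\textnormal{node}(\textnormal{root}(\mathcal{F}))$ coincides with, or can be trivially promoted to, the root of $T$, and that the pruning of unreachable material in the final line of the simulation does not touch the source of $C$ or its sub-SDD.
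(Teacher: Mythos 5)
Your core argument is exactly the paper's: identify $C$ with the sub-SDD $C_{(\textnormal{root}(\mathcal{F}),\emptyset)}$, instantiate Lemma~\ref{lemma:hauptlemma_vee_one_obdd_to_sdd} at $u=\textnormal{root}(\mathcal{F})\in V$ to get syntactic correctness and equality with the function of $\mathcal{F}$, i.e.\ $f$; the observation that the final pruning of unreachable nodes is harmless and that for $u\in V$ the represented function is the one of $\mathcal{F}$ (not $\overline{\mathcal{F}}$) is also how the paper reads the lemma.

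Where you diverge is the vtree issue, and there your argument is more complicated than necessary and, in one branch, not actually justified. Lemma~\ref{lemma:hauptlemma_vee_one_obdd_to_sdd} gives that $C$ respects $T_v$ for $v=\textnormal{node}(\textnormal{root}(\mathcal{F}))$, which is some $v_i$ or $v_i'$; but by Definition~\ref{def:sdds} the decomposition node of an SDD respecting $T$ only has to be \emph{an internal node of} $T$, not its root, so any SDD respecting a subtree $T_v$ of $T$ automatically respects $T$. This one-line observation is precisely how the paper finishes (``Here we have $v=v_i$ or $v=v_i'$ \dots\ Thus, $C$ is also respecting $T$''), and it makes your case distinction, the w.l.o.g.\ assumption that $\mathcal{F}$ queries $x_1$, and the reindexing all superfluous. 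Moreover, your proposed ``harmless top-level decomposition at $v_1$'' is not harmless as stated: to add a decomposition at $v_1$ you must exhibit a partition of primes respecting $T_{v_1'}$ (e.g.\ $\{C,\overline{C}\}$), which would require an SDD for $\overline{C}$ that you have not constructed; and your fallback of regarding $C$ as an SDD over $X\cup H\setminus\{h_{x_1,\dots,x_n}\}$ proves a weaker statement than the corollary, which asserts that $C$ respects $T$. Neither repair is needed once the definitional point above is used.
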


\begin{proof}
	The root of $C$ is given by the node $(\rootset{\mathcal{F}}, \emptyset)$ as depicted in Simulation \ref{simulation:vtree_vee_one_obdds_to_sdds}. We use Lemma \ref{lemma:hauptlemma_vee_one_obdd_to_sdd} in order to see that $C = C_{(\rootset{\mathcal{F}}, \emptyset)}$ is a syntactically correct $\textnormal{SDD}$ respecting the vtree $T_v$ with $v = \nodefunc{\rootset{\mathcal{F}}}$ and representing the same Boolean function as $\mathcal{F}$. Here we have $v = v_i$ or $v = v_i'$ for $i \in \{1,\dots, n\}$. Thus, $C$ is also respecting $T$. 
\end{proof}

\begin{theorem}\label{thm:transformation_into_sdd}
Let $f$ be a Boolean function such that $f$ and $\overline{f}$ can be represented by polynomial-size 
unambiguous nondeterministic \textnormal{OBDDs} respecting the same variable ordering. 
Then, $f$ can also be represented by polynomial-size \textnormal{SDDs}.
\end{theorem}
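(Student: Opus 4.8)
The plan is to read the theorem off the constructive material already developed in this section: Simulation~\ref{simulation:vtree_vee_one_obdds_to_sdds} produces the SDD, Corollary~\ref{corollary:syntaktische_und_semantische_korrektheit_vee_one_obdd_to_sdd} (which rests on Lemmas~\ref{lemma:partitionlemma_vee_one_obdd_to_sdd} and~\ref{lemma:hauptlemma_vee_one_obdd_to_sdd}) guarantees that the result is a syntactically correct SDD representing $f$, and Lemma~\ref{lemma:size_vee_one_obdd_to_sdd} bounds its size. So the theorem is essentially a corollary of what precedes it, and the only thing to do is to bring the hypothesis into the normalized form those statements assume.

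First I would normalize the input. By renaming variables we may assume the common variable ordering witnessing the hypothesis is $\pi = \textnormal{id}$, that is, $x_1, \dots, x_n$. As observed right after the definition of a \emph{simple} unambiguous nondeterministic OBDD, every polynomial-size $\vee_1$-OBDD has an equivalent simple one of polynomial size; applying this to the two given diagrams yields simple $\vee_1$-OBDDs $\mathcal{F}$ and $\overline{\mathcal{F}}$, respecting $\pi = \textnormal{id}$, of size $n^{\mathcal{O}(1)}$, that represent $f$ and $\overline{f}$ respectively. I would then feed $\mathcal{F}$ and $\overline{\mathcal{F}}$ into Simulation~\ref{simulation:vtree_vee_one_obdds_to_sdds} to obtain an SDD $C$ respecting the vtree $T$ of Figure~\ref{figure:vtree_vee_one_obdds_to_sdds}.

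Now the earlier results close the argument: by Corollary~\ref{corollary:syntaktische_und_semantische_korrektheit_vee_one_obdd_to_sdd}, $C$ is a syntactically correct SDD representing $f$, and by Lemma~\ref{lemma:size_vee_one_obdd_to_sdd}, $|C| \le 2N^2 + 3N$ with $N = |\mathcal{F}| + |\overline{\mathcal{F}}|$, which is polynomial in $n$; hence $f$ is represented by polynomial-size SDDs. There is essentially no obstacle here — the substantive work lives in Simulation~\ref{simulation:vtree_vee_one_obdds_to_sdds} and Lemmas~\ref{lemma:partitionlemma_vee_one_obdd_to_sdd}, \ref{lemma:hauptlemma_vee_one_obdd_to_sdd} and~\ref{lemma:size_vee_one_obdd_to_sdd} — and the only point I would be careful to spell out is that $C$ is built over the augmented set $X \cup H$ rather than over $X$: since the help variables in $H$ serve only to make the formal disjointness condition in the partition property hold and the SDD nodes attached to them are the constants $\top$ and $\bot$, the represented function does not essentially depend on $H$, so the obvious bookkeeping (contracting the vtree nodes $v_i$ into $v_i'$ to obtain the right-linear vtree over $X$) yields an SDD over $X$ alone for $f$ of the same asymptotic size.
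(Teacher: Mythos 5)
Your proposal is correct and takes essentially the same route as the paper's own proof: normalize the inputs, run Simulation~\ref{simulation:vtree_vee_one_obdds_to_sdds}, invoke Lemma~\ref{lemma:hauptlemma_vee_one_obdd_to_sdd}/Corollary~\ref{corollary:syntaktische_und_semantische_korrektheit_vee_one_obdd_to_sdd} for syntactic correctness and equivalence, and Lemma~\ref{lemma:size_vee_one_obdd_to_sdd} for the quadratic size bound. One caution on your closing remark: eliminating the help variables by ``contracting $v_i$ into $v_i'$'' is not mere bookkeeping, since at the case~(c) nodes the primes would then span the entire remaining subtree and the $\top/\bot$ subs would have no vtree slot left --- this is exactly why $H$ was introduced --- and the paper simply keeps the help variables, which suffices because the size remains polynomial in $n$.
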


\begin{proof}
	By assumption there exist polynomial-size unambiguous nondeterministic OBDDs $\mathcal{F}$ and $\overline{\mathcal{F}}$ respecting the same variable ordering and representing $f$ and $\overline{f}$, respectively. We use Simulation \ref{simulation:vtree_vee_one_obdds_to_sdds} in order to get the SDD $C$. On the one hand we know by Lemma \ref{lemma:hauptlemma_vee_one_obdd_to_sdd} that $C$ is syntactically correct and represents the same function as $\mathcal{F}$. On the other hand we know by Lemma \ref{lemma:size_vee_one_obdd_to_sdd} that the increase in size is at most quadratic in $|\mathcal{F}| + |\overline{\mathcal{F}}|$. 
\end{proof}

If we only have a representation of $f$ as a polynomial-size unambiguous nondeterministic OBDD, 
we can modify Simulation \ref{simulation:vtree_vee_one_obdds_to_sdds} in order to get an equivalent
structured d-DNNF representing $f$ in polynomial size.

\begin{corollary}
	Let $f$ be a Boolean function representable by polynomial-size unambiguous nondeterministic \textnormal{OBDDs}. 
Then, $f$ can also be represented by structured \textnormal{d-DNNFs} of polynomial size.
\end{corollary}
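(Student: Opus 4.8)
The plan is to run a stripped-down version of Simulation~\ref{simulation:vtree_vee_one_obdds_to_sdds} that uses only the single given $\vee_1$-OBDD $\mathcal{F}$. The key observation is that the partition-completion part of Case~(c) of that simulation — the nodes drawn from $\overline{\mathcal{F}}$, the nodes combined with $\bot$, and the help variables $H$ — is needed \emph{only} to make the children of each output disjunction a \emph{partition}, which SDDs require but structured $d$-DNNFs do not: for a $d$-DNNF it suffices that the children of every $\vee$-gate compute pairwise not-simultaneously-satisfiable functions. Once this requirement is dropped, the whole machinery collapses to essentially the textbook translation of an OBDD into a DNNF, applied also to the $\vee$-nodes of $\mathcal{F}$.

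Concretely, after deleting unreachable nodes and assuming (w.l.o.g., as in the simulation) that $\mathcal{F}$ is simple and respects $\pi = \mathrm{id}$, I would take the plain right-linear vtree $T_\pi$ over $x_1,\dots,x_n$, whose internal node $v_i$ has the leaf $x_i$ as left child and $v_{i+1}$ as right child. I would then assign to every node $u$ of $\mathcal{F}$ one sub-DNNF $C_u$ of $C$ (mirroring the tuple construction of the simulation, so sub-DNNFs are shared): a sink labeled by a constant becomes that constant; a decision node for $x_i$ with $0$-child $u_0$ and $1$-child $u_1$ becomes $(\overline{x_i} \wedge C_{u_0}) \vee (x_i \wedge C_{u_1})$ (Case~(b) without the help edges); and a $\vee$-node $u$ with children $u_1,\dots,u_k$ becomes a single $\vee$-gate with children $C_{u_1},\dots,C_{u_k}$. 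Exactly as in Simulation~\ref{simulation:vtree_vee_one_obdds_to_sdds} the recursion terminates because all variables appearing below a $\vee$-node of $\mathcal{F}$ lie in a proper suffix of the ordering, and each node of $\mathcal{F}$ contributes only a constant number of new vertices, so $|C| = O(|\mathcal{F}|)$ and hence polynomial.

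It then remains to verify that $C$ is a structured $d$-DNNF for $T_\pi$ that represents $f$. Decomposability and respecting $T_\pi$ are local and immediate: every $\wedge$-gate has one child a single literal over $\{x_i\}$ and the other child a sub-DNNF over $\vars{u_c} \subseteq \{x_{i+1},\dots,x_n\}$ (because $\mathcal{F}$ is an OBDD for $\pi = \mathrm{id}$), which is precisely the split realized at $v_i$, while $\vee$-gates carry no vtree constraint. The Shannon disjunctions $(\overline{x_i}\wedge\cdot)\vee(x_i\wedge\cdot)$ are trivially deterministic. Semantic correctness, $f_{C_u}=f_u$, follows by induction on the depth of $C_u$ exactly as in the analogue of Lemma~\ref{lemma:hauptlemma_vee_one_obdd_to_sdd}, now with the partition bookkeeping removed.

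The one step that needs care — and the only one beyond a routine check — is determinism for the disjunctions inherited from the $\vee$-nodes of $\mathcal{F}$. Here I would argue: if two distinct children $u_a, u_b$ of such a $\vee$-node $u$ computed functions both satisfied by some assignment $\rho$ of $\vars{u}$, then, since $u$ is reachable and the ordering places all of $\vars{u}$ after every variable queried on a source-to-$u$ path, prepending a source-to-$u$ path and extending $\rho$ to a full input of $\mathcal{F}$ yields two distinct accepting paths (one through $u_a$, one through $u_b$), contradicting unambiguity of $\mathcal{F}$. Consequently all $\vee$-gates of $C$ are deterministic, and $C$ is a structured $d$-DNNF of polynomial size representing $f$.
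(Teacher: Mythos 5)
Your proposal is correct and follows essentially the same route as the paper: it strips Simulation~\ref{simulation:vtree_vee_one_obdds_to_sdds} of the partition machinery (the sets $R^+$, $\overline{R}^+$, the nodes from $\overline{\mathcal{F}}$, and hence the need for an unambiguous nondeterministic OBDD for $\overline{f}$), turning each $\vee$-node of $\mathcal{F}$ into a plain $\vee$-gate over its children, exactly as the paper's proof idea indicates. Your explicit verification that these $\vee$-gates remain deterministic via the unambiguity of $\mathcal{F}$ (and that the help variables can be dropped in favour of the right-linear vtree $T_\pi$) is a sound filling-in of details the paper leaves implicit.
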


\begin{proofidea}
	We can modify Simulation \ref{simulation:vtree_vee_one_obdds_to_sdds} such that in case (c) only edges to children of $\vee$-nodes will be added to the SDD $C$. For this purpose, we do not have to determine the sets $R^+$ and $\overline{R}^+$. Furthermore, we do not need an unambiguous nondeterministic OBDD representing $\overline{f}$ as input because we do not need a partition in order to represent Boolean functions that are computed at $\vee$-nodes of $\mathcal{F}$. 
\end{proofidea}
\section{Simulating Structured DNNFs by Nondeterministic OBDDs}
\label{section:simulation_sdnnfs_by_vee_obdds}
In recent works it was shown how $\textnormal{DNNFs}$ can be simulated by equivalent nondeterministic $\textnormal{FBDDs}$ with an increase in size that remains bounded by a quasipolynomial factor \cite{BL15,Raz15}. These results were obtained by adapting a quasipolynomial simulation of decision-DNNFs by equivalent FBDDs proposed by Beame et al. \cite{BLR13}. In this section, we introduce another adaption in order to get a quasipolynomial simulation of structured DNNFs by equivalent nondeterministic OBDDs. Moreover, Razgon recently proved that there exists a quasipolynomial separation of SDDs (which are a subclass of d-SDNNFs) and nondeterministic OBDDs \cite{Raz17}. Therefore, the achieved upper bound concerning the increase in size is tight.

\subsection{Recap and main ideas}\label{subsection:simulation_sdnnfs_by_vee_obdds}

At the beginning, we will briefly recap the idea of constructing a nondeterministic FBDD $\mathcal{F}$ 
that computes the same Boolean function as a given DNNF $\mathcal{D}$ \cite{BL15,Raz15}. 
In order to construct $\mathcal{F}$ we have to remove all $\wedge$-nodes of $\mathcal{D}$ and replace them by decision nodes. Suppose we have an $\wedge$-node $u$ of $\mathcal{D}$ and its child nodes $u_l, u_r$. First, we need to find equivalent nondeterministic FBDDs $\mathcal{F}_{u_l}$ and $\mathcal{F}_{u_r}$ for the subgraphs $\mathcal{D}_{u_l}$ and $\mathcal{D}_{u_r}$, respectively. Next up, we need to combine these nondeterministic FBDDs in order to get a larger one for the expression $\Phi_u = \Phi_{u_l} \wedge \Phi_{u_r}$. For this purpose, redirect all $1$-sinks of $\mathcal{F}_{u_l}$ to the root of $\mathcal{F}_{u_r}$. That way we will get the needed conjunction of the given functions. Note that we get a syntactically correct nondeterministic FBDD by this conjunction since $\Phi_{u_l}$ and $\Phi_{u_r}$ depend on disjoint sets of variables because of the decomposability of $\mathcal{D}$. In general this first approach fails since the node $u_l$ can serve as input for more than one node. Then, it is not clear how to redirect the $1$-sinks of $\mathcal{F}_{u_l}$. Therefore, we make copies of subgraphs of $\mathcal{D}$ whenever the mentioned problem arises. Moreover, the children of $\wedge$-nodes will be reordered to bound the blow in size. An outgoing edge of an $\wedge$-node will be classified as a \emph{light edge}, if the subgraph of $\mathcal{D}$ that is connected by this edge does not contain more $\wedge$-nodes than the subgraph which is connected via the other edge. The latter will then be called a \emph{heavy edge}. If $(u,u_l)$ is the light edge of $u$, we redirect the $1$-sinks of $\mathcal{F}_{u_l}$ to the root of $\mathcal{F}_{u_r}$. As a consequence, each variable mentioned in $\mathcal{F}_{u_l}$ will be queried before every other variable mentioned in $\mathcal{F}_{u_r}$.\\

For the following adaption it is crucial to observe that the order in which the functions $\Phi_{u_l}$ and $\Phi_{u_r}$ will be evaluated (and therefore the order of queried variables) essentially depends on the definition of light and heavy edges. On the one hand, we will modify the presented definition of light and heavy edges with the aid of the vtree of a given SDNNF in order to obtain a variable ordering for the constructed nondeterministic OBDD. 
On the other hand, this new definition of light and heavy edges also ensures that the increase in size 
remains bounded by a quasipolynomial factor.
While the light and heavy edges of an $\wedge$-node are determined individually in the simulation of DNNFs 
by nondeterministic FBDDs, we will follow a more global approach using the information of a vtree to get a 
variable ordering. 




We know that the variables which can appear in the subgraphs $\mathcal{D}_{u_l}$ and $\mathcal{D}_{u_r}$ 
of an $\wedge$-node $u$ in a DNNF$_T$ with decomposition node $v$ are restricted to the variables 
mentioned in $T_{v_l}$ and $T_{v_r}$, respectively. The key idea is to globally define 
the light and heavy edges of all $\wedge$-nodes of a $\textnormal{DNNF}_T$ which have the same decomposition node. 
We introduce the following quantities to formalize this approach.

\begin{definition}
	Let $T$ be a vtree for the set of variables $X$ and $\mathcal{D}$ be a $\textnormal{DNNF}_T$. Furthermore, let $v$ be an inner node of $T$ and $v_l, v_r$ its children. We define the following sets and quantities:
	\begin{eqnarray*}
		A^v &:=& \{u \;|\; u \textnormal{ is an $\wedge$-node of $\mathcal{D}$, } \;\dnode{u} = v.\}, \\
		M^v &:=& |A^v|, \quad 
		M^v_l := \sum_{w \in T_{v_l}} M^w, \quad
		M^v_r := \sum_{w \in T_{v_r}} M^w\,.
	\end{eqnarray*}
\end{definition}

Our aim is to determine in a common way for all $\wedge$-nodes of a set $A^v$ which subgraph can be reached via a light or heavy edge. Hereby, we achieve that all nondeterministic OBDDs representing a function $\Phi_u = \Phi_{u_l} \wedge \Phi_{u_r}$ for $u \in A^v$ will respect the same variable ordering. With an eye toward the size of the constructed nondeterministic OBDD, we will classify the edges as follows.

\begin{definition}
	Let $T$ be a vtree for the set of variables $X$ and $\mathcal{D}$ be a $\textnormal{DNNF}_T$. Moreover, let $u$ be an $\wedge$-node of $\mathcal{D}$ with children $u_l, u_r$ and $\dnode{u} = v$ for a node $v$ of $T$. We classify the edges $(u, u_l)$ and $(u, u_r)$ in the following way: If $M^v_l \leq M^v_r$ holds, we call $(u, u_l)$ a \emph{light edge} and $(u, u_r)$ a \emph{heavy edge}. Otherwise, we classify the edges vice versa. We call the remainder of the edges of $\mathcal{D}$ \emph{neutral edges}.
\end{definition}

In order to define the light and heavy edges we used the fact that given an $\wedge$-node $u$ of $\mathcal{D}$ with $\dnode{u} = v$ the number of $\wedge$-nodes that can occur in the subgraphs $\mathcal{D}_{u_l}$ and $\mathcal{D}_{u_r}$ is restricted by $M^v_l$ and $M^v_r$, respectively. Thus, each time we cross a light edge on a path from the root 
to a leaf the number of $\wedge$-nodes that can possibly occur in the next lower subgraph will be halved. Next, we will use the quantities $M^v_l$ and $M^v_r$ in the same way to define a variable ordering.

\begin{definition}
	Let $\mathcal{D}$ be a $\textnormal{DNNF}_T$ and $T$ be a vtree for the set of variables $X = \{x_1, \dots, x_n\}$. For a pair of variables $x_i, x_j \in X$ with $i \neq j$ let $v$ be the unique node of $T$ with children $v_l,v_r$ such that $x_i \in \vars{v_l}$ and $x_j \in \vars{v_r}$ holds. Then, we order $x_i < x_j$, if and only if $M^v_l \leq M^v_r$. Otherwise, we arrange $x_j < x_i$. We define $\pi(\mathcal{D},T)$ to be the variable ordering induced by the previously defined relation $<$.
\end{definition}

So, why do we get a variable ordering by the defined relation? Intuitively, starting from the root $v$ of a given vtree $T$ we order the variables that occur in $T_{v_l}$ and $T_{v_r}$ such that each variable of $\vars{v_l}$ precedes each variable of $\vars{v_r}$ w.r.t. to $<$ or vice versa. Afterwards, we recursively proceed with the nodes $v_l$ and $v_r$. Later on, we will formally prove that $\pi(\mathcal{D},T)$ is the variable ordering of the constructed nondeterministic OBDD $\mathcal{F}$. We need the following sets in order to define the simulation.

\begin{definition}[\cite{BLR13,BL15}]
	Fix a $\textnormal{DNNF}_T$ $\mathcal{D}$. For a node $u$ in $\mathcal{D}$ and a path $P$ from the root to $u$, let $S(P)$ be the set of light edges along $P$ and $S(u) := \{ S(P) \;|\; P \textnormal{ is a path from the root to } u\}$.
\end{definition}

While we adjusted the definitions of light and heavy edges, we will use the same simulation proposed by Beame et al. in order to construct the nondeterministic OBDD \cite{BLR13,BL15}. We will interpret a leaf of the given $\textnormal{DNNF}_T$ $\mathcal{D}$ labeled by a variable $x_i \in X$ as a decision node that points to a $0$-sink if $x_i = 0$ and to a $1$-sink if $x_i = 1$, and vice versa for a leaf labeled by $\overline{x_i}$. Now, by the following simulation we get a nondeterministic OBDD with additional unlabeled nodes that can be removed in a further step.

\begin{simulation}[\cite{BLR13,BL15}]
	\label{simulation:structured_dnnf_to_vee_obdd}
	Let $\mathcal{D}$ be a $\textnormal{DNNF}_T$ and $T$ a vtree for the set of variables $X = \{x_1, \dots, x_n\}$. We will construct a nondeterministic $\textnormal{OBDD}$ $\mathcal{F}$ that computes the same Boolean function as $\mathcal{D}$. Its nodes are pairs $(u,s)$ where $u$ is a node of $\mathcal{D}$ and the set of light edges $s$ belongs to $S(u)$. The nodes $u' = (u,s)$ of $\mathcal{F}$ will be labeled in the following way:
	
	\begin{itemize}
		\item[(i)] If $u$ is a decision node for a variable $x_i \in X$ in $\mathcal{D}$, then $u'$ is a decision node for the same variable in $\mathcal{F}$.
		\item[(ii)] If $u$ is an $\wedge$-node in $\mathcal{D}$, then $u'$ remains unlabeled in $\mathcal{F}$.
		\item[(iii)] If $u$ is an $\vee$-node in $\mathcal{D}$, then $u'$ is also an $\vee$-node in $\mathcal{F}$.
		\item[(iv)] If $u$ is a $0$-sink in $\mathcal{D}$, then $u'$ is also a $0$-sink in $\mathcal{F}$.
		\item[(v)] If $u$ is a $1$-sink in $\mathcal{D}$ and $s = \emptyset$, then $u'$ is also a $1$-sink in $\mathcal{F}$. Otherwise, $u'$ remains unlabeled.
	\end{itemize}
	The node $(\textnormal{root}(\mathcal{D}), \emptyset)$ is the root of $\mathcal{F}$. The edges in $\mathcal{F}$ are of three types: 
	
	\begin{enumerate}
		\item For each light edge $e = (u,v)$ in $\mathcal{D}$ and each $s \in S(u)$, add the edge $((u,s), (v,s \cup \{e\}))$ to $\mathcal{F}$.
		\item For each neutral edge $e = (u,v)$ in $\mathcal{D}$ and each $s \in S(u)$, add the edge $((u,s), (v,s))$ to $\mathcal{F}$.
		\item For each heavy edge $(u,v_r)$ with corresponding light edge $e = (u,v_l)$, each $s \in S(u)$ and each $1$-sink $w$ in $\mathcal{D}_{v_l}$, add the edge $((w, s \cup \{e\}),(v_r, s))$ to $\mathcal{F}$.
	\end{enumerate}
\end{simulation}

In Figure \ref{figure:example_s_dnnf_to_vee_obdd} we give an example for the adapted simulation. The resulting nondeterministic OBDD $\mathcal{F}$ respects the variable ordering given by the sequence $w, x, y, z$. Note that we would only get a nondeterministic FBDD by the original simulation since the light edge $e_2$ would be classified as a heavy edge. On the one hand, there would exist a path in the resulting nondeterministic FBDD where $w < z$ holds. On the other hand, there would also be a path where $z < w$ holds. Hence, we cannot find a corresponding variable ordering.

\begin{figure}[t]
	\centering
	\begin{subfigure}[b]{0.3\textwidth}
		\hspace*{-1.5cm}
		\resizebox{!}{0.5\textheight}{\begin{tikzpicture}[>= stealth']

\tikzset{
	invisible/.style={opacity=0},
	visible on/.style={alt=#1{}{invisible}},
	alt/.code args={<#1>#2#3}{%
		\alt<#1>{\pgfkeysalso{#2}}{\pgfkeysalso{#3}} 
	},
}

\tikzstyle{vertex} = [draw, circle]
\tikzstyle{sink} = [draw, rectangle]
\tikzstyle{vertexVtree} = [draw, circle, inner sep = 2pt]


\node[vertex] (or) at (0,0) [label={[black, font = \footnotesize] above right:{$1$}}] {$\vee$};

\node[vertex] (and1) [below left = of or, label={[black, font = \footnotesize] above left:{$2$}}, , label={[itemizeblue, font = \small] right:{$I$}}] {$\wedge$};

\node[vertex] (and2) [below right = of or, label={[black, font = \footnotesize] above right:{$3$}}, , label={[itemizeblue, font = \small] left:{$I$}}] {$\wedge$};

\node[vertex] (w) [below left = of and1, label={[black, font = \footnotesize] above left:{$4$}}] {$w$};

\node[vertex] (and3) [below right = of and1, label={[black, font = \footnotesize] above right:{$5$}}, label={[itemizeblue, font = \small] left:{$III$}}] {$\wedge$};

\node[vertex] (and4) [below right = of and2, label={[black, font = \footnotesize] above right:{$6$}}, label={[itemizeblue, font = \small] left:{$II$}}] {$\wedge$};

\node[sink] (w_zero) [below left = 1 and 0.3 of w, label={[black, font = \footnotesize] above left:{$7$}}] {$0$};
\node[sink] (w_one) [below right = 1 and 0.3 of w, label={[black, font = \footnotesize] above right:{$8$}}] {$1$};

\node[vertex] (y) [below left = 1 and 0.3 of and3, label={[black, font = \footnotesize] above right:{$9$}}] {$y$};

\node[vertex] (z) [below right = 1 and 0.3 of and3, label={[black, font = \footnotesize] above right:{$10$}}] {$z$};

\node[vertex] (w_2) [below left = 1 and 0.3 of and4, label={[black, font = \footnotesize] above right:{$11$}}] {$w$};
\node[vertex] (x) [below right = 1 and 0.3 of and4, label={[black, font = \footnotesize] above right:{$12$}}] {$x$};

\node[sink] (y_zero) [below left = 1 and 0.3 of y, label={[black, font = \footnotesize] below:{$13$}}] {$0$};
\node[sink] (y_one) [below = 0.9 of y, label={[black, font = \footnotesize] below:{$14$}}] {$1$};

\node[sink] (z_zero) [below left = 1 and 0.3 of z, label={[black, font = \footnotesize] below:{$15$}}] {$0$};
\node[sink] (z_one) [below = 0.9 of z, label={[black, font = \footnotesize] below:{$16$}}] {$1$};

\node[sink] (w_2_zero) [below left = 1 and 0.3 of w_2, label={[black, font = \footnotesize] below:{$17$}}] {$0$};
\node[sink] (w_2_one) [below = 0.9 of w_2, label={[black, font = \footnotesize] below:{$18$}}] {$1$};
\node[sink] (x_zero) [below left = 1 and 0.3 of x, label={[black, font = \footnotesize] below:{$19$}}] {$0$};
\node[sink] (x_one) [below = 0.9 of x, label={[black, font = \footnotesize] below:{$20$}}] {$1$};

\draw (or) to (and1);
\draw (or) to (and2);

\draw (and1) to node [auto, swap] {$e_1$} (w);
\draw (and1) to (and3);
\draw (and2) to (z);
\draw (and2) to node [auto] {$e_2$} (and4);

\draw[dashed, ->] (w) to (w_zero);
\draw[->] (w) to (w_one);

\draw (and3) to node [auto, swap] {$e_3$} (y);
\draw (and3) to (z);
\draw (and4) to node [auto, swap] {$e_4$} (w_2);
\draw (and4) to (x);

\draw[->] (y) to (y_one);
\draw[->, dashed] (y) to (y_zero);
\draw[dashed, ->] (z) to (z_zero);
\draw[->] (z) to (z_one);

\draw[->] (w_2) to (w_2_zero);
\draw[->, dashed] (w_2) to (w_2_one);
\draw[->] (x) to (x_zero);
\draw[->, dashed] (x) to (x_one);

\node[vertexVtree] (v1) [below = 2.25 of z_zero, label={[itemizeblue, font = \small] above right:{$I$}}] {};
\node[vertexVtree] (v2) [below left = of v1, label={[itemizeblue, label distance = 0.25cm, font = \small] above left:{$II$}}] {};
\node[vertexVtree] (v3) [below right = of v1, label={[itemizeblue, label distance = 0.25cm, font = \small] above right:{$III$}}] {};
	
\node (w) [below left = 1 and 0.3 of v2] {$w$};
\node (x) [below right = 1 and 0.3 of v2] {$x$};
\node (y) [below left = 1 and 0.3 of v3] {$y$};
\node (z) [below right = 1 and 0.3 of v3] {$z$};

\draw (v1) to (v2);
\draw (v1) to (v3);
	
\draw (v2) to (w);
\draw (v2) to (x);
	
\draw (v3) to (y);
\draw (v3) to (z);

\end{tikzpicture}}
		\caption{$\textnormal{DNNF}_T$ $\mathcal{D}$ and vtree $T$}
		\label{figure:example_s_dnnf_to_vee_obdd_a}
	\end{subfigure}
	\begin{subfigure}[b]{0.6\textwidth}
		\hspace{3cm}\resizebox{!}{0.55\textheight}{\begin{tikzpicture}[>= stealth']

\tikzstyle{vertex} = [draw, circle, minimum size=0.7cm]
\tikzstyle{sink} = [draw, rectangle]

\node[vertex] (or) at (0,0) [label={[black, font = \footnotesize] above right:{$(1, \emptyset)$}}] {$\vee$};

\node[vertex] (noop_1) [below left = of or, label={[black, font = \footnotesize] above left:{$(2, \emptyset)$}}] {};

\node[vertex] (noop_2) [below right = of or, label={[black, font = \footnotesize] above right:{$(3, \emptyset)$}}] {};

\node[vertex] (w_1) [below = of noop_1, label={[black, font = \footnotesize] above left:{$(4, e_1)$}}] {$w$};

\node[vertex] (w_2) [below = of noop_2, label={[black, font = \footnotesize] above right:{$(6, e_2)$}}] {};

\node[sink] (w_1_zero) [below left = 1 and 0.7 of w_1, label={[black, font = \footnotesize] below:{$(7, e_1)$}}] {$0$};

\node[vertex] (noop_3) [below = of w_1, label={[black, font = \footnotesize] above right:{$(8, e_1)$}}] {};

\node[vertex] (noop_4) [below = of w_2, label={[black, font = \footnotesize] above right:{$(11, e_2e_4)$}}] {$w$};

\node[sink] (w_2_zero) [below left = 1 and 0.7 of noop_4, label={[black, font = \footnotesize] below:{$(17, e_2e_4)$}}] {$0$};

\node[vertex] (noop_5) [below = of noop_3, label={[black, font = \footnotesize] above left:{$(5, \emptyset)$}}] {};

\node[vertex] (noop_6) [below = of noop_4, label={[black, font = \footnotesize] above right:{$(18, e_2e_4)$}}] {};

\node[vertex] (x_1) [below = of noop_5, label={[black, font = \footnotesize] above left:{$(9, e_3)$}}] {$y$};

\node[vertex] (x_2) [below = of noop_6, label={[black, font = \footnotesize] above right:{$(12, e_2)$}}] {$x$};

\node[sink] (x_1_zero) [below left = 1 and 0.7 of x_1, label={[black, font = \footnotesize] below:{$(13, e_3)$}}] {$0$};

\node[sink] (x_2_zero) [below left = 1 and 0.7 of x_2, label={[black, font = \footnotesize] below:{$(19,e_2)$}}] {$0$};

\node[vertex] (noop_7) [below = of x_1, label={[black, font = \footnotesize] above right:{$(14, e_3)$}}] {};

\node[vertex] (noop_8) [below = of x_2, label={[black, font = \footnotesize] above right:{$(20, e_2)$}}] {};

\node[vertex] (y) [below right = of noop_7, label={[black, font = \footnotesize] below:{$(10, \emptyset)$}}] {$z$};

\node[sink] (y_zero) [left = 1 of y, label={[black, font = \footnotesize] below:{$(15, \emptyset)$}}] {$0$};
\node[sink] (y_one) [right = 1 of y, label={[black, font = \footnotesize] below:{$(16, \emptyset)$}}] {$1$};

\draw[->] (or) to (noop_1);
\draw[->] (or) to (noop_2);

\draw[->] (noop_1) to (w_1);
\draw[->] (noop_2) to (w_2);

\draw[->, dashed] (w_1) to (w_1_zero);
\draw[->] (w_1) to (noop_3);
\draw[->] (noop_4) to (w_2_zero);
\draw[->] (w_2) to (noop_4);

\draw[->] (noop_3) to (noop_5);
\draw[->, dashed] (noop_4) to (noop_6);

\draw[->] (noop_5) to (x_1);
\draw[->] (noop_6) to (x_2);

\draw[->, dashed] (x_1) to (x_1_zero);
\draw[->] (x_1) to (noop_7);
\draw[->] (x_2) to (x_2_zero);
\draw[->, dashed] (x_2) to (noop_8);

\draw[->] (noop_7) to (y);
\draw[->] (noop_8) to (y);

\draw[->, dashed] (y) to (y_zero);
\draw[->] (y) to (y_one);

\end{tikzpicture}}
		\caption{$\vee$-OBDD $\mathcal{F}$}
		\label{figure:example_s_dnnf_to_vee_obdd_b}
	\end{subfigure}	
	\caption{(a) A $\textnormal{DNNF}_T$ $\mathcal{D}$ whose leaves are interpreted as decision nodes respecting the depicted vtree $T$ for the set of variables $X = \{w,x,y,z\}$. $\mathcal{D}$ computes the Boolean function $\Phi_{\mathcal{D}}(w,x,y,z) = wyz \vee \overline{w} \hspace*{-1pt}\ \overline{x}z$. The light edges are marked by $e_1, \dots, e_4$ and the decomposition nodes are labeled by I, II and III as in the vtree. 
(b) The nondeterministic OBDD $\mathcal{F}$ resulting from the given simulation with input $\mathcal{D}$. The variable ordering of $\mathcal{F}$ is given by $\pi(\mathcal{D}, T)$ resulting in the sequence $w,x,y,z$.}
	\label{figure:example_s_dnnf_to_vee_obdd}
\end{figure}

\subsection{Size and correctness}\label{subsection:simulation_sdnnfs_by_vee_obdds_size_and_correctness}

First, we have a look at the size of the constructed nondeterministic OBDD.

\begin{lemma}
	\label{lemma:sdnnfs_to_vee_obdds_size}
	Let $\mathcal{D}$ be a $\textnormal{DNNF}_T$ with $M$ $\wedge$-nodes, $N$ be the total number of nodes and $L$ the maximum number of light edges from the root to a leaf. Then, the constructed nondeterministic $\textnormal{OBDD}$ $\mathcal{F}$ of Simulation \ref{simulation:structured_dnnf_to_vee_obdd} contains at most $N(M+1)^L \leq N \cdot 2^{\log^2(N)}$ nodes.
\end{lemma}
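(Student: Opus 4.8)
The plan is to read the node set of $\mathcal{F}$ directly off Simulation~\ref{simulation:structured_dnnf_to_vee_obdd}: every node is a pair $(u,s)$ with $u$ a node of $\mathcal{D}$ and $s\in S(u)$, so $|\mathcal{F}|\le\sum_{u}|S(u)|\le N\cdot\max_{u}|S(u)|$. Hence the first inequality reduces to the claim that $|S(u)|\le(M+1)^{L}$ for every node $u$ of $\mathcal{D}$.

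For this claim I would argue as follows. First, any path $P$ from $\rootset{\mathcal{D}}$ to $u$ extends to a path from the root to a leaf, and the light edges along $P$ form a subset of the light edges along that extension; hence $|s|\le L$ for every $s\in S(u)$. Next, the two outgoing edges of each $\wedge$-node are classified as exactly one light and one heavy edge, so $\mathcal{D}$ has at most $M$ light edges in total. Mapping $s\in S(u)$ to the sequence obtained by listing its elements in the order in which they occur along some path $P$ with $S(P)=s$ is an injection of $S(u)$ into the set of sequences of length at most $L$ over an alphabet of size at most $M$; therefore $|S(u)|\le\sum_{j=0}^{L}M^{j}\le(M+1)^{L}$, the last step by a one-line induction on $L$. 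This already yields $|\mathcal{F}|\le N(M+1)^{L}$.

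It remains to bound $L$ for the second inequality. Here I would exploit the quantities from the definition of light and heavy edges: for a vtree node $v$ set $\mu(v):=\sum_{w\in T_{v}}M^{w}$, so that $\mu(\rootset{T})=M$, $\mu(v)=M^{v}+M^{v}_{l}+M^{v}_{r}$, $M^{v}_{l}=\mu(v_{l})$, and $M^{v}_{r}=\mu(v_{r})$. Fix a root-to-leaf path whose light edges in order are $e_{1},\dots,e_{L}$, where $e_{i}$ leaves an $\wedge$-node with decomposition node $v_{i}$. Since $e_{i}$ is light, it enters a subgraph of $\mathcal{D}$ that respects the subtree $T_{v'}$, where $v'$ is the child of $v_{i}$ with the smaller $M$-value; thus $v_{i+1}\in T_{v'}$ and $\mu(v_{i+1})\le\mu(v')\le\frac{1}{2}\bigl(M^{v_{i}}_{l}+M^{v_{i}}_{r}\bigr)\le\frac{1}{2}\bigl(\mu(v_{i})-1\bigr)$, using $M^{v_{i}}\ge 1$. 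Setting $a_{i}=\mu(v_{i})$ this reads $a_{i+1}+1\le\frac{1}{2}(a_{i}+1)$, so from $a_{1}\le M$ and $a_{L}\ge 1$ we obtain $2\le(M+1)/2^{L-1}$, i.e. $L\le\log_{2}(M+1)$. Finally $\mathcal{D}$ has at least one non-$\wedge$-node (a sink), so $M+1\le N$, and therefore $N(M+1)^{L}\le N\cdot N^{\log_{2}N}=N\cdot 2^{\log^{2}N}$.

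The two counting steps in the second paragraph are routine once the node set of $\mathcal{F}$ has been identified. The main obstacle is the third paragraph: one must make precise that the light/heavy classification is a single global choice shared by all $\wedge$-nodes with a common decomposition node, verify that the subgraph of $\mathcal{D}$ reached by a light edge really does respect the corresponding vtree subtree (this is exactly the $\textnormal{DNNF}_{T}$ property together with the uniqueness of $\dnode{\cdot}$), and track the ``$-1$'' carefully through the recursion so as to land the clean bound $L\le\log_{2}(M+1)$ rather than a weaker one.
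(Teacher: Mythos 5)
Your proposal is correct and follows essentially the same route as the paper: it counts the nodes $(u,s)$ of $\mathcal{F}$ via $|S(u)|\le (M+1)^L$ (a step the paper simply defers to Beame and Liew) and bounds $L$ by the halving of the number of $\wedge$-nodes reachable below each light edge, using the quantities $M^v_l, M^v_r$ together with $M+1\le N$. Your bookkeeping $L\le\log_2(M+1)$ is in fact slightly more careful than the paper's intermediate claim $M\ge 2^L$ (which fails in edge cases such as $M=1$, $L=1$), but both arguments yield the stated bound $N(M+1)^L\le N\cdot 2^{\log^2 N}$.
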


\begin{proof}
	The upper bound of $|\mathcal{F}| \leq N(M+1)^L$ can be derived analogously to the upper bound of the simulation of $\textnormal{DNNF}$ by $\vee$-$\textnormal{FBDDs}$ from Beame and Liew \cite{BL15}. For that to happen, one has to determine the number of nodes that are created by the simulation. Now, we have a look at the second upper bound depending only on $N$.\\
	
	Consider a path from the root of $\mathcal{D}$ to a leaf containing $L$ light edges that must exist by premise. For an $\wedge$-node $u$ on that path with children $u_l,u_r$ let $v$ be the node of $T$ such that $\dnode{u} = v$. Let also be $v_l, v_r$ the children of $v$. By definition there exist $M^v + M^v_l + M^v_r$ $\wedge$-nodes having a decomposition node which is located in the subtree $T_v$. The subgraph $\mathcal{D}_{u_l}$ contains at most $M^v_l$ $\wedge$-nodes. Assume to the contrary that there exists an $\wedge$-node $u'$ in $\mathcal{D}_{u_l}$ such that $\dnode{u'} = v'$ for a node $v'$ which is not located in $T_{v_l}$. Then, $\mathcal{D}_{u_l}$ would contain at least one node labeled by a variable $x \notin \vars{v_l}$ that would be a contradiction to the premise of $\mathcal{D}$ being a $\textnormal{DNNF}_T$. Analogously, the subgraph $\mathcal{D}_{u_r}$ contains at most $M^v_r$ $\wedge$-nodes.\\
	
	W.l.o.g. let $(u,u_l)$ be the light edge of the $\wedge$-node $u$. Therefore, it holds that $M^v_l \leq M^v_r$. I.e., the number of $\wedge$-nodes which can be located in $\mathcal{D}_{u_l}$ is at most half the number of $\wedge$-nodes that can possibly be located in $\mathcal{D}_u$. If $(u,u_r)$ is the light edge of $u$, an analog result can be derived. Hence, each time we pass a light edge on the given path, the number of $\wedge$-nodes that can be located in the next lower subgraph is at least halved. Moreover, in addition to the $M$ $\wedge$-nodes there has to be at least one node labeled by a variable or literal because there must be $\wedge$-nodes which are connected to literals or variables as inputs. Altogether, we get $N > M \geq 2^L$. Now, we get the claimed upper bound by using the mentioned inequalities:
	\begin{eqnarray*}
		N(M + 1)^L &=& N \cdot 2^{\log((M + 1)^L)} = N \cdot 2^{L\log(M + 1)} \\
		&\leq& N \cdot 2^{\log(M)\log(M + 1)} \\
		&\leq& N \cdot 2^{\log^2(N)}\,. 
	\end{eqnarray*}
\end{proof}

Next up, we show an extension of Lemma 5.4 from Beame and Liew \cite{BL15}
which can subsequently used in order to show that the constructed nondeterministic OBDD is syntactically correct. Let $\mathcal{D}_1, \mathcal{D}_2$ be two SDNNFs. We use the notation $\mathcal{D}_1 \subset \mathcal{D}_2$ which means that $\mathcal{D}_1$ is a subgraph of $\mathcal{D}_2$. Moreover, for two variables $x_i, x_j \in X$ we have $x_i \leq x_j$ if and only if $x_i < x_j$ w.r.t. $\pi^*$ or $x_i = x_j$ holds.

\begin{lemma}
	\label{lemma:s_dnnf_to_vee_obdd_help_correct}
	Let $T$ be a vtree for the variable set $X = \{x_1, \dots, x_n\}$,
        $\mathcal{D}$ be a $\textnormal{DNNF}_T$, 
 and $\mathcal{F}$ be the nondeterministic $\textnormal{OBDD}$ resulting from Simulation \ref{simulation:structured_dnnf_to_vee_obdd}. Furthermore, let $\pi^* = \pi(\mathcal{D},T)$ be the induced variable ordering. If $u$ is a leaf in $\mathcal{D}$ labeled by a variable $x_i \in X$ and there exists a nontrivial path (consisting of at least one edge) between $(u,s)$ and $(v,s')$ in $\mathcal{F}$, then there exists no node in $\mathcal{D}_v$ labeled by a variable $x_j$ fulfilling $x_j \leq x_i$ w.r.t. $\pi^*$.
\end{lemma}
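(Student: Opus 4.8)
The plan is to isolate the combinatorial core of the statement — that on every branch of the construction the ``light side'' of an $\wedge$-node is read entirely before its ``heavy side'' — and then propagate it through the three edge types of $\mathcal F$ by an induction on path length. Throughout, write $\ell(a)$ and $h(a)$ for the light and heavy child of an $\wedge$-node $a$. \emph{Step 1.} I would first record: if $e=(a,\ell(a))$ is a light edge of $\mathcal D$ and $v=\dnode{a}$, then every variable in $\vars{\ell(a)}$ strictly precedes every variable in $\vars{h(a)}$ with respect to $\pi^{*}$. This is immediate from the two definitions involved: whether $(a,\ell(a))$ is light is decided by comparing $M^{v}_{l}$ and $M^{v}_{r}$, and $\pi(\mathcal D,T)$ orders two variables lying in the two children-subtrees of $v$ by exactly the same comparison; since $\mathcal D$ is a $\textnormal{DNNF}_{T}$, $\vars{\ell(a)}$ and $\vars{h(a)}$ are contained in those two subtrees, so the comparisons line up. A short two-case check (light child $=$ left child of $a$, or $=$ right child) finishes it.

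\emph{Step 2.} Call a node $(w,\sigma)$ of $\mathcal F$ \emph{$x_{i}$-safe} if every variable in $\vars{w}$ is $>x_{i}$ in $\pi^{*}$ and, for every light edge $(a,\ell(a))\in\sigma$, every variable in $\vars{h(a)}$ is $>x_{i}$. I would show that $x_{i}$-safety is preserved along every edge of $\mathcal F$. For a light edge (type~1) or a neutral edge (type~2) the target node of $\mathcal F$ has first component a child of the first component of the source, so its $\mathcal D$-subgraph is contained in that of the source; moreover the light edge that may be appended to $\sigma$ points to a heavy subgraph again contained in the source's subgraph, so its clause holds too. For a jump (type~3) out of a $1$-sink $w\in\mathcal D_{\ell(a)}$ to $(h(a),\sigma\setminus\{e\})$ with $e=(a,\ell(a))\in\sigma$, the required bound on $\vars{h(a)}$ is exactly one of the clauses making $(w,\sigma)$ safe, and deleting $e$ from $\sigma$ only removes a constraint. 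Hence, by induction on path length, every node reachable from an $x_{i}$-safe node by a (possibly trivial) path is again $x_{i}$-safe.

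\emph{Step 3.} Now take the leaf $u$ labeled $x_{i}$ and a nontrivial path $(u,s)\leadsto(v,s')$ in $\mathcal F$. As $u$ is a decision node for $x_{i}$ in $\mathcal F$ and its decision edges are neutral, its out-edges lead to a $0$-sink node and a $1$-sink node (each paired with $s$); so if the path has length one then $v$ is such a sink and $\vars{v}=\emptyset$, and we are done. Otherwise the second node is $(w,s)$ with $w$ the $1$-sink and $s\neq\emptyset$, since a $0$-sink node and a $1$-sink node with empty second component have no out-edges in $\mathcal F$; and the only out-edges of $(w,s)$ are jumps, so the third node is $(h(a),s\setminus\{e\})$ for some light edge $e=(a,\ell(a))\in s$ with $w\in\mathcal D_{\ell(a)}$. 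The key observation is that $s=S(P)$ for some root-to-$u$ path $P$ in $\mathcal D$, so every light edge $(b,\ell(b))\in s$ lies on $P$; hence $u\in\mathcal D_{\ell(b)}$ and $x_{i}\in\vars{\ell(b)}$, and by Step~1 every variable of $\vars{h(b)}$ is then $>x_{i}$. Applying this with $b=a$ (for the first clause of safety) and with every $b$ such that $(b,\ell(b))\in s\setminus\{e\}$ (for the second) shows $(h(a),s\setminus\{e\})$ is $x_{i}$-safe. By Step~2 the endpoint $(v,s')$, reached from $(h(a),s\setminus\{e\})$ by a possibly trivial path, is $x_{i}$-safe, so no variable in $\vars{v}$ is $\le x_{i}$; that is, $\mathcal D_{v}$ contains no node labeled by a variable $x_{j}\le x_{i}$.

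\emph{Main obstacle.} Steps~1 and~2 are essentially bookkeeping. The delicate point is Step~3's handling of the second component $s$: one must argue that membership $s\in S(u)$ forces $u$ — hence the $1$-sink just below it — into \emph{every} light subgraph indexed by the edges of $s$, and that this persists after the jump, so that the safety invariant can be switched on at the post-jump node. Once that is in place, the type-3 edges, the only ones moving to a genuinely new part of $\mathcal D$, are pinned down by Step~1, and everything else is the routine induction of Step~2.
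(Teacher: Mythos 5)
Your proof is correct, and it takes a genuinely different route from the paper's. The paper argues by contradiction: it chooses a node $v$ on the path whose subgraph $\mathcal{D}_v$ is \emph{maximal} among those containing a variable $x_j \leq x_i$, shows that the last edge into $(v,s')$ must then be of the third type (a jump over a heavy edge $(z,v_r)$ with light companion $e=(z,v_l)$), and splits into two cases: if $e$ already belongs to $s$, then $x_i \in \vars{v_l}$ and either decomposability is violated (case $i=j$) or the comparison $M^{\tau}_l \leq M^{\tau}_r$ that makes $e$ light contradicts $x_j < x_i$ in $\pi^*$; if $e$ was collected only later on the path, the $\wedge$-node $z$ is itself reachable from $(u,s)$ and $\mathcal{D}_v \subset \mathcal{D}_z$ contradicts maximality. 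You replace this minimal-counterexample, last-edge analysis by a forward invariant (your ``$x_i$-safety'', which additionally constrains the heavy siblings of all pending light edges recorded in the second component), switched on right after the first jump and preserved edge by edge; the paper's case distinction $e \in s$ versus $e \notin s$ then disappears, because the invariant carries at every later jump exactly the information the paper re-derives via maximality. Both arguments rest on the same core fact, your Step~1: the light/heavy classification at a decomposition node and the ordering $\pi(\mathcal{D},T)$ are decided by the very same comparison $M^v_l \leq M^v_r$, so light-side variables strictly precede heavy-side variables; and both use that $s \in S(u)$ forces $u$ (hence $x_i$) into every light subgraph indexed by an edge of $s$, which you correctly single out as the delicate point. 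The paper's proof is shorter to state; yours is more modular, gives a direct rather than indirect argument, and the invariant formulation would transfer essentially unchanged to the unstructured FBDD setting of Beame--Liew, so I see no gap to repair.
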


\begin{proofidea}
	If we assume to the contrary that there exists such a nontrivial path between $(u,s)$ and $(v,s')$ in $\mathcal{F}$ and there is a node labeled by a variable $x_j \leq x_i$ in $\mathcal{D}_v$, we either get a violation of the decomposability of $\mathcal{D}$ or that $\mathcal{D}$ does not respect $T$ which leads to a contradiction. The entire proof can be found in 
Appendix C.
\end{proofidea}

Now, we are able to prove that the constructed nondeterministic OBDD is syntactically correct.

\begin{lemma}
	Let $T$ be a vtree for the set of variables $X = \{x_1, \dots, x_n\}$,
        $\mathcal{D}$ be a $\textnormal{DNNF}_T$, 
and $\mathcal{F}$ the nondeterministic $\textnormal{OBDD}$ resulting from Simulation \ref{simulation:structured_dnnf_to_vee_obdd}. Then, $\mathcal{F}$ is a syntactically correct nondeterministic $\textnormal{OBDD}$ respecting the variable ordering $\pi^* = \pi(\mathcal{D}, T)$.
\end{lemma}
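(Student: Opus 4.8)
There are two things to verify: that $\mathcal{F}$ has the syntactic form of a nondeterministic $\textnormal{OBDD}$ — one source, sinks labelled by the constants $0,1$, every decision node carrying exactly one $0$-edge and one $1$-edge, every $\vee$-node having unlabelled out-edges, and the underlying graph acyclic — and that on every directed path the decision-node labels form a subsequence of $\pi^{*}=\pi(\mathcal{D},T)$. The first part is a routine inspection of the five node types and three edge types of Simulation~\ref{simulation:structured_dnnf_to_vee_obdd}; the substance of the statement is the ordering property, and it will drop out of Lemma~\ref{lemma:s_dnnf_to_vee_obdd_help_correct} with essentially no extra work.

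For the syntactic form I would run through the node types. The designated source is $(\rootset{\mathcal{D}},\emptyset)$; a $0$-sink of $\mathcal{D}$ becomes a $0$-sink and a $1$-sink reached with empty light-edge set becomes a $1$-sink, so every sink of $\mathcal{F}$ carries a constant and has no out-edge. A decision node of $\mathcal{D}$ for $x_i$ — i.e.\ a leaf of the underlying $\textnormal{DNNF}$, read as a decision node — has in $\mathcal{D}$ exactly two out-edges, to its $0$-sink and its $1$-sink, and both are neutral (light and heavy edges are defined only at $\wedge$-nodes); by edge rule~(2) the copy $(u,s)$ therefore has exactly one $0$-edge and one $1$-edge, to $(0\text{-sink},s)$ and $(1\text{-sink},s)$. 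An $\vee$-node of $\mathcal{D}$ has only neutral out-edges, so by rule~(2) its copy is an $\vee$-node with unlabelled out-edges. The remaining nodes — copies of $\wedge$-nodes, and $1$-sinks reached with a non-empty light-edge set — are unlabelled ``pass-through'' nodes (an $\wedge$-copy has the single rule-(1) edge along its light edge, a reused $1$-sink has the rule-(3) edges to the corresponding heavy sibling); these are exactly the nodes flagged as removable in the text, and they cause no syntactic trouble. Acyclicity I would argue as follows: edges of types~(1) and~(2) are inherited from $\mathcal{D}$ and hence strictly descend in a fixed topological order of $\mathcal{D}$, while a type-(3) edge runs from a $1$-sink lying inside the light child's subgraph $\mathcal{D}_{v_l}$ of an $\wedge$-node to the root $v_r$ of its heavy sibling's subgraph; since $\mathcal{D}$ is decomposable and respects $T$ we have $\vars{v_l}\cap\vars{v_r}=\emptyset$, and one checks that no directed path in $\mathcal{D}$ can re-enter $\mathcal{D}_u$ from within $\mathcal{D}_{v_r}$, so a type-(3) jump cannot close a cycle. (Equivalently, one can first prove the ordering property below and then observe that a directed cycle would have to avoid decision nodes entirely, since a repeated decision node forces a repeated variable on a path; a cycle through $\wedge$-copies, $\vee$-copies and reused $1$-sinks only is ruled out by the same sibling-disjointness argument.)

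For the ordering property, fix a directed path in $\mathcal{F}$ and let $(u_1,s_1),(u_2,s_2),\dots$ be the decision nodes it passes through, in order, with decision variables $x_{i_1},x_{i_2},\dots$. Between any two consecutive ones there is a nontrivial directed path from $(u_k,s_k)$ to $(u_{k+1},s_{k+1})$ in $\mathcal{F}$, and $u_k$ is a leaf of $\mathcal{D}$ with decision variable $x_{i_k}$. Applying Lemma~\ref{lemma:s_dnnf_to_vee_obdd_help_correct} with $u=u_k$ and $v=u_{k+1}$ gives that $\mathcal{D}_{u_{k+1}}$ contains no node labelled by a variable $x_j$ with $x_j\le x_{i_k}$ w.r.t.\ $\pi^{*}$; but $u_{k+1}$ is itself a leaf, so $\mathcal{D}_{u_{k+1}}=\{u_{k+1}\}$ carries exactly the variable $x_{i_{k+1}}$, whence $x_{i_k}<x_{i_{k+1}}$ w.r.t.\ $\pi^{*}$. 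Thus along every directed path the decision variables are strictly increasing w.r.t.\ $\pi^{*}$ — in particular no variable is queried twice and the sequence is a subsequence of $\pi^{*}$ — which is the $\textnormal{OBDD}$ property. Together with the shape check this proves that $\mathcal{F}$ is a syntactically correct nondeterministic $\textnormal{OBDD}$ respecting $\pi^{*}=\pi(\mathcal{D},T)$.

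\textbf{Main obstacle.} The genuinely substantive reasoning about the variable ordering has already been carried out in Lemma~\ref{lemma:s_dnnf_to_vee_obdd_help_correct}; what remains is to see that it suffices to compare \emph{consecutive} decision nodes along a path, and that reading the lemma at the leaf $v=u_{k+1}$ yields precisely the strict inequality needed. The only other point requiring care is acyclicity in the presence of the ``sideways'' type-(3) redirection edges, together with the corner case of a constant sink that may be shared between sibling subgraphs — there decomposability has to be invoked through the structure of $\mathcal{D}$ rather than through the (empty) variable set of the shared sink.
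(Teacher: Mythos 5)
Your proposal is correct and follows essentially the same route as the paper: the heart of both arguments is applying Lemma \ref{lemma:s_dnnf_to_vee_obdd_help_correct} with $u$ a leaf of $\mathcal{D}$ and $v$ a later decision node on the path to force the ordering $\pi^*=\pi(\mathcal{D},T)$ (you state it directly for consecutive decision nodes, the paper phrases it as a contradiction for an arbitrary violating pair — the same argument). The only difference is that the paper outsources the ``syntactically correct nondeterministic FBDD with extra unlabelled nodes'' part to the proof of Lemma 5.4 of Beame and Liew, whereas you sketch the shape and acyclicity check yourself; your sketch is consistent with that reference, though the acyclicity claim for type-(3) edges would need the same amount of detail as in \cite{BL15} to be fully rigorous.
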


\begin{proof}
	We have to show that $\mathcal{F}$ is a $\textnormal{BDD}$ which suffices the property that decision nodes are labeled by a subsequence of $\pi^*$ on each directed path.\\
	
	First, we could show with the help of Lemma \ref{lemma:s_dnnf_to_vee_obdd_help_correct} that $\mathcal{F}$ is a syntactically correct nondeterministic $\textnormal{FBDD}$ with further unlabeled nodes. This can be done like in the proof of Lemma 5.4. from Beame and Liew \cite{BL15}. Now, we only have to show that $\mathcal{F}$ is respecting the variable ordering $\pi^*$.\\
	
	Suppose there is a directed path $P$ in $\mathcal{F}$ such that the decision nodes appearing on $P$ 
are not labeled by a subsequence of $\pi^*$. Then, there also exists a subpath of $P$ with nodes
$(u,s), \dots, (v,s')$ fulfilling the following properties: $(u,s)$ is a decision node labeled by a variable $x_i$, $(v,s')$ is a decision node labeled by $x_j$ with $i \neq j$, $x_j < x_i$ w.r.t. $\pi^*$. The node $(u,s)$ is labeled by $x_i$ in $\mathcal{F}$ because $u$ is a leaf in the given $\textnormal{DNNF}_T$ $\mathcal{D}$ labeled by the same variable. Analogously, we know that $v$ is a decision node labeled by $x_j$ in $\mathcal{D}$. By usage of Lemma \ref{lemma:s_dnnf_to_vee_obdd_help_correct} we know that the subgraph $\mathcal{D}_v$ does not contain a decision node labeled by a variable $x_j$ such that $x_j \leq x_i$ w.r.t. $\pi^*$. Now, we have the desired contradiction because $\mathcal{D}_v$ contains $v$ which is labeled by $x_j$ and $x_j < x_i$. 
\end{proof}

In the following we assume that $\vee$- and $\wedge$-nodes of the given $\textnormal{DNNF}$ do not have constants as inputs in order to simplify the proofs of correctness and completeness of the simulation. Otherwise, we could simplify a given $\textnormal{DNNF}$ by propagating the constants according to the semantics of $\vee$- and $\wedge$-nodes. Certificates are subgraphs of a given $\textnormal{DNNF}$ fulfilling the following properties.

\begin{definition}[\cite{BCM16}]
	Let $\mathcal{D}$ be a $\textnormal{DNNF}$ for the set of variables $X$. A \emph{certificate} of $\mathcal{D}$ is a $\textnormal{DNNF}$ $\mathcal{C}$ for $X$ with the following properties:
	\begin{itemize}
		\item[(i)] The \textnormal{DNNF} $\mathcal{C}$ is a subgraph of $\mathcal{D}$ ($\mathcal{C} \subset \mathcal{D}$).
		\item[(ii)] The roots (output gates) of $\mathcal{C}$ and $\mathcal{D}$ coincide.
		\item[(iii)] If $\mathcal{C}$ contains an $\wedge$-node $u$, $\mathcal{C}$ also contains each child node $v$ of $u$ and the edge $(u,v)$.
		\item[(iv)] If $\mathcal{C}$ contains an $\vee$-node $u$, $\mathcal{C}$ also contains exact one of the child nodes $v$ of $u$ and the edge $(u,v)$.
	\end{itemize}
\end{definition}

Since the fanin of $\wedge$-nodes is restricted by $2$ and because of the decomposability of $\mathcal{D}$ a certificate can be seen as a binary tree where each leaf is labeled by a different variable of $X$. Now, we define $1$-certificates in order to represent sets of satisfying inputs of a given $\textnormal{DNNF}$.

\begin{definition}
	A $1$\emph{-certificate} is a certificate with the following modifications: each leaf labeled by a literal $x$ is a decision node labeled by $x$ whose only outgoing edge labeled by $1$ leads to the $1$-sink and each leaf labeled by a literal $\overline{x}$ is a decision node labeled by $x$ whose only outgoing edge labeled by $0$ leads to the $1$-sink.  
\end{definition}

A $1$-certificate represents all assignments to the input variables where the labels of outgoing edges of decision nodes are chosen as assignments for the corresponding variables. Since a $1$-certificate does not have to contain a decision node for each input variable, the represented set of assignments can contain more than one element. Now, observe that according to the definition of $1$-certificates each $\vee$- and $\wedge$-node will evaluate to $1$ given an assignment of the defined set. Since the roots of a $1$-certificate and a given $\textnormal{DNNF}$ coincide, this set of assignments is also satisfying for the given $\textnormal{DNNF}$.\\

After introducing the notation of $1$-certificates we are ready to show the equivalence of the Boolean functions computed by $\mathcal{F}$ and $\mathcal{D}$. We will start with the correctness of the simulation, i.e., for each variable assignment $b$ we show that $\Phi_{\mathcal{F}}[b] = 1$ implies $\Phi_{\mathcal{D}}[b] = 1$.

\begin{lemma}
	\label{lemma:s_dnnf_to_vee_obdd_semantic_correctness}
	Let $\mathcal{F}$ be the nondeterministic $\textnormal{OBDD}$ resulting from Simulation \ref{simulation:structured_dnnf_to_vee_obdd} of a given $\textnormal{DNNF}_T$ $\mathcal{D}$. Then, for each accepting path for a (possibly partial) variable assignment $b$ in $\mathcal{F}$ there exists a $1$-certificate of $\mathcal{D}$ which represents $b$.
\end{lemma}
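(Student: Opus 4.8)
The plan is to prove a stronger local statement by induction on the structure of $\mathcal{D}$ and then read the lemma off at the root. For a node $u$ of $\mathcal{D}$ and a set $s \in S(u)$, call a path $P$ in $\mathcal{F}$ a \emph{$(u,s)$-computation} if it starts at $(u,s)$, ends at some node $(w,s)$ with $w$ a $1$-sink of $\mathcal{D}$ reachable from $u$ in $\mathcal{D}$, and no proper prefix of $P$ already ends at such a node. I claim: for every $(u,s)$-computation $P$, letting $b_P$ be the (possibly partial) assignment forced by the labels of the decision-node edges traversed by $P$, there is a $1$-certificate of $\mathcal{D}_u$ whose output gate is $u$ and which represents $b_P$. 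The lemma follows immediately: an accepting path $Q$ for $b$ in $\mathcal{F}$ runs from the root $(\rootset{\mathcal{D}},\emptyset)$ of $\mathcal{F}$ to a $1$-sink of $\mathcal{F}$, which by clause~(v) of Simulation~\ref{simulation:structured_dnnf_to_vee_obdd} must be of the form $(w,\emptyset)$ with $w$ a $1$-sink of $\mathcal{D}$; hence $Q$ is a $(\rootset{\mathcal{D}},\emptyset)$-computation, the claim produces a $1$-certificate of $\mathcal{D}=\mathcal{D}_{\rootset{\mathcal{D}}}$ representing $b_Q$, and since $Q$ is activated by $b$ we have $b_Q\subseteq b$ as partial assignments, so a $1$-certificate representing $b_Q$ also represents $b$.

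The induction is on the length of a longest directed path from $u$ to a sink of $\mathcal{D}$. In the base case $u$ is a leaf of $\mathcal{D}$ (interpreted as a decision node for some variable $x_i$ pointing to the $0$- and $1$-sinks) or a sink itself; in $\mathcal{F}$ the node $(u,s)$ then has only neutral edges below it, so a $(u,s)$-computation is forced to take the unique outgoing edge that reaches the $1$-sink, and the corresponding single-decision-node $1$-certificate of $\mathcal{D}_u$ represents exactly the literal that this edge tests (and the trivial path when $u$ is itself a $1$-sink gives the trivial certificate). For the inductive step, split on the type of $u$. If $u$ is an $\vee$-node, its outgoing edges in $\mathcal{F}$ are the neutral copies $((u,s),(u_i,s))$ of the edges $(u,u_i)$ of $\mathcal{D}$; a $(u,s)$-computation takes one of them, to $(u_i,s)$, and then by minimality continues as a $(u_i,s)$-computation, so the induction hypothesis yields a $1$-certificate of $\mathcal{D}_{u_i}$ representing $b_P$, and adjoining $u$ together with the single edge $(u,u_i)$ gives the required $1$-certificate of $\mathcal{D}_u$. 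If $u$ is an $\wedge$-node, say with light edge $e_l=(u,u_l)$ and heavy edge $(u,u_r)$, then $(u,s)$ has the single type-1 outgoing edge to $(u_l,s\cup\{e_l\})$, and the only way to remove $e_l$ from the light-edge set afterwards is a type-3 edge from some $(w_l,s\cup\{e_l\})$ with $w_l$ a $1$-sink of $\mathcal{D}_{u_l}$ into $(u_r,s)$; hence every $(u,s)$-computation decomposes as a $(u_l,s\cup\{e_l\})$-computation, followed by that type-3 edge, followed by a $(u_r,s)$-computation. Applying the induction hypothesis to the two halves gives $1$-certificates of $\mathcal{D}_{u_l}$ and $\mathcal{D}_{u_r}$ representing assignments $b_l$ and $b_r$; by decomposability of $\mathcal{D}$ these are over disjoint variable sets and $b_P = b_l\cup b_r$, so gluing the two certificates under $u$ while keeping both edges $(u,u_l)$ and $(u,u_r)$ produces a valid $1$-certificate of $\mathcal{D}_u$ representing $b_P$.

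The step I expect to be the main obstacle is justifying the $\wedge$-node decomposition above: that once a computation enters $(u_l,s\cup\{e_l\})$ it must first reach a $1$-sink of $\mathcal{D}_{u_l}$ with the light-edge set back at $s\cup\{e_l\}$ before it can leave the ``$\mathcal{D}_{u_l}$-region'', and that the copy of $e_l$ eventually popped is precisely this one, so that the intermediate nodes indeed lie in $\mathcal{D}_{u_l}$ and then in $\mathcal{D}_{u_r}$. This is a well-nestedness property of the light-edge sets along a path of $\mathcal{F}$: every step pushes one light edge (type 1), pops one (type 3), or leaves the set unchanged (type 2), and since $\mathcal{F}$ has already been shown to be a syntactically correct read-once OBDD, no light edge of $\mathcal{D}$ can be pushed twice along a single path, so the sets behave like a stack and returning to $s$ forces the claimed nesting. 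I would isolate this as a short auxiliary observation, mirroring the corresponding argument in the DNNF-to-$\vee$-FBDD simulation of Beame and Liew~\cite{BL15}, and then feed it into the three cases above; the remaining verifications — that the objects built are genuine certificates (binary trees with pairwise distinct variables at the leaves, correct root) and that their decision nodes spell out exactly $b_P$ — are routine given decomposability and the induction hypothesis.
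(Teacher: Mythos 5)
Your proposal follows the same certificate-extraction idea as the paper's proof in Appendix D: split the accepting path according to whether the current node of $\mathcal{D}$ is an $\vee$- or an $\wedge$-node, obtain $1$-certificates for the pieces by induction, and glue them. The organization differs: the paper inducts on the length of the accepting path and applies the hypothesis to accepting paths of the diagrams $\mathcal{F}'$ and $\mathcal{F}''$ obtained by re-running Simulation \ref{simulation:structured_dnnf_to_vee_obdd} on $\mathcal{D}_{u_l}$ and $\mathcal{D}_{u_r}$ (and identified with the subgraphs of $\mathcal{F}$ of nodes whose light-edge set does, resp.\ does not, contain $e$), whereas you induct structurally on $\mathcal{D}$ with a strengthened claim about $(u,s)$-segments inside the single diagram $\mathcal{F}$. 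Both versions stand or fall with the same crux, the proper nesting at $\wedge$-nodes, so your route is legitimate but not substantially more economical.

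Where your sketch does not yet hold up is the justification you offer for that crux. Read-onceness of $\mathcal{F}$ (``no light edge is pushed twice'') does not by itself give the stack discipline: a priori the path could pop an edge of $s$ before popping $e_l$, or pop $e_l$ while still holding edges pushed inside $\mathcal{D}_{u_l}$, and then the residual set would not be $s$. What actually rules this out are the constraints built into the simulation, namely that every node $(y,t)$ of $\mathcal{F}$ satisfies $t\in S(y)$ and that a type-3 edge popping the light edge of an $\wedge$-node $z$ exists only when the residual set lies in $S(z)$; since sets in $S(z)$ consist of light edges lying on root-to-$z$ paths of the acyclic graph $\mathcal{D}$, one gets: (i) while $e_l$ is held, the $\mathcal{D}$-component stays in $\mathcal{D}_{u_l}$ and no edge of $s$ can be popped (its $\wedge$-node is a proper ancestor of $u$, so no set in its $S(\cdot)$ can contain $e_l$), and (ii) when $e_l$ is popped, the residual set lies in $S(u)$, hence contains no light edge of a strict descendant of $u$, hence equals $s$. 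A second point you do not address at all: your definition of a $(u,s)$-computation requires the terminal $1$-sink to be reachable from $u$, so before the induction hypothesis applies to the $(u_i,s)$- and $(u_r,s)$-segments you must argue that their terminal sinks lie in $\mathcal{D}_{u_i}$, resp.\ $\mathcal{D}_{u_r}$. This again follows from the same region argument combined with your minimality clause: the path can only leave $\mathcal{D}_{u_r}$ by popping an edge of $s$, which is possible only at a node whose $\mathcal{D}$-component is a $1$-sink of $\mathcal{D}_{u_r}$ and whose light-edge set has returned to exactly $s$, i.e.\ at a node where the outer computation must already have stopped. With these two repairs (in place of the appeal to read-onceness and to Beame--Liew) your induction goes through and yields the lemma.
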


\begin{proofidea}
	Given an accepting path for a variable assignment $b$ in $\mathcal{F}$ we are able to reconstruct a $1$-certificate of $\mathcal{D}$ representing the same variable assignment by inspecting Simulation \ref{simulation:structured_dnnf_to_vee_obdd}. We give a formal proof by induction on the length $l$ of an accepting path in $\mathcal{F}$ in 
Appendix D.
\end{proofidea}

Next, we will show the completeness of the given simulation, i.e., for each variable assignment $b$ we show that $\Phi_{\mathcal{D}}[b] = 1$ implies $\Phi_{\mathcal{F}}[b] = 1$.

\begin{lemma}
	\label{lemma:s_dnnf_to_vee_obdd_vollstaendigkeit}
	Let $\mathcal{F}$ be the nondeterministic $\textnormal{OBDD}$ resulting from Simulation \ref{simulation:structured_dnnf_to_vee_obdd} of a given $\textnormal{DNNF}_T$ $\mathcal{D}$. Then, for each $1$-certificate of $\mathcal{D}$ representing a (possibly partial) variable assignment $b$ there exists an accepting path in $\mathcal{F}$ for $b$.
\end{lemma}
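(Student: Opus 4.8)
\begin{proofidea}
The plan is to prove, by induction on the structure of the $1$-certificate, the following stronger statement: for every node $u$ of $\mathcal{D}$, every $s \in S(u)$, and every $1$-certificate $\mathcal{C}$ of the sub-DNNF $\mathcal{D}_u$ representing a (possibly partial) assignment $b$, the nondeterministic OBDD $\mathcal{F}$ from Simulation~\ref{simulation:structured_dnnf_to_vee_obdd} contains a directed path from $(u,s)$ to a node $(w,s)$, all of whose decision edges are consistent with $b$, where $w$ is the $1$-sink of $\mathcal{D}_u$ reached from $u$ by following the heavy edge at every $\wedge$-node of $\mathcal{C}$ and the unique selected edge at every $\vee$-node of $\mathcal{C}$. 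Pinning down this ``all-heavy'' exit sink $w$ is what lets us guarantee that every pair occurring in the construction really is a node of $\mathcal{F}$: since $w$ is joined to $u$ inside $\mathcal{D}_u$ by a path containing no light edge, prepending any root-to-$u$ path whose light-edge set is $s$ witnesses $s \in S(w)$, so $(w,s)$ exists, and the analogous argument applies to every other pair below. The lemma itself then follows by taking $u = \textnormal{root}(\mathcal{D})$ and $s = \emptyset$, the unique element of $S(\textnormal{root}(\mathcal{D}))$: we obtain a path from the root $(\textnormal{root}(\mathcal{D}),\emptyset)$ of $\mathcal{F}$ to $(w,\emptyset)$ with $w$ a $1$-sink of $\mathcal{D}$, which by case~(v) of the simulation is a $1$-sink of $\mathcal{F}$; hence the path is accepting for $b$.

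The induction distinguishes the three node types, following the labelling rules of the simulation. If $u$ is a leaf, then $\mathcal{C}$ is $u$ together with its single edge to a $1$-sink $w$; by rule~(i) the pair $(u,s)$ is a decision node for the same variable, the edge $(u,w)$ of $\mathcal{D}$ is neutral, and rule~2 supplies $((u,s),(w,s))$, which is consistent with $b$. If $u$ is an $\vee$-node, $\mathcal{C}$ selects one child $u'$ via a neutral edge, rule~2 gives $((u,s),(u',s))$, and prepending this edge to the path obtained from the induction hypothesis for $(u',s)$ and the induced $1$-certificate of $\mathcal{D}_{u'}$ completes the case. If $u$ is an $\wedge$-node with children $u_l,u_r$, assume w.l.o.g.\ that $e=(u,u_l)$ is the light edge; then $\mathcal{C}$ splits into $1$-certificates $\mathcal{C}_l,\mathcal{C}_r$ of $\mathcal{D}_{u_l},\mathcal{D}_{u_r}$ representing assignments $b_l,b_r$ on disjoint variable sets with $b = b_l \cup b_r$. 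Rule~1 gives $((u,s),(u_l,s\cup\{e\}))$; the induction hypothesis for $(u_l,s\cup\{e\})$ and $\mathcal{C}_l$ gives a path to $(w_l,s\cup\{e\})$ with $w_l$ a $1$-sink of $\mathcal{D}_{u_l}$; rule~3, applied to the heavy edge $(u,u_r)$ with corresponding light edge $e$, this $s$, and this $1$-sink $w_l$, gives $((w_l,s\cup\{e\}),(u_r,s))$; and the induction hypothesis for $(u_r,s)$ and $\mathcal{C}_r$ gives a path to $(w_r,s)$, with $w_r$ the $1$-sink at the bottom of the all-heavy path of $\mathcal{C}$. Concatenating these pieces produces the required path, which is consistent with $b$ because its first part queries only variables of $\vars{u_l}$ and its second part only variables of $\vars{u_r}$, and these are disjoint by decomposability of $\mathcal{D}$.

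The step I expect to be the real obstacle is the light-edge bookkeeping hidden in the last paragraph: one must verify for \emph{every} pair $(w',s')$ produced that $s' \in S(w')$ (equivalently, that the edges of Simulation~\ref{simulation:structured_dnnf_to_vee_obdd} invoked above are actually present in $\mathcal{F}$). This is precisely what the strengthened induction hypothesis is designed to handle: pairs reached while descending are obtained from $(u,s)$ by crossing neutral and light edges only, so their second coordinate is $s$ or $s$ enlarged by exactly the light edges crossed; and the exit pair $(w,s)$ of a subtree traversal is legitimate because, by the hypothesis, $w$ sits at the end of a light-edge-free path from the subtree's root. A minor remaining point is to check that the constructed walk is a genuine (simple) directed path rather than a walk with repetitions, which is immediate from the acyclicity of $\mathcal{D}$ together with the tree shape of $\mathcal{C}$.
\end{proofidea}
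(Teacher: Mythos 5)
Your argument is correct, and it assembles the accepting path exactly as the paper does (descend the light edge, traverse the left sub-certificate to its light-edge-free exit $1$-sink, cross the type-3 edge to the heavy child, traverse the right sub-certificate), but the induction is organized differently. The paper (Appendix E) inducts on the depth of $1$-certificates of the \emph{whole} DNNF: in the $\wedge$-case it re-runs Simulation \ref{simulation:structured_dnnf_to_vee_obdd} on the subgraphs $\mathcal{D}_{u_l}$ and $\mathcal{D}_{u_r}$, applies the hypothesis to the resulting OBDDs $\mathcal{F}_{u_l},\mathcal{F}_{u_r}$, and then transfers their accepting paths into $\mathcal{F}$ via the observations that $S(u_r)=\{\emptyset\}$ and $S(u_l)=\{\{e\}\}$ when $u$ is the root (using decomposability) together with an embedding of $\mathcal{F}_{u_l}$ into $\mathcal{F}$ obtained by adding $e$ to every light-edge set. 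You instead strengthen the induction hypothesis to arbitrary pairs $(u,s)$ with $s\in S(u)$ and to $1$-certificates of $\mathcal{D}_u$, and you additionally pin down the exit node as the $1$-sink reached from $u$ along heavy/selected edges only; this is precisely what makes every pair you use a legitimate node of $\mathcal{F}$ (in particular $(w_l,s\cup\{e\})$ for the rule-3 edge, since $w_l$ is joined to $u_l$ by a light-edge-free path), so you never need sub-simulations or the embedding/isomorphism step, and the $S$-set bookkeeping that the paper argues semi-globally is handled locally by the hypothesis itself. Both routes are sound; yours is somewhat cleaner and more self-contained (it proves the statement uniformly for all nodes of $\mathcal{D}$, which is also what Lemma \ref{lemma:s_dnnf_to_vee_obdd_same_count_cert_and_calc} implicitly needs), while the paper's stays closer to the original Beame--Liew presentation it adapts. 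Your closing remarks are also fine: walks in the acyclic graph $\mathcal{F}$ cannot repeat nodes, and the consistency of the concatenated path with $b$ follows from decomposability exactly as you say.
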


\begin{proofidea}
	Given a $1$-certificate $\mathcal{C}$ of $\mathcal{D}$ we can decompose $\mathcal{C}$ in order to get an accepting path in $\mathcal{F}$. We give a proof by induction on the depth $l$ (longest path from the root to a leaf) of a $1$-certificate of $\mathcal{D}$ in 
Appendix E.
\end{proofidea}

Now, we can derive the proposed equivalence of $\mathcal{F}$ and $\mathcal{D}$ by applying the last two lemmata.

\begin{lemma}
	The nondeterministic $\textnormal{OBDD}$ $\mathcal{F}$ computes the same Boolean function as the given $\textnormal{DNNF}_T$ $\mathcal{D}$. I.e., $\Phi_{\mathcal{F}}[b] = \Phi_{\mathcal{D}}[b]$ holds for each variable assignment $b$.
\end{lemma}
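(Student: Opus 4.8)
The plan is to deduce the equivalence directly from the two preceding lemmas together with the correspondence between satisfying inputs of a $\textnormal{DNNF}$ and its $1$-certificates. Fix a total variable assignment $b \in \{0,1\}^X$; since $\Phi_{\mathcal{F}}$ and $\Phi_{\mathcal{D}}$ are Boolean it suffices to prove that $\Phi_{\mathcal{F}}[b] = 1$ if and only if $\Phi_{\mathcal{D}}[b] = 1$.

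First I would treat the direction ``$\mathcal{F}$ accepts $\Rightarrow$ $\mathcal{D}$ accepts''. Assume $\Phi_{\mathcal{F}}[b] = 1$, so there is an accepting path $P$ for $b$ in $\mathcal{F}$. The decision nodes along $P$ read off a (possibly partial) assignment $b'$ consistent with $b$, and $P$ is an accepting path for $b'$ as well. By Lemma \ref{lemma:s_dnnf_to_vee_obdd_semantic_correctness} there is a $1$-certificate $\mathcal{C}$ of $\mathcal{D}$ representing $b'$. As observed right after the definition of $1$-certificates, every $\vee$- and $\wedge$-node of $\mathcal{C}$ evaluates to $1$ on any extension of $b'$, and since the output gates of $\mathcal{C}$ and $\mathcal{D}$ coincide, $\mathcal{D}$ evaluates to $1$ on $b'$, hence on the extension $b$. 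Thus $\Phi_{\mathcal{D}}[b] = 1$.

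For the converse, assume $\Phi_{\mathcal{D}}[b] = 1$. The first step here is to extract a $1$-certificate of $\mathcal{D}$ representing a restriction of $b$: this is the standard structural argument, by induction on $\mathcal{D}$, where at each $1$-evaluating $\vee$-gate one keeps a single child that evaluates to $1$, at each $\wedge$-gate one keeps both children (both must evaluate to $1$ by the $\wedge$-semantics), and at a leaf labeled by a literal satisfied by $b$ one takes the corresponding decision node. Decomposability of $\mathcal{D}$ guarantees that the resulting subgraph is a tree whose leaves carry pairwise distinct variables, so it is a genuine $1$-certificate, and it represents the partial assignment $b''$ read off its leaves, which is consistent with $b$. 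Now Lemma \ref{lemma:s_dnnf_to_vee_obdd_vollstaendigkeit} yields an accepting path in $\mathcal{F}$ for $b''$; since $b$ extends $b''$, the same path is activated by $b$, and hence $\Phi_{\mathcal{F}}[b] = 1$.

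The only genuinely new ingredient beyond Lemmas \ref{lemma:s_dnnf_to_vee_obdd_semantic_correctness} and \ref{lemma:s_dnnf_to_vee_obdd_vollstaendigkeit} is the ``every satisfying input of $\mathcal{D}$ admits a $1$-certificate'' claim used in the converse, and even this is routine given decomposability. I expect the main obstacle, if there is one, to be purely notational bookkeeping: checking that the partial assignments delivered by the two lemmas are always consistent with the fixed total assignment $b$, and that extending a partial assignment never deactivates an already activated path in either $\mathcal{D}$ or $\mathcal{F}$ — which holds because along any path the decision nodes query pairwise distinct variables (freeness of $\mathcal{F}$, and the tree structure of a certificate). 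Once this consistency is in place, the claimed identity $\Phi_{\mathcal{F}}[b] = \Phi_{\mathcal{D}}[b]$ for all $b$ follows immediately, and in particular $\mathcal{F}$ computes the same Boolean function as $\mathcal{D}$.
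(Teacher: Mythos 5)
Your proposal is correct and follows the same route as the paper, which derives this lemma directly by combining Lemma \ref{lemma:s_dnnf_to_vee_obdd_semantic_correctness} (accepting path $\Rightarrow$ $1$-certificate) and Lemma \ref{lemma:s_dnnf_to_vee_obdd_vollstaendigkeit} ($1$-certificate $\Rightarrow$ accepting path) with the correspondence between satisfying assignments and $1$-certificates. The extra details you supply (extracting a $1$-certificate from a satisfying assignment via decomposability, and the consistency of the partial assignments with $b$) are exactly the routine points the paper leaves implicit.
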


Altogether, we have shown that for each SDNNF there exists an equivalent nondeterministic OBDD with an increase in size that is at most quasipolynomial in $|\mathcal{D}|$. Let $L, M$ and $N$ be defined as in Lemma \ref{lemma:sdnnfs_to_vee_obdds_size}.

\begin{theorem}
	\label{theorem:sdnnf_to_vee_obdd}
	For any $\textnormal{DNNF}_T$ $\mathcal{D}$ there exists an equivalent nondeterministic $\textnormal{OBDD}$ $\mathcal{F}$ 
with at most $N(M+1)^L$ nodes and $\mathcal{F}$ can be constructed in time $\mathcal{O}(NM^L)$.
\end{theorem}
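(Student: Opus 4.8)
The plan is to read the theorem off the results already established for Simulation~\ref{simulation:structured_dnnf_to_vee_obdd}, adding only a short running-time argument. Given a $\mathrm{DNNF}_T$ $\mathcal{D}$ with $M$ $\wedge$-nodes, $N$ nodes in total, and $L$ the maximum number of light edges on a root-to-leaf path, I would apply Simulation~\ref{simulation:structured_dnnf_to_vee_obdd} to obtain $\mathcal{F}$ and then remove the auxiliary unlabeled nodes (those created in cases (ii) and (v)) by short-circuiting each of them to its successors. The syntactic-correctness lemma of Section~\ref{subsection:simulation_sdnnfs_by_vee_obdds_size_and_correctness} shows that $\mathcal{F}$ is a well-formed nondeterministic $\mathrm{OBDD}$ respecting $\pi^{*}=\pi(\mathcal{D},T)$; Lemma~\ref{lemma:s_dnnf_to_vee_obdd_semantic_correctness} together with Lemma~\ref{lemma:s_dnnf_to_vee_obdd_vollstaendigkeit} shows $\Phi_{\mathcal{F}}[b]=\Phi_{\mathcal{D}}[b]$ for every assignment $b$, so $\mathcal{F}$ is equivalent to $\mathcal{D}$; and Lemma~\ref{lemma:sdnnfs_to_vee_obdds_size} gives $|\mathcal{F}|\le N(M+1)^{L}$. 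Short-circuiting an unlabeled node cannot introduce a directed path through a new variable, does not change the set of accepting paths, and only removes nodes, so all three conclusions survive the cleanup.

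For the running time I would spell out the construction as three passes. A first traversal of $\mathcal{D}$ and the vtree $T$ computes the numbers $M^{v}$, $M^{v}_{l}$, $M^{v}_{r}$ for every node $v$ of $T$, from which the light/heavy/neutral classification of every edge of $\mathcal{D}$ and the ordering $\pi^{*}$ are obtained directly; this costs time polynomial in $|\mathcal{D}|+|T|$. A second pass processes $\mathcal{D}$ in topological order and builds, for every node $u$, the family $S(u)$: put $S(\mathrm{root}(\mathcal{D}))=\{\emptyset\}$ and, for every edge $(u,v)$ and every $s\in S(u)$, insert $s$ into $S(v)$ (respectively $s\cup\{(u,v)\}$ if $(u,v)$ is light). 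A third pass instantiates the nodes $(u,s)$ and the three kinds of edges exactly as in Simulation~\ref{simulation:structured_dnnf_to_vee_obdd}. The last two passes touch each produced node and each produced edge $O(1)$ times, so their total cost is proportional to the size of $\mathcal{F}$, which by Lemma~\ref{lemma:sdnnfs_to_vee_obdds_size} is $\mathcal{O}(N(M+1)^{L})$. Finally, the inequality $M\ge 2^{L}$ proved inside Lemma~\ref{lemma:sdnnfs_to_vee_obdds_size} gives $L\le\log M$, whence $(M+1)^{L}=M^{L}(1+1/M)^{L}\le e\,M^{L}$; thus $N(M+1)^{L}=\mathcal{O}(NM^{L})$, the polynomial preprocessing of the first pass is dominated, and the total running time is $\mathcal{O}(NM^{L})$.

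The argument is mostly bookkeeping, so I do not expect a genuine obstacle; the points that need care are the following. First, $\vee$-gates have unbounded fan-in, so to make ``$O(1)$ work per output edge'' legitimate one should normalize $\mathcal{D}$ beforehand by replacing each large $\vee$-gate with a balanced tree of binary $\vee$-gates — this increases the number of gates by at most a constant factor, preserves the $\mathrm{DNNF}_T$ property (disjointness is inherited by partial disjunctions, since $(f_i\vee f_j)\wedge(f_k\vee f_l)=\bot$ whenever the $f$'s are pairwise disjoint), and makes the number of edges $\mathcal{O}(N)$; and one should represent each (at most $L$-element) set in $S(u)$ as a bitmask or a sorted list so that an insertion or comparison costs $\mathcal{O}(\log M)$, which is absorbed into the $\mathcal{O}(NM^{L})$ bound. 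Second, one has to check that the short-circuiting of unlabeled nodes is sound, which it is because the only unlabeled node that replaces a $1$-sink arises in case (v) with $s\neq\emptyset$, where it is a genuine pass-through whose removal neither changes the computed function nor violates $\pi^{*}$.
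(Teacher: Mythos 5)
Your proposal is correct and follows essentially the same route as the paper, which obtains Theorem~\ref{theorem:sdnnf_to_vee_obdd} by simply combining Lemma~\ref{lemma:sdnnfs_to_vee_obdds_size} (size), the syntactic-correctness lemma, and Lemmata~\ref{lemma:s_dnnf_to_vee_obdd_semantic_correctness} and \ref{lemma:s_dnnf_to_vee_obdd_vollstaendigkeit} (equivalence), with the construction-time bound carried over from the Beame et al.\ simulation; your explicit three-pass time analysis and the observation that $M \ge 2^L$ gives $(M+1)^L \le e\,M^L$ merely make that inherited claim precise. The only inaccuracy is the side remark that binarizing unbounded fan-in $\vee$-gates increases the gate count by a constant factor (it can add up to one gate per incoming edge, so the claim should be stated in terms of the edge count, and performing it would also perturb the stated node bound $N(M+1)^L$), but this normalization is an optional bookkeeping device that the paper's argument does not use.
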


Using the described quasipolynomial simulation of SDNNF by nondeterministic OBDDs, 
we can derive lower bounds for SDNNFs (and also SDDs) from lower bounds for nondeterministic OBDDs. 

\section{Simulating (Structured) d-DNNFs}
\label{section:simulating_structured_d_dnnfs}
Independently, Beame and Liew and Razgon proved that $\textnormal{DNNFs}$ can be simulated by nondeterministic $\textnormal{FBDDs}$ with at most a quasipolynomial increase in size \cite{BL15,Raz15}. In the previous section, we have adapted this construction in order to get an analogous simulation of $\textnormal{SDNNFs}$ by nondeterministic $\textnormal{OBDDs}$. In this section, we will prove that both simulations can be used in order to simulate (structured) d-DNNFs by equivalent unambiguous nondeterministic FBDDs (OBDDs), respectively.\\

There are two key observations leading to the stated results. The first observation is that two different $1$-certificates of a given $\textnormal{d-DNNF}$ $\mathcal{D}$ do not represent a common satisfying input of $\mathcal{D}$.

\begin{lemma}
	\label{lemma:simulating_d_dnnfs_exactly_one_certificate}
	Let $\mathcal{D}$ be a  deterministic $\textnormal{DNNF}$ representing a Boolean function $\Phi_{\mathcal{D}}: \{0,1\}^n \rightarrow \{0,1\}$. Then, for each satisfying assignment $b \in \{0,1\}^n$ of $\Phi_{\mathcal{D}}$ there is exactly one $1$-certificate of $\mathcal{D}$ representing $b$.
\end{lemma}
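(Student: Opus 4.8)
The plan is to prove a slightly stronger, local statement by induction over the structure of $\mathcal{D}$ and then read off the lemma at the root. Concretely, I would show: for every node $u$ of $\mathcal{D}$ and every assignment $b$ with $\Phi_u[b] = 1$, there is exactly one \emph{sub-$1$-certificate} of $\mathcal{D}_u$ (a $1$-certificate of the DNNF rooted at $u$) that represents $b$, i.e.\ whose induced partial assignment is consistent with $b$. The lemma is the special case $u = \textnormal{root}(\mathcal{D})$, using that a $1$-certificate and $\mathcal{D}$ share the same output gate (property (ii)) and that $\Phi_{\textnormal{root}(\mathcal{D})}[b] = \Phi_{\mathcal{D}}[b] = 1$ for a satisfying $b$.

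First I would set up the induction on the length of a longest directed path from $u$ to a leaf. In the base case $u$ is a leaf labelled by a literal (recall we assume no constant inputs to gates); since $\Phi_u[b] = 1$ the literal is satisfied by $b$, so the only sub-$1$-certificate is $u$ itself, turned into a decision node with its unique outgoing edge, and there is no choice. For the $\wedge$-step, if $u$ has children $u_l, u_r$ then $\Phi_u[b] = 1$ forces $\Phi_{u_l}[b] = \Phi_{u_r}[b] = 1$; every sub-$1$-certificate rooted at $u$ must contain both edges $(u,u_l)$ and $(u,u_r)$ and, restricted to $\mathcal{D}_{u_l}$ respectively $\mathcal{D}_{u_r}$, a sub-$1$-certificate representing $b$. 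By the induction hypothesis these two pieces are uniquely determined, and by decomposability $\vars{u_l} \cap \vars{u_r} = \emptyset$ (so in fact $\mathcal{D}_{u_l}$ and $\mathcal{D}_{u_r}$ are node-disjoint), hence their union with the two mandatory edges is a well-defined certificate; so the sub-$1$-certificate at $u$ exists and is unique.

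The crucial step is the $\vee$-step, where determinism enters. If $u$ is an $\vee$-node with children $u_1, \dots, u_k$, then $\Phi_u[b] = 1$ guarantees $\Phi_{u_i}[b] = 1$ for at least one $i$, and the $d$-DNNF property --- the children functions are pairwise not simultaneously satisfiable --- guarantees it for at most one $i$; call it $i_0$. Any sub-$1$-certificate rooted at $u$ picks exactly one child $u_j$ (property (iv)) and contains below it a sub-$1$-certificate of $\mathcal{D}_{u_j}$ representing $b$; by the fact recorded just before the lemma (the assignment set represented by a $1$-certificate satisfies the DNNF sharing its root), this forces $\Phi_{u_j}[b] = 1$, hence $j = i_0$. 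The induction hypothesis then pins down the part below $u_{i_0}$ uniquely, and adding the edge $(u, u_{i_0})$ yields the unique sub-$1$-certificate at $u$.

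The main obstacle I anticipate is book-keeping rather than a conceptual difficulty: making precise the operations ``restrict a $1$-certificate to the subgraph rooted at $u$'' and ``$\mathcal{C}$ represents $b$'' so that the inductive hypothesis applies cleanly, and in particular invoking decomposability (together with the absence of constant inputs) to see that a certificate of $\mathcal{D}_u$ at an $\wedge$-node decomposes unambiguously into certificates of its two children. Note that existence of at least one $1$-certificate representing $b$ comes for free from the same induction (choose a satisfied child at each $\vee$-node, both children at each $\wedge$-node), so no separate existence argument is needed.
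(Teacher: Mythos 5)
Your proof is correct, but it takes a different route than the paper. The paper handles existence in one sentence (if no $1$-certificate represented $b$, then $b$ would not be satisfying) and proves uniqueness by contradiction: given two distinct $1$-certificates $\mathcal{C}_1, \mathcal{C}_2$ of $\mathcal{D}$ representing $b$, it follows them from the common root to a first $\vee$-node $u$ at which they choose different children; the subcertificates below those two children are $1$-certificates of the corresponding subcircuits representing $b$, so both child functions are satisfied by $b$, contradicting determinism. Your structural induction proves the stronger node-local statement (exactly one sub-$1$-certificate of $\mathcal{D}_u$ for every $u$ with $\Phi_u[b]=1$), but the decisive ingredients are the same: determinism forces the unique satisfied child at each $\vee$-node, and decomposability lets a certificate at an $\wedge$-node be split into, and reassembled from, certificates of the two children. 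What the paper's argument buys is brevity --- it only needs to reason about the single divergence node --- at the price of leaving slightly informal why two distinct certificates with the same root must first differ at an $\vee$-node (justifying that is exactly your $\wedge$-case bookkeeping). What your argument buys is that existence and uniqueness fall out of one induction, and the strengthened local statement would also be convenient for the subsequent counting of accepting paths; the overhead you flag (making ``restriction to $\mathcal{D}_u$'' and ``represents $b$'' precise) is routine given decomposability and the standing assumption that gates have no constant inputs, and is an informality the paper's own proof shares when it passes from $\mathcal{C}_1, \mathcal{C}_2$ to the subcertificates $\mathcal{C}_{u_l}, \mathcal{C}_{u_r}$.
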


\begin{proof}
	There has to be at least one $1$-certificate of $\mathcal{D}$ representing $b$. Otherwise, $b$ would not be a satisfying assignment of $\Phi_{\mathcal{D}}$. Now, suppose to the contrary there would be more $1$-certificates of $\mathcal{D}$ representing $b$. Let $\mathcal{C}_1$ and $\mathcal{C}_2$ be two of them. According to the definition of $1$-certificates we have $\rootset{\mathcal{C}_1} = \rootset{\mathcal{C}_2} = \rootset{\mathcal{D}}$. Hence, consider $\mathcal{C}_1$ and $\mathcal{C}_2$ starting from their common root. By definition of certificates we know that there has to be a common $\vee$-node $u$ of $\mathcal{C}_1$ and $\mathcal{C}_2$ such that $\mathcal{C}_1$ only contains the left child $u_l$ and $\mathcal{C}_2$ only contains the right child $u_r$ in order that $\mathcal{C}_1$ and $\mathcal{C}_2$ differ. The subtree $\mathcal{C}_{u_l}$ of $\mathcal{C}_1$ is a $1$-certificate of $\mathcal{D}_{u_l}$ representing $b$ because otherwise $C_1$ would be none of $\mathcal{D}$. Analogously, the subtree $C_{u_r}$ of $\mathcal{C}_2$ has to be a $1$-certificate of $\mathcal{D}_{u_r}$. However, this implies that the Boolean functions represented by $\mathcal{D}_{u_l}$ and $\mathcal{D}_{u_r}$ are not disjoint since $b$ is a satisfying assignment for both functions. This is a contradiction to the assumption of $\mathcal{D}$ being a $\textnormal{d-DNNF}$. 
\end{proof}

Now, the second observation is that the simulation from Beame and Liew (which is essentially given by Simulation \ref{simulation:structured_dnnf_to_vee_obdd}) maps each $1$-certificate of a given $\textnormal{DNNF}$ to a corresponding accepting path in the constructed nondeterministic $\textnormal{FBDD}$.

\begin{lemma}
	\label{lemma:simulating_d_dnnfs_same_count_of_certs_and_accepting_paths}
	Let $\mathcal{D}$ be a $\textnormal{DNNF}$ and $\mathcal{F}$ the nondeterministic \textnormal{FBDD} resulting from the simulation stated in \textnormal{\cite{BL15}}. Furthermore, let $b$ be a satisfying assignment. Then, $\mathcal{F}$ has as much accepting paths for $b$ as $\mathcal{D}$ has $1$-certificates representing $b$.
\end{lemma}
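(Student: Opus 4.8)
The plan is to upgrade the two one-directional correspondences already in hand --- Lemma \ref{lemma:s_dnnf_to_vee_obdd_semantic_correctness} turns an accepting path of $\mathcal{F}$ into a $1$-certificate of $\mathcal{D}$, and Lemma \ref{lemma:s_dnnf_to_vee_obdd_vollstaendigkeit} turns a $1$-certificate of $\mathcal{D}$ into an accepting path of $\mathcal{F}$ --- into a genuine bijection, so that the two finite sets (accepting paths of $\mathcal{F}$ for $b$, and $1$-certificates of $\mathcal{D}$ representing $b$) have equal cardinality. Although Lemmas \ref{lemma:s_dnnf_to_vee_obdd_semantic_correctness} and \ref{lemma:s_dnnf_to_vee_obdd_vollstaendigkeit} are phrased for Simulation \ref{simulation:structured_dnnf_to_vee_obdd}, the Beame--Liew construction meant here differs only in how the light edge of an individual $\wedge$-node is chosen; the gadgets, and hence the proofs, carry over unchanged, so I will use the analogous statements for it. Write $\Psi$ for the ``path $\to$ certificate'' map and $\Xi$ for the ``certificate $\to$ path'' map that are implicit in the inductive proofs of these two lemmas. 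The goal is to show that $\Psi$ and $\Xi$ are mutually inverse.

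First I would make $\Psi$ and $\Xi$ explicit as recursive procedures that mirror the construction. Both descend from $(\rootset{\mathcal{D}},\emptyset)$: at a decision node the outgoing edge dictated by $b$ is copied across; at an $\vee$-node $\Psi$ records the single child chosen on the path (and $\Xi$ routes the path through the single child kept in the certificate), using that an $\vee$-node of $\mathcal{F}$ has one child per child of the corresponding $\vee$-node of $\mathcal{D}$; at an $\wedge$-node with light edge $e=(u,u_l)$ and heavy edge $(u,u_r)$ and current light-edge set $s$, the map $\Xi$ first emits the path obtained recursively from $\mathcal{C}_{u_l}$ --- which, by the construction of $\mathcal{F}$, terminates at an unlabeled node $(w,s\cup\{e\})$ with $w$ a $1$-sink of $\mathcal{D}_{u_l}$ --- then follows the type-$3$ edge to $(u_r,s)$, and continues with the path obtained recursively from $\mathcal{C}_{u_r}$; and $\Psi$ reverses this by splitting the path at its unique type-$3$ edge attached to $e$. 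Then I would prove, by induction on the length of the path that $\Xi\circ\Psi=\mathrm{id}$, and by induction on the depth of the certificate that $\Psi\circ\Xi=\mathrm{id}$; the base cases are sinks and single decision nodes, and the inductive step is exactly the $\vee$- and $\wedge$-node bookkeeping just described, with the light-edge set $s$ carried along unchanged in the recursive calls.

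I expect the one delicate point --- and hence the main obstacle --- to be showing that the factorization of an accepting path at an $\wedge$-node is \emph{uniquely} determined by the path, so that $\Psi$ is well defined and is a true two-sided inverse of $\Xi$ rather than merely a left or right inverse. Concretely one must verify: (i) along any accepting path, once the type-$1$ edge of $e$ into $(u_l,s\cup\{e\})$ is taken, every node visited before the matching type-$3$ jump has a light-edge set that contains $s\cup\{e\}$, and $e$ is the ``innermost still-open'' light edge at the moment of the jump, so the jump node $(w,s\cup\{e\})$ and the edge out of it are unambiguous; and (ii) conversely, the path assembled recursively from $\mathcal{C}_{u_l}$ below $e$ necessarily ends in such a node $(w,s\cup\{e\})$, so the type-$3$ edge is available for $\Xi$ to attach. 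Both reduce to the fact that a node $(w,s')$ of $\mathcal{F}$ with $w$ a $1$-sink and $s'\neq\emptyset$ is unlabeled and its only outgoing edges on an accepting path are type-$3$ edges whose target is dictated by the most recently added light edge of $s'$ --- this is precisely where the ``$s$ remembers the history of light edges'' design of the simulation does the work. Once this is established, $\Psi$ is a bijection between the accepting paths of $\mathcal{F}$ for $b$ and the $1$-certificates of $\mathcal{D}$ representing $b$, which yields the claimed equality of counts.
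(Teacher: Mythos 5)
Your proposal is correct and takes essentially the same route as the paper: the paper also derives the equality of counts from the path--certificate correspondence implicit in the correctness and completeness lemmas, arguing by contradiction that two distinct $1$-certificates cannot be mapped to one accepting path and that two distinct accepting paths cannot arise from one certificate, i.e.\ that the correspondence is a bijection. Your explicit mutual-inverse formulation (including the careful treatment of the unique factorization of an accepting path at the type-$3$ edges) is simply a more detailed rendering of that same bijection argument.
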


\begin{proof}
	Suppose to the contrary that there would exist more or less accepting paths for $b$ in $\mathcal{F}$ than $1$-certificates of $\mathcal{D}$ representing $b$.\\
	
	\underline{Case 1:} There are less accepting paths in $\mathcal{F}$ than $1$-certificates of $\mathcal{D}$. Thus, according to Lemma \ref{lemma:s_dnnf_to_vee_obdd_vollstaendigkeit} (completeness) there exist two $1$-certificates $\mathcal{C}_1$ and $\mathcal{C}_2$ of $\mathcal{D}$ representing $b$ which are mapped to the same accepting path $P$ of $\mathcal{F}$ by the given simulation. Since $\mathcal{C}_1$ and $\mathcal{C}_2$ are different $1$-certificates of $\mathcal{D}$, one of the certificates must contain a node $u$ which is not contained in the other certificate. Otherwise, suppose they would consist of the same set of nodes. Then, $\mathcal{C}_1$ and $\mathcal{C}_2$ had to differ in their set of edges. But, the edge set of a $1$-certificate is determined by its node set according to the definition. W.l.o.g. let $\mathcal{C}_1$ be the certificate containing $u$. Now, we know that $\mathcal{C}_1$ was mapped to an accepting path of $\mathcal{F}$ by the given simulation containing a node $(u,s)$ for $s \in S(u)$. Since $\mathcal{C}_2$ does not contain $u$, $\mathcal{C}_2$ was mapped to an accepting path in $\mathcal{F}$ which does not contain a node $(u,s)$. However, this is a contradiction to the fact that $\mathcal{C}_1$ and $\mathcal{C}_2$ were both mapped to $P$.\\
	
	\underline{Case 2:} There are more accepting paths for $b$ in $\mathcal{F}$ than $1$-certificates representing $b$. According to Lemma \ref{lemma:s_dnnf_to_vee_obdd_semantic_correctness} (correctness) for each accepting path in $\mathcal{F}$ there has to be a corresponding $1$-certificate of $\mathcal{D}$. Since there are more accepting paths for $b$ in $\mathcal{F}$ than $1$-certificates representing $b$, there have to be two different accepting path $P_1$ and $P_2$ which emerged from the same $1$-certificate of $\mathcal{D}$. However, the given simulation is a function which maps nodes and edges of $\mathcal{D}$ to nodes and edges of $\mathcal{F}$. Therefore, $P_1$ and $P_2$ have to be equal which leads to a contradiction. 
\end{proof}

By combining the last two lemmata we get the following result.

\begin{proposition}
	Let $\mathcal{D}$ be a $\textnormal{d-DNNF}$ and $\mathcal{F}$ be the nondeterministic \textnormal{FBDD} resulting from the simulation stated in \textnormal{\cite{BL15}}. Then, $\mathcal{F}$ is an unambiguous nondeterministic \textnormal{FBDD}.
\end{proposition}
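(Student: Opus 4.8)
The statement to prove is that the nondeterministic FBDD $\mathcal{F}$ obtained from a d-DNNF $\mathcal{D}$ via the Beame--Liew simulation is \emph{unambiguous}, i.e., every satisfying input $b$ has at most one accepting path in $\mathcal{F}$. The plan is to combine the two lemmata that immediately precede the proposition. First I would fix an arbitrary satisfying assignment $b$ of $\Phi_{\mathcal{D}}$; if $b$ is not satisfying there is nothing to check since there are then no accepting paths at all (by Lemma~\ref{lemma:s_dnnf_to_vee_obdd_semantic_correctness}, an accepting path would yield a $1$-certificate of $\mathcal{D}$ representing $b$, contradicting that $b$ falsifies $\Phi_{\mathcal{D}}$). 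So assume $b$ is satisfying.

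The core of the argument is a counting chain. By Lemma~\ref{lemma:simulating_d_dnnfs_same_count_of_certs_and_accepting_paths}, the number of accepting paths of $\mathcal{F}$ for $b$ equals the number of $1$-certificates of $\mathcal{D}$ representing $b$. By Lemma~\ref{lemma:simulating_d_dnnfs_exactly_one_certificate}, since $\mathcal{D}$ is deterministic, there is exactly one such $1$-certificate. Hence $\mathcal{F}$ has exactly one accepting path for $b$. As $b$ was an arbitrary satisfying assignment, every satisfying input induces precisely one accepting path, which is exactly the defining condition for an unambiguous nondeterministic FBDD (indeed slightly stronger than the ``at most one'' in the definition of $\vee_1$-BDDs). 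One should also remark that $\mathcal{F}$ is a syntactically correct nondeterministic FBDD to begin with --- this is the FBDD analogue of the syntactic-correctness lemma proved in Section~\ref{section:simulation_sdnnfs_by_vee_obdds} for the OBDD case, or it is established in \cite{BL15} --- so the only thing that remains genuinely open is the uniqueness of accepting paths, which the counting chain settles.

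I do not anticipate a serious obstacle here: the proposition is essentially a one-line corollary of the two preparatory lemmata, which themselves carry the real content. The only subtlety worth a sentence is making sure the two lemmata are applied to the \emph{same} notion of ``representing $b$'': Lemma~\ref{lemma:simulating_d_dnnfs_exactly_one_certificate} is stated for total assignments $b \in \{0,1\}^n$, while Lemma~\ref{lemma:simulating_d_dnnfs_same_count_of_certs_and_accepting_paths} speaks of satisfying assignments and accepting paths; restricting attention to total satisfying inputs aligns both, and for unambiguity that is all one needs. I would therefore write the proof as: fix a satisfying $b$, invoke Lemma~\ref{lemma:simulating_d_dnnfs_same_count_of_certs_and_accepting_paths} to equate the path count with the certificate count, invoke Lemma~\ref{lemma:simulating_d_dnnfs_exactly_one_certificate} to pin the certificate count at one, and conclude. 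The analogous statement for structured d-DNNFs and unambiguous nondeterministic OBDDs, obtained by running the same argument with Simulation~\ref{simulation:structured_dnnf_to_vee_obdd} in place of the Beame--Liew FBDD simulation (the certificate-counting and determinism lemmata apply verbatim), would then follow in the same way and is presumably what the section builds toward next.
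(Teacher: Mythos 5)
Your proposal is correct and follows essentially the same route as the paper: the paper's proof also splits on whether $b$ is satisfying, uses Lemma \ref{lemma:simulating_d_dnnfs_exactly_one_certificate} to get a unique $1$-certificate, and Lemma \ref{lemma:simulating_d_dnnfs_same_count_of_certs_and_accepting_paths} to equate certificate and accepting-path counts, concluding unambiguity. Your extra remarks on syntactic correctness and on total versus partial assignments are harmless refinements, not deviations.
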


\begin{proof}
	We have to show that for each variable assignment $b$ there exists at most one accepting path in $\mathcal{F}$. If $b$ is a non-satisfying assignment, we know from the equivalence of $\mathcal{D}$ and $\mathcal{F}$ that there is no accepting path for $b$ in $\mathcal{F}$. Now, let $b$ be a satisfying assignment of $\mathcal{D}$. By Lemma \ref{lemma:simulating_d_dnnfs_exactly_one_certificate} we know that there is exactly one $1$-certificate of $\mathcal{D}$ representing $b$. Furthermore, by Lemma \ref{lemma:simulating_d_dnnfs_same_count_of_certs_and_accepting_paths} we know that there is exactly one accepting path for $b$ in $\mathcal{F}$. In conclusion, for each variable assignment $b$ there exists at most one accepting path in $\mathcal{F}$. Therefore, $\mathcal{F}$ is an unambiguous nondeterministic $\textnormal{FBDD}$. 
\end{proof}

Since we only changed the definition of light and heavy edges in our simulation of SDNNFs by nondeterministic OBDDs, we easily obtain the next result analogously to Lemma \ref{lemma:simulating_d_dnnfs_same_count_of_certs_and_accepting_paths}.

\begin{lemma}
	\label{lemma:s_dnnf_to_vee_obdd_same_count_cert_and_calc}
	Let $\mathcal{D}$ be a $\textnormal{DNNF}_T$ and $\mathcal{F}$ be the nondeterministic \textnormal{OBDD} resulting from Simulation \ref{simulation:structured_dnnf_to_vee_obdd}. 
Besides, let $b$ be a satisfying assignment for the represented function. 
Then, there exists as many accepting paths for $b$ in $\mathcal{F}$ as there exists $1$-certificates in $\mathcal{D}$ representing $b$.
\end{lemma}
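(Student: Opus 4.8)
The plan is to mirror the argument used for Lemma~\ref{lemma:simulating_d_dnnfs_same_count_of_certs_and_accepting_paths}, the analogous statement for the Beame--Liew simulation into nondeterministic FBDDs. The only change between that simulation and Simulation~\ref{simulation:structured_dnnf_to_vee_obdd} is the definition of light and heavy edges; the node set, the labelling rules (i)--(v), and the three edge types are literally the same, and the node of $\mathcal{F}$ reached after traversing a prefix $P$ of a computation is the pair $(u, S(P))$ determined solely by the corresponding node $u$ of $\mathcal{D}$ and the set $S(P)$ of light edges seen so far. Since Lemma~\ref{lemma:s_dnnf_to_vee_obdd_semantic_correctness} (correctness) and Lemma~\ref{lemma:s_dnnf_to_vee_obdd_vollstaendigkeit} (completeness) already establish, for this modified simulation, that every accepting path for $b$ in $\mathcal{F}$ yields a $1$-certificate of $\mathcal{D}$ representing $b$ and, conversely, that every such $1$-certificate yields an accepting path for $b$ in $\mathcal{F}$, all that remains is to check that both constructions are injective; equality of the two counts is then immediate.

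First I would fix a satisfying assignment $b$ and suppose, for contradiction, that the number of accepting paths for $b$ in $\mathcal{F}$ differs from the number of $1$-certificates of $\mathcal{D}$ representing $b$, splitting into the two cases of Lemma~\ref{lemma:simulating_d_dnnfs_same_count_of_certs_and_accepting_paths}. If there are fewer accepting paths than $1$-certificates, then by completeness two distinct $1$-certificates $\mathcal{C}_1, \mathcal{C}_2$ representing $b$ are mapped to the same accepting path $P$ of $\mathcal{F}$. Because the edge set of a $1$-certificate is determined by its node set, $\mathcal{C}_1$ and $\mathcal{C}_2$ must differ in some node $u$, say $u \in \mathcal{C}_1 \setminus \mathcal{C}_2$. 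Inspecting Simulation~\ref{simulation:structured_dnnf_to_vee_obdd} shows that the accepting path obtained from $\mathcal{C}_1$ contains a node $(u, s)$ with $s \in S(u)$, whereas the path obtained from $\mathcal{C}_2$ contains no node of the form $(u, \cdot)$, contradicting that both equal $P$. If there are more accepting paths than $1$-certificates, then by correctness two distinct accepting paths $P_1, P_2$ for $b$ arise from the same $1$-certificate $\mathcal{C}$; but the simulation sends each node and edge of $\mathcal{D}$ (hence of $\mathcal{C}$) to a uniquely determined node and edge of $\mathcal{F}$ — along $\mathcal{C}$ the set $S(\cdot)$ of light edges accumulated so far is fixed by the path taken inside the certificate — so the reconstruction is single-valued and $P_1 = P_2$, again a contradiction.

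The point that really needs care, and hence the main obstacle, is verifying that nothing in the passage from individually chosen light/heavy edges to globally chosen ones (via the quantities $M^v_l, M^v_r$) breaks these injectivity arguments: in Case~1, that a $1$-certificate not containing $u$ can never be routed through some $(u, s)$, and in Case~2, that the path built from a $1$-certificate leaves no residual nondeterministic choice in $\mathcal{F}$. Both hold because the label of a node $(u, s)$ of $\mathcal{F}$ depends only on $u$ and $s$ (rules (i)--(v)), and because along a fixed $1$-certificate the accumulated light-edge set is determined by the certificate and not by any choice in $\mathcal{F}$; these facts are stated and used in the same form for Simulation~\ref{simulation:structured_dnnf_to_vee_obdd}, so the proof of Lemma~\ref{lemma:simulating_d_dnnfs_same_count_of_certs_and_accepting_paths} carries over with only the cosmetic replacement of ``FBDD'' by ``OBDD''. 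With both inequalities ruled out, the two counts coincide, as claimed.
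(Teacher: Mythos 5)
Your proposal is correct and follows essentially the paper's own route: the paper proves this statement simply by noting that only the definition of light and heavy edges changed and invoking the argument of Lemma~\ref{lemma:simulating_d_dnnfs_same_count_of_certs_and_accepting_paths}, whose two-case injectivity argument (via the correctness and completeness lemmata for Simulation~\ref{simulation:structured_dnnf_to_vee_obdd}) you reproduce faithfully. Your added remark that the accumulated light-edge set along a fixed $1$-certificate is determined by the certificate itself makes the single-valuedness in Case~2 slightly more explicit than in the paper, but it is the same proof.
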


Therefore, given a $\textnormal{d-DNNF}_T$ our simulation yields an unambiguous nondeterministic OBDD.

\begin{proposition}
	\label{satz:simulation_s_d_dnnf_to_vee_one_obdd_result}
	Let $\mathcal{D}$ be a $\textnormal{d-DNNF}_T$ and $\mathcal{F}$ be the nondeterministic \textnormal{OBDD} resulting from Simulation \ref{simulation:structured_dnnf_to_vee_obdd}. Then, $\mathcal{F}$ is an unambiguous nondeterministic \textnormal{OBDD}. 
\end{proposition}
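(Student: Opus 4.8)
The plan is to mirror the argument already given for the unambiguous nondeterministic FBDD case, with the FBDD simulation replaced by Simulation \ref{simulation:structured_dnnf_to_vee_obdd} and the FBDD counting lemma replaced by Lemma \ref{lemma:s_dnnf_to_vee_obdd_same_count_cert_and_calc}. Concretely, I would fix an arbitrary variable assignment $b$ and show that $\mathcal{F}$ has at most one accepting path for $b$; since unambiguity is exactly this property for all $b$, and since the earlier size/correctness results already guarantee that $\mathcal{F}$ is a syntactically correct nondeterministic OBDD respecting $\pi(\mathcal{D},T)$, this suffices.

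First I would dispose of the case where $b$ is not a satisfying assignment of $\mathcal{D}$: because $\mathcal{F}$ and $\mathcal{D}$ compute the same Boolean function (the equivalence was established for every $\textnormal{DNNF}_T$, hence in particular for the deterministic one at hand), $\mathcal{F}$ has no accepting path for $b$ at all, so certainly at most one.

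Next, assume $b$ is a satisfying assignment. Since $\mathcal{D}$ is a $\textnormal{d-DNNF}$, Lemma \ref{lemma:simulating_d_dnnfs_exactly_one_certificate} yields exactly one $1$-certificate of $\mathcal{D}$ representing $b$. By Lemma \ref{lemma:s_dnnf_to_vee_obdd_same_count_cert_and_calc}, the number of accepting paths for $b$ in $\mathcal{F}$ equals the number of $1$-certificates of $\mathcal{D}$ representing $b$, hence it is exactly one. Combining the two cases, every assignment has at most one accepting path in $\mathcal{F}$, so $\mathcal{F}$ is an unambiguous nondeterministic OBDD, which is the claim.

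The only point deserving attention is that Lemma \ref{lemma:s_dnnf_to_vee_obdd_same_count_cert_and_calc} must supply a genuine matching (not merely an inequality in one direction) between $1$-certificates of $\mathcal{D}$ representing $b$ and accepting paths for $b$ in $\mathcal{F}$. But that lemma is stated precisely for the OBDD simulation, and its proof is forced, exactly as in Lemma \ref{lemma:simulating_d_dnnfs_same_count_of_certs_and_accepting_paths}, by the correctness direction (Lemma \ref{lemma:s_dnnf_to_vee_obdd_semantic_correctness}), the completeness direction (Lemma \ref{lemma:s_dnnf_to_vee_obdd_vollstaendigkeit}), and the observation that Simulation \ref{simulation:structured_dnnf_to_vee_obdd} acts as a function on the nodes and edges of $\mathcal{D}$. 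Hence I expect no real obstacle in this proposition itself; the substantive work has been front-loaded into the counting lemmas and into the equivalence of $\mathcal{F}$ and $\mathcal{D}$.
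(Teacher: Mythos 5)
Your proposal is correct and follows exactly the argument the paper intends: it is the verbatim analogue of the paper's proof of the FBDD proposition, with Lemma \ref{lemma:simulating_d_dnnfs_exactly_one_certificate} combined with the OBDD counting lemma (Lemma \ref{lemma:s_dnnf_to_vee_obdd_same_count_cert_and_calc}) in place of Lemma \ref{lemma:simulating_d_dnnfs_same_count_of_certs_and_accepting_paths}, and the non-satisfying case dispatched by the already established equivalence of $\mathcal{F}$ and $\mathcal{D}$. No gap; this is precisely the route the paper takes.
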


\section{On the \textup{SDD} Size of Some Storage Access Functions}\label{sec:storage}

The following representations for the Boolean function HWB$_n$ and 
its negation $\overline{\textup{HWB}}_n$
were presented in \cite{BLSW99} in order to prove 
that generalizations of OBDDs used in applications lead to representations of small polynomial size. 

\begin{align}\label{HWB}
   \textup{HWB}_n(x)=\bigvee\limits_{1\leq k \leq n} E^n_k(x)\wedge x_k \text{ and}
\end{align}

\begin{align}\label{notHWB} 
   \overline{\textup{HWB}}_n(x)=\bigvee\limits_{1\leq k \leq n} (E^n_k(x)\wedge \overline{x}_k) \vee E^n_0(x),
\end{align}
where $E^n_j$, $j\in \{0, \ldots, n\}$, is the symmetric Boolean function on $n$ 
variables computing $1$ iff the number of ones in the input, that is the number of variables set to $1$,
is exactly $j$.
Using equation \ref{HWB} and \ref{notHWB}
it is easy to see (and was already shown in \cite{BLSW99}) that HWB$_n$ and $\overline{\textup{HWB}}_n$ 
can be represented w.r.t.\ every variable ordering 
by unambiguous nondeterministic OBDDs of size $\mathcal{O}(n^2)$
with only one nondeterministic node at the beginning.
Later on a similar construction was used in \cite{Bov16} in order to prove that the SDD size of the function HWB$_n$ is 
polynomial.

Now, the crucial observation is that the storage access functions defined in Section \ref{sec2} 
can all be represented
in this way. The indirect storage access function is equal to
\begin{align*} 
   \textup{ISA}_n(a,x)=\bigvee\limits_{0\leq j \leq n-1} 
      (|x(a)|_2=j) \wedge x_j \text{ or}
\end{align*}
\begin{align*} 
   \textup{ISA}_n(a,x)=
      \bigvee\limits_{\substack{1\leq i \leq m-1\\
                      0\leq j \leq n-1}}
      (|a|_2=i)\wedge (|(x_{ik}, \ldots, x_{(i+1)k-1})|_2=j) \wedge x_j.
\end{align*}
This characterization of ISA$_n$ leads easily to a similar one for its negated function.
\begin{align*} 
   \overline{\textup{ISA}}_n(a,x)=
       \bigvee\limits_{\substack{1\leq i \leq m-1\\ 
                       0\leq j \leq n-1}}
      (|a|_2=i)\wedge (|(x_{ik}, \ldots, x_{(i+1)k-1})|_2=j) \wedge \overline{x}_j.
\end{align*}

The weighted sum function can be written as 
\begin{align*} 
   \textup{WS}_n(x)=\bigvee\limits_{1\leq i \leq n} 
      ((S=i) \wedge x_i) \vee ((S=0)\wedge x_1) \vee ((S>n)\wedge x_1),
\end{align*}
where $S$ is the sum of all $ix_i$ in $\mathbb{Z}_p$, $1\leq i \leq n$. 
The negated weighted sum function is defined in the following way.
\begin{align*} 
   \overline{\textup{WS}}_n(x)=\bigvee\limits_{1\leq i \leq n} 
      ((S=i) \wedge \overline{x}_i) \vee ((S=0)\wedge \overline{x}_1) \vee ((S>n)\wedge \overline{x}_1).
\end{align*}

It is easy to see that the conjunction of a Boolean function $f$ and a projective function both given as OBDDs can be 
done in time and space $\mathcal{O}(|G|)$ where $G$ is the given OBDD representing $f$.
W.l.o.g.\ let $p(X)=x_i$ be the projective function and $f$ defined on the variable set $X$.
Traverse the OBDD $G$ and redirect all $0$-edges leaving nodes labeled by $x_i$ to the $0$-sink.
Alternatively, for all nodes $v$ labeled by $x_i$ all incoming edges into $v$ are redirected to the
$1$-successors of $v$. Since $v$ is not longer reachable afterwards, the nodes labeled by $x_i$ can be deleted.
Obviously, the size of the resulting OBDD is at most $|G|$.
For more details see, e.g., \cite{Weg00}.



Using the representations for HWB$_n$, ISA$_n$ and WS$_n$ mentioned above we can prove the following result
as a corollary from Theorem \ref{thm:transformation_into_sdd}.
\begin{corollary}\label{WS}
The function \textup{ISA$_n$} can be represented by \textup{SDDs}
of size $\mathcal{O}(n^2)$,
the functions \textup{HWB$_n$} and \textup{WS$_n$} by \textup{SDDs} of size $\mathcal{O}(n^3)$. 
\end{corollary}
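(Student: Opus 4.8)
The plan is to use Theorem~\ref{thm:transformation_into_sdd} as a black box: it suffices to exhibit, for each of the three functions $f \in \{\textup{ISA}_n, \textup{HWB}_n, \textup{WS}_n\}$, a single variable ordering $\pi$ together with $\vee_1$-OBDDs for $f$ and for $\overline{f}$ respecting $\pi$, of the claimed size. The key structural observation, already indicated in the excerpt, is that each of these functions (and its negation) has the shape $\bigvee_{k} (P_k(x) \wedge \ell_k)$, where the index ranges over polynomially many values of $k$, each $\ell_k$ is a literal over a single variable, and the ``predicates'' $P_k(x)$ are \emph{mutually exclusive} Boolean functions that are easy for OBDDs — symmetric functions (for HWB, the $E^n_k$), equalities of a $\mathbb{Z}_p$-sum to a fixed value (for WS), or conjunctions of two such conditions (for ISA). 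The mutual exclusivity of the $P_k$'s, together with the fact that fixing $\ell_k$ forces which disjunct can fire, is exactly what makes the resulting nondeterministic OBDD \emph{unambiguous}.

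First I would make the construction explicit for a generic function of this form. Start the OBDD with one $\vee$-node branching into the (polynomially many) disjuncts; the $k$-th branch is a deterministic OBDD for $P_k(x) \wedge \ell_k$ with respect to $\pi$. Since each $P_k$ is a symmetric function, respectively an equality $S = i$ for a fixed linear form $S$ over $\mathbb{Z}_p$, its OBDD has width $O(n)$ (we track the running partial sum, which lives in $\{0,\dots,n\}$ or in $\mathbb{Z}_p$), and by the remark in Section~\ref{sec:storage} conjoining the single literal $\ell_k$ costs no extra size. For ISA$_n$ the predicate is a conjunction of two such conditions ($|a|_2 = i$ and a block-equality), which is again an $O(n)$-width OBDD. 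Summing over the $O(n)$ branches gives total size $O(n^2)$ per OBDD for HWB and WS if the branches could be shared, but without sharing it is $O(n)$ branches each of size $O(n)$, i.e.\ $O(n^2)$; for the cubic bound on HWB and WS one should be slightly more careful — I expect the honest count to be $O(n)$ disjuncts times $O(n)$ width times possibly an $O(n)$ factor from not being able to merge the $E^n_k$-gadgets across branches, giving $O(n^3)$, while ISA's double-indexed sum is arranged so the effective count is still $O(n^2)$. Unambiguity holds because for any satisfying input $x$ there is exactly one $k$ with $P_k(x) = 1$ and $\ell_k$ satisfied (the $P_k$ partition the inputs, and on the block where $P_k$ holds the value of the literal $\ell_k$ is determined), so exactly one branch has an accepting subpath.

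Next I would verify that the \emph{same} ordering $\pi$ works for $\overline{f}$. This is immediate from the displayed formulas: $\overline{\textup{ISA}}_n$, $\overline{\textup{HWB}}_n$, $\overline{\textup{WS}}_n$ are obtained from the formulas for $\textup{ISA}_n$, $\textup{HWB}_n$, $\textup{WS}_n$ by flipping each data literal $x_k \mapsto \overline{x}_k$ (and, for HWB, adding the extra disjunct $E^n_0(x)$; for WS the extra disjuncts get their literals flipped too). Flipping a literal at the end of a branch does not change the OBDD's size or its ordering, and adding one more symmetric-function branch $E^n_0$ costs another $O(n)$ (or $O(n^2)$) nodes. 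Unambiguity of the negation follows by the identical partition argument. Hence for each function we have $\vee_1$-OBDDs of size $O(n^2)$ (ISA) or $O(n^3)$ (HWB, WS) for both $f$ and $\overline{f}$ over a common ordering, and Theorem~\ref{thm:transformation_into_sdd} — more precisely the quadratic bound of Lemma~\ref{lemma:size_vee_one_obdd_to_sdd} — yields SDDs of the claimed sizes (note: the quadratic blow-up in Lemma~\ref{lemma:size_vee_one_obdd_to_sdd} would naively give $O(n^4)$ and $O(n^6)$, so the main subtlety is that for these structured OBDDs — essentially a disjunction over $O(n)$ deterministic branches with a single $\vee$-node at the top — Simulation~\ref{simulation:vtree_vee_one_obdds_to_sdds} does \emph{not} incur the worst-case quadratic cost; the set $R^+(\mathcal{F},\beta)$ at the unique top $\vee$-node consists of the $O(n)$ branch-roots, so the SDD size stays within a constant factor of $|\mathcal{F}| + |\overline{\mathcal{F}}|$).

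The main obstacle, as the parenthetical above signals, is bookkeeping the constants correctly: one must check that the generic $O(n^2)$/$O(n^3)$ OBDD bounds are genuinely achievable for these specific functions (in particular that the $O(n)$ distinct symmetric-function or sum-equality gadgets really cost only $O(n)$ each, with no hidden $\log$ or extra $n$ factor for WS's arithmetic in $\mathbb{Z}_p$), and that feeding these into the SDD simulation does not multiply the size quadratically — which requires looking at how Simulation~\ref{simulation:vtree_vee_one_obdds_to_sdds} behaves when the only $\vee$-node is the source. Everything else is routine.
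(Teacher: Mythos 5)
Your proposal is correct and takes essentially the same route as the paper: represent each function and its negation by the displayed disjoint-disjunction formulas, observe that these yield unambiguous nondeterministic OBDDs over a common variable ordering with a single $\vee$-node at the source, and then invoke Theorem~\ref{thm:transformation_into_sdd}. Your additional observation that Simulation~\ref{simulation:vtree_vee_one_obdds_to_sdds} incurs only linear (not quadratic) overhead in this special situation --- since $R^{+}(\mathcal{F},\beta)$ and $R^{+}(\overline{\mathcal{F}},\beta)$ at the unique source $\vee$-nodes consist just of the branch roots --- is precisely the refinement the paper leaves implicit, and your size accounting for \textup{ISA}$_n$, \textup{HWB}$_n$ and \textup{WS}$_n$ lands on the stated bounds.
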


Corollary \ref{WS} is an improvement on a result of 
Bova and Szeider that ISA$_n$ can be represented by SDDs of size $\mathcal{O}(n^{13/5})$ \cite{BS17}.
Beame and Liew showed that SDDs are sometimes exponentially less concise than FBDDs \cite{BL15}. 
For this result they analyzed Boolean functions derived from a natural class of database queries and proved 
that there exists a Boolean function whose FBDD size is $\mathcal{O}(m^2)$ but its SDD size is 
at least $2^{\sqrt{m/3}-1}$, where the number of Boolean variables the investigated function depends on
is $m^2+2m$. 
Since the weighted sum function WS$_n$ has exponential FBDD size \cite{SZ00}, we complement Beame's and Liew's
result using Corollary \ref{WS}.
\begin{corollary}\label{cor:incomparable}
The complexity classes $\mathcal{P}(\textup{FBDD})$ and 
$\mathcal{P}(\textup{SDD})$ are incomparable
which means that $\mathcal{P}(\textup{FBDD})\not\subseteq \mathcal{P}(\textup{SDD})$ and vice versa.
\end{corollary}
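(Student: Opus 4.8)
The plan is to prove the two non-inclusions separately, each via a single witness function; the statement then follows immediately by combining results already at hand.

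For $\mathcal{P}(\textup{FBDD})\not\subseteq\mathcal{P}(\textup{SDD})$ I would invoke the separating function of Beame and Liew \cite{BL15}: the Boolean function on $m^{2}+2m$ variables arising from a natural class of database queries whose FBDD size is $\mathcal{O}(m^{2})$ but whose SDD size is at least $2^{\sqrt{m/3}-1}$. Writing $N=m^{2}+2m$ for the number of variables, so that $m=\Theta(\sqrt{N})$, this function has FBDD size $\mathcal{O}(N)$, hence it lies in $\mathcal{P}(\textup{FBDD})$, while its SDD size is $2^{\Omega(N^{1/4})}$, which is super-polynomial, so it is not in $\mathcal{P}(\textup{SDD})$. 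This alone gives one half of the incomparability.

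For the reverse inclusion $\mathcal{P}(\textup{SDD})\not\subseteq\mathcal{P}(\textup{FBDD})$ I would take the weighted sum function $\textup{WS}_{n}$. By the lower bound of Savick\'y and \u{Z}\'ak \cite{SZ00} the FBDD size of $\textup{WS}_{n}$ is of order $2^{n-o(1)}$, so $\textup{WS}_{n}\notin\mathcal{P}(\textup{FBDD})$. On the other hand, Corollary \ref{WS} gives $\textup{WS}_{n}$ SDD size $\mathcal{O}(n^{3})$, hence $\textup{WS}_{n}\in\mathcal{P}(\textup{SDD})$. Taking the two witnesses together establishes that neither complexity class is contained in the other, which is exactly the claimed incomparability.

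There is essentially no obstacle in the corollary itself — it is pure assembly. The substantive content sits in the imported facts: on the SDD side in Corollary \ref{WS}, which itself rests on representing $\textup{WS}_{n}$ and $\overline{\textup{WS}}_{n}$ by polynomial-size unambiguous nondeterministic OBDDs respecting one common variable ordering (the disjunctions over the symmetric tests $(S=i)$ conjoined with single literals, each realizable in OBDD size $\mathcal{O}(n^{2})$ with a single nondeterministic node) followed by an application of Theorem \ref{thm:transformation_into_sdd}, all carried out above; and on the FBDD side in the exponential FBDD lower bound for $\textup{WS}_{n}$ from \cite{SZ00} together with the exponential SDD lower bound from \cite{BL15}, both cited as known. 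Hence the proof to be written is precisely the two-line combination just sketched.
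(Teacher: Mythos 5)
Your proposal is correct and follows the same route as the paper: Beame and Liew's query-derived function \cite{BL15} witnesses $\mathcal{P}(\textup{FBDD})\not\subseteq\mathcal{P}(\textup{SDD})$, while $\textup{WS}_n$ with the exponential FBDD lower bound of \cite{SZ00} and the $\mathcal{O}(n^3)$ SDD upper bound from Corollary \ref{WS} witnesses the converse non-inclusion. Your explicit renormalization of the Beame--Liew bounds to the number of variables is a harmless addition; nothing further is needed.
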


Note that there exist Boolean functions representable by polynomial-size FBDDs  
but every unambiguous nondeterministic \textup{OBDD} with only one nondeterministic node at the beginning
has exponential size and vice versa (see, e.g., \cite{BW99}). 
Therefore, Corollary \ref{cor:incomparable} is not really astonishing.
\section{On the Succinctness of \textup{SDDs} and More General \textup{BDD} Variants}\label{sec:comparison}

In this section, we prove 
that every function representable by $k$-OBDDs of polynomial size, 
where $k$ is a constant, can also be represented by SDDs of polynomial size. Moreover, there exist
Boolean functions representable by SDDs of polynomial size whose $k$-OBDD size is exponential.

\begin{theorem}\label{thm:kOBDD}
The complexity class $\mathcal{P}(k$-\textup{OBDD)} is a proper subclass of $\mathcal{P}(\textup{SDD})$
which means that $\mathcal{P}(k$-\textup{OBDD)}$\subsetneq \mathcal{P}(\textup{SDD})$.
\end{theorem}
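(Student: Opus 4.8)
The plan is to prove the inclusion $\mathcal{P}(k\text{-OBDD})\subseteq\mathcal{P}(\textup{SDD})$ and then its strictness. For the inclusion, the first step is a polynomial transformation of a $k$-OBDD $G$ (with $k$ constant) respecting a variable ordering $\pi$ into an \emph{unambiguous} nondeterministic OBDD respecting the same $\pi$. Since every layer of $G$ reads its variables in the order $\pi$, a single nondeterministic node at the source can guess the sequence of layers $1=\lambda_1<\lambda_2<\dots<\lambda_r\le k$ that the accepting computation passes through, together with the entry nodes $v_{\lambda_2},\dots,v_{\lambda_r}$ at which it crosses from one layer to the next; afterwards one sweep over $\pi$ runs the $r$ layer-subcomputations in parallel, keeping the $r$-tuple of current nodes as state, and accepts iff the exit node of the $j$-th subcomputation equals the guessed $v_{\lambda_{j+1}}$ for every $j$ and the last subcomputation reaches the $1$-sink. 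This OBDD respects $\pi$ (every thread reads variables in order $\pi$), has size $|G|^{O(k)}$, hence polynomial for constant $k$, and is unambiguous: because $G$ is deterministic, for each input there is exactly one valid chain of entry nodes, while every other guess runs into a dead end.

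Because $k$-OBDDs are closed under complementation — just exchange the two sinks — if $f$ has a polynomial-size $k$-OBDD w.r.t.\ $\pi$ then so does $\overline{f}$, and by the transformation above both $f$ and $\overline{f}$ have polynomial-size unambiguous nondeterministic OBDDs w.r.t.\ the common ordering $\pi$. Theorem~\ref{thm:transformation_into_sdd} then gives $f\in\mathcal{P}(\textup{SDD})$, so $\mathcal{P}(k\text{-OBDD})\subseteq\mathcal{P}(\textup{SDD})$.

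For strictness it suffices to name a single function in $\mathcal{P}(\textup{SDD})\setminus\mathcal{P}(k\text{-OBDD})$. The transformation above in particular yields $\mathcal{P}(k\text{-OBDD})\subseteq\mathcal{P}(\vee_1\text{-OBDD})$ (indeed into the subclass whose nondeterministic choices are all made at the source), so any function witnessing $\mathcal{P}(\textup{SDD})\not\subseteq\mathcal{P}(\vee_1\text{-OBDD})$ — which is known by \cite{Raz17} together with \cite{Dar11,Raz16}, cf.\ Figure~\ref{fig:landscape1} — is at the same time a witness for $\mathcal{P}(\textup{SDD})\setminus\mathcal{P}(k\text{-OBDD})$. (An alternative I would explore is to adapt Sauerhoff's separation \cite{Sau03b} of nondeterministic OBDDs with all guesses at the start from general nondeterministic OBDDs, checking whether the separating function can be placed in $\mathcal{P}(\textup{SDD})$ via Theorem~\ref{thm:transformation_into_sdd}, e.g.\ because it has the storage-access shape $\bigvee_k (E^n_k(x)\wedge \ell_k)$ with literals $\ell_k$ as in Section~\ref{sec:storage}.) Combined with the inclusion, this gives $\mathcal{P}(k\text{-OBDD})\subsetneq\mathcal{P}(\textup{SDD})$.

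\textbf{Main obstacle.} The technical core is the $k$-OBDD-to-$\vee_1$-OBDD transformation: the nuisance is that crossing between layers restarts the variable ordering, so a naive concatenation of the layers does not respect a single ordering and one is forced into the parallel-threads construction; the care then lies in verifying simultaneously that the result really respects one ordering $\pi$, that the size stays $|G|^{O(k)}$, and that unambiguity is preserved. Once that is in place, both the appeal to Theorem~\ref{thm:transformation_into_sdd} and the strictness step are straightforward bookkeeping.
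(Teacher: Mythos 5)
Your proof is correct, and the inclusion part coincides with the paper's argument: the paper's Lemma \ref{lem:kOBDDtransform} performs exactly your construction (guess the chain of visited layers and entry nodes with a single nondeterministic node at the source, then take the synchronized conjunction of the layer-restricted OBDDs -- realized there via Bryant's apply algorithm \cite{Bry86}, which is the same product-of-threads idea -- giving size $\mathcal{O}(|G|^{2k-1})$ and unambiguity from the determinism of the $k$-OBDD), and the negation-by-swapping-sinks observation plus Theorem \ref{thm:transformation_into_sdd} then yields $\mathcal{P}(k\text{-OBDD})\subseteq\mathcal{P}(\textup{SDD})$ exactly as you describe. Where you diverge is the strictness step. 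You take the shortcut $\mathcal{P}(k\text{-OBDD})\subseteq\mathcal{P}(\vee_1\text{-OBDD})$ and invoke the known separation $\mathcal{P}(\textup{SDD})\not\subseteq\mathcal{P}(\vee_1\text{-OBDD})$ from \cite{Raz17} together with \cite{Dar11,Raz16}; this is logically sound given that cited result (which the paper itself states as known), and it is the lighter-weight argument. The paper instead proves strictness via its Lemma \ref{lem:nondetnodes}, i.e., the route you only sketch as an ``alternative'': adapting Sauerhoff \cite{Sau03b} to exhibit a function $f$ such that $f$ and $\overline{f}$ have polynomial-size $\vee_1$-OBDDs w.r.t.\ a common ordering (so $f\in\mathcal{P}(\textup{SDD})$ by Theorem \ref{thm:transformation_into_sdd}), while any nondeterministic OBDD for $f$ with all guesses at the source -- and hence, by Lemma \ref{lem:kOBDDtransform}, any polynomial-size $k$-OBDD for $f$ -- would contradict an exponential lower bound. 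The trade-off: your route leans on Razgon's external result and only delivers a superpolynomial (quasipolynomial) gap between SDDs and $k$-OBDDs, whereas the paper's Sauerhoff-based witness gives an exponential separation and keeps the argument within the unambiguous-OBDD framework already developed; on the other hand it requires the ``exhausting and tedious'' verification that Sauerhoff's function and its negation are both representable by polynomial-size unambiguous OBDDs over the same ordering, which your primary argument avoids entirely.
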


The proof of Theorem \ref{thm:kOBDD} is technically not too involved. We only need the following observations.

\begin{lemma}\label{lem:kOBDDtransform}
Each function representable by a $k$-\textup{OBDD} of polynomial size can be represented by an unambiguous 
nondeterministic \textup{OBDD} of polynomial size w.r.t.\ the same variable ordering and
with only one nondeterministic node at the beginning.
\end{lemma}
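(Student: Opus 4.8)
The plan is to simulate the $k$ sequential passes of the given $k$-\textnormal{OBDD} by a single pass of an \textnormal{OBDD} whose $k$ ``pass-states'' are tracked in parallel, using nondeterminism only to guess, at the very start, the nodes at which control enters passes $2,\dots,k$. Let $G$ be a $k$-\textnormal{OBDD} of polynomial size $p$ representing $f$ and respecting a variable ordering which, by abuse of notation, I take to be $\pi=\mathrm{id}$. First I would put $G$ into a normal form in which every computation path traverses the $k$ layers $L_1,\dots,L_k$ strictly in the order $1,2,\dots,k$: a portal of layer $i<k$ that points directly to a sink, or that jumps to a layer $L_j$ with $j>i+1$, is rerouted through constantly many \emph{pass-through} nodes inserted into the intermediate layers, each of which reads no variable and merely forwards control (and, for direct-to-sink portals, the sink value) to the next layer. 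Since there are at most $p$ distinct jump targets and at most $k$ layers, this adds only $O(pk)$ nodes, so the normalized $G$ still has $\mathrm{poly}(n)$ size and enjoys the property that on input $b$ each layer $L_i$ is entered at a unique node $s_i(b)$ (with $s_1$ the source) and left at a unique portal $t_i(b)$, where $t_i(b)=s_{i+1}(b)$ for $i<k$ and $t_k(b)$ is the output sink.

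Next I would build the unambiguous nondeterministic \textnormal{OBDD} $H$. It has a single $\vee$-node at the source whose unlabeled outgoing edges range over all tuples $(s_2,\dots,s_k)$ of possible entry nodes of $L_2,\dots,L_k$; there are at most $p^{k-1}$ such tuples, which is polynomial for constant $k$. Each choice leads into a deterministic \textnormal{OBDD} that performs one sweep through $\pi$ and maintains the product state $(c_1,\dots,c_k)$, where $c_i$ is the node currently reached in $L_i$ when $L_i$ is started at $s_i$ (with $s_1$ the source). On reading a variable every component advances by its own \textnormal{OBDD} transition, so the sweep is deterministic and reads each variable at most once in $\pi$-order; hence this part is a genuine \textnormal{OBDD} respecting $\pi$, and the only nondeterministic node of $H$ is the initial guess. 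After the last variable the product has reached $(t_1,\dots,t_k)$; I route to the $1$-sink exactly when the guess is consistent, i.e.\ $t_i=s_{i+1}$ for all $i<k$ and $t_k$ is the $1$-sink of $G$, and to the $0$-sink otherwise. To test $t_i=s_{i+1}$ the guessed tuple is retained alongside the state, or equivalently the sweep is duplicated once per guess; either way the size stays $p^{O(k)}=\mathrm{poly}(n)$.

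Correctness and, crucially, unambiguity would follow from the determinism of $G$. For a fixed guess the sweep is deterministic, so at most one path per guess reaches the $1$-sink; it remains to show that at most one guess can succeed. The key point is that $c_1$, and hence the final value $t_1$, depends only on the source and on the input $b$, not on $s_2,\dots,s_k$, so acceptance forces $s_2=t_1=s_2(b)$. Given $s_2=s_2(b)$, the value $t_2$ is then forced to $s_3(b)$, and so on by induction, so the only guess admitting an accepting path is the true boundary sequence $(s_2(b),\dots,s_k(b))$, and it admits one exactly when $t_k(b)$ is the $1$-sink, i.e.\ when $f(b)=1$. Thus $H$ represents $f$, has exactly one accepting path for every satisfying input and none otherwise, and uses a single nondeterministic node at the start.

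I expect the main obstacle to be the unambiguity argument rather than the size bound: one must rule out acceptance along a ``wrong'' guess. This is where the determinism of the $k$-\textnormal{OBDD} is indispensable, since it guarantees that each pass-state $s_i(b)$ is uniquely determined by the preceding portal, so that the chain of consistency checks $t_i=s_{i+1}$ pins down the entire guessed tuple. A secondary technical point is the normalization that makes the layers strictly sequential; handling portals that skip layers or jump straight to a sink without a superpolynomial blow-up requires the pass-through gadgets described above, but this is routine.
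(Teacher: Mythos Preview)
Your proposal is correct and follows essentially the same approach as the paper: guess the layer-entry nodes up front with a single nondeterministic node, then for each guess run the $k$ layers in parallel as a product \textnormal{OBDD} (which is exactly Bryant's apply for the conjunction of the per-layer checks), and derive unambiguity from the determinism of the $k$-\textnormal{OBDD}. The only cosmetic difference is that you normalize so every run visits all $k$ layers, whereas the paper additionally guesses the subsequence $\ell(1)<\cdots<\ell(r)$ of layers actually traversed; both yield an $|G|^{O(k)}$ bound.
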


Lemma \ref{lem:kOBDDtransform} can be proved by a polynomial transformation from $k$-OBDDs into equivalent
unambiguous nondeterministic \textup{OBDDs} with only one nondeterministic node at the beginning.
For this we can use a construction first used in \cite{BSSW98} and later on in \cite{BW99}.
For the sake of completeness we provide the proof of Lemma \ref{lem:kOBDDtransform}
in Appendix F.

By changing the labels of the $0$- and the $1$-sink a $k$-OBDD representing a function $f$ can easily
be transformed into a $k$-OBDD for the negated function $\overline{f}$.
Therefore, for every function $f$ representable by $k$-OBDDs 
of polynomial size also the negated function $\overline{f}$ can be represented 
by $k$-OBDDs of polynomial size w.r.t.\ the same variable ordering as $f$. 
Hence, using Lemma \ref{lem:kOBDDtransform} together with Theorem \ref{thm:transformation_into_sdd}
we 
obtain the result
$\mathcal{P}(k$-OBDD)$\subseteq \mathcal{P}($SDD).
Next, we prove that 
$\mathcal{P}(k$-OBDD) is even a proper subclass of $\mathcal{P}($SDD).

\begin{lemma}\label{lem:nondetnodes}
There exists Boolean functions $f$ such that $f$ and $\overline{f}$ can be represented
by unambiguous nondeterministic \textup{OBDDs} of polynomial size w.r.t.\ the same variable ordering 
but nondeterministic \textup{OBDDs} where the nondeterministic nodes are only at the beginning
need exponential size for $f$.
\end{lemma}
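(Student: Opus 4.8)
The plan is to derive the lemma from a separation theorem of Sauerhoff \cite{Sau03b} together with Lemma \ref{lem:kOBDDtransform}. Recall that a nondeterministic OBDD all of whose nondeterministic nodes sit at the source computes precisely a function of the form $f(x)=\exists y\in\{0,1\}^{t}\,g(x,y)$, where $g$ is an OBDD (the bits of $y$ selecting one of up to $2^{t}$ sources). In particular, by Lemma \ref{lem:kOBDDtransform} every function with polynomial $k$-OBDD size has polynomial-size source-nondeterministic OBDDs, so once the present lemma is established, combining it with Theorem \ref{thm:transformation_into_sdd} gives $\mathcal{P}(k\text{-OBDD})\subsetneq\mathcal{P}(\text{SDD})$. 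Sauerhoff exhibits functions $g_{n}$ that have polynomial-size (general) nondeterministic OBDDs but for which every source-nondeterministic OBDD has size $2^{\Omega(n^{\varepsilon})}$ for a fixed $\varepsilon>0$. What I would need to add on top of Sauerhoff's statement is that the witnessing function can be chosen so that it \emph{and its negation} both admit polynomial-size \emph{unambiguous} nondeterministic OBDDs with respect to one common variable ordering $\pi$.

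For the upper-bound side I would start from Sauerhoff's $g_{n}$ together with its polynomial nondeterministic OBDD $G_{n}$ respecting an ordering $\pi$. To turn $G_{n}$ into an \emph{unambiguous} OBDD one applies the standard disjointification: at every $\vee$-node whose subgraphs represent $h_{1},\dots,h_{r}$, replace these by $h_{1},\ \overline{h_{1}}\wedge h_{2},\ \dots,\ (\overline{h_{1}}\wedge\cdots\wedge\overline{h_{r-1}})\wedge h_{r}$; this preserves ordering by $\pi$ and stays polynomial provided the relevant subfunctions and their complements have polynomial OBDD size for the corresponding tail of $\pi$, which is exactly the situation in Sauerhoff's construction. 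The more delicate point is to secure a polynomial-size $\vee_{1}$-OBDD for $\overline{g_{n}}$ \emph{with respect to the same $\pi$}: padding by an extra leading variable does not help, since the padded function is small for $\vee_{1}$-OBDDs only when both $g_{n}$ and $\overline{g_{n}}$ already are. I would therefore argue that Sauerhoff's function (or a mild variant of it) has $0$-inputs and $1$-inputs that are certified in a symmetric, layer-by-layer fashion, so that the same transformation applied to the dual representation yields a polynomial $\vee_{1}$-OBDD for $\overline{g_{n}}$ as well; this is the same phenomenon that makes the storage access functions of Section \ref{sec:storage} and their negations representable by small $\vee_{1}$-OBDDs with respect to every ordering.

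For the lower bound I would re-run Sauerhoff's combinatorial argument essentially verbatim: the source-nondeterministic OBDD size of a function, with respect to any ordering, is bounded below (up to polynomial factors) by the number of simultaneously OBDD-structured combinatorial rectangles needed to cover its set of $1$-inputs over a worst-case balanced cut, and the design/rectangle estimate of \cite{Sau03b} yields the $2^{\Omega(n^{\varepsilon})}$ bound. If the upper-bound step forces a modification of $g_{n}$, one has to check that the modification is local enough not to destroy this estimate (disjointification does not change the represented function, while a symmetric variant would require redoing the count by the same design argument). The step I expect to be the main obstacle is exactly reconciling the three demands at once: a \emph{single} ordering $\pi$ must admit small unambiguous nondeterministic OBDDs both for $f$ and for $\overline{f}$, while \emph{every} ordering must make source-nondeterministic OBDDs for $f$ large. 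Since the lower bound has to hold against an adversarially chosen ordering, the hardness of $f$ cannot be tied to any particular ordering, and verifying that one can have this ordering-robust hardness together with the two ordering-specific upper bounds is the crux of adapting Sauerhoff's result.
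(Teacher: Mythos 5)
Your proposal follows essentially the same route as the paper: both start from Sauerhoff's function together with his exponential lower bound for nondeterministic OBDDs whose nondeterministic nodes are all at the beginning, and then verify that $f$ and $\overline{f}$ admit polynomial-size unambiguous nondeterministic OBDDs with respect to one common variable ordering. The only real difference is that the paper observes that Sauerhoff's polynomial-size nondeterministic OBDD is already unambiguous, so your disjointification step and your concern about having to re-run the lower bound are unnecessary (the function is never modified), while the remaining delicate point you flag --- a small $\vee_1$-OBDD for $\overline{f}$ w.r.t.\ the same ordering --- is exactly the part the paper itself only asserts as tedious but doable.
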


\begin{proof}[Sketch of proof]
Sauerhoff proved that there is a Boolean functions $f$ representable by nondeterministic OBDDs
of polynomial size but nondeterministic OBDDs for $f$ where nondeterministic nodes are only
at the beginning need exponential size \cite{Sau03b}. A careful analysis of his proof shows that 
the nondeterministic OBDD for the function $f$ which is a generalized storage access function is 
an unambiguous nondeterministic OBDD. Moreover, it is not too difficult but exhausting and tedious 
to prove that $\overline{f}$ can also be represented by unambiguous OBDDs of polynomial size
w.r.t.\ the same variable ordering as $f$. 
\end{proof}

Combining Lemma \ref{lem:kOBDDtransform} and \ref{lem:nondetnodes} with Theorem \ref{thm:transformation_into_sdd}
we  can prove Theorem \ref{thm:kOBDD}.

\section*{Concluding Remarks}

It is still open whether the complexity class $\mathcal{P}(k\textup{-OBDD})$, where $k$ is a constant, is a proper subset
of the complexity class that consists of all Boolean functions representable in polynomial size by  
unambiguous nondeterministic OBDDs with only one nondeterministic node at the beginning.
Furthermore, to the best of our knowledge the question whether the complexity class that consists
of all Boolean functions representable
by polynomial-size unambiguous nondeterministic OBDDs is closed under negation
is open. 
For unrestricted nondeterministic OBDDs of polynomial size the answer is negative.
Examples are all Boolean functions $f$ for which there is an exponential gap in the so-called 
nondeterministic one-way communication complexity for $f$ and $\overline{f}$ (for communication complexity 
see, e.g., \cite{KN97}).
The existence of a Boolean function $f$ with polynomial-size unambiguous nondeterministic OBDDs 
but for which $\overline{f}$ has exponential unambiguous nondeterministic OBDD size would answer the question
whether structured d-DNNFs are more powerful w.r.t.\ polynomial-size representations than SDDs
in the affirmative.



\bibliographystyle{spmpsci}  

\bibliography{references}

\appendix 
\section*{Appendix A: Proof of Lemma \ref{lemma:partitionlemma_vee_one_obdd_to_sdd}}
\label{appendix:proof_of_partition_lemma}

\begin{proof}
	First, we will show that the set $\Phi$ consists of at least two elements. For this purpose, it will be shown that the children of the $\vee$-node $u$ are elements of $R^{+}(\mathcal{F}, \beta)$. As a consequence, $\Phi$ consists of at least two elements because $\mathcal{F}$ was assumed to be simple and therefore $u$ has at least two children.
	
	Let $Y$ be the set of variables that are not assigned by $\beta$. 
According to the definition of $\beta(u)$, the assignment $\beta$ can be extended such that there exists an accepting path for $\beta$ in $\mathcal{F}$ containing $u$. Suppose $u$ were not maximal w.r.t. $Y$. Then, there would exist another node $u'$ in $\mathcal{F}$ such that $\vars{u} \subset \vars{u'} \subseteq Y$ and $\mathcal{F}_u$ is a subgraph of $\mathcal{F}_{u'}$. Let $x_i$ be the smallest variable of $\vars{u}$ w.r.t. $\pi$. Then, we have $Y = \{x_i, \dots, x_n\}$ according to the definition of $\beta(u)$. Notice that the graph $\mathcal{F}_{u'}$ must contain a variable $x_j$ with $j < i$ because it was assumed that there are no edges between $\vee$-nodes and $\mathcal{F}_{u}$ is a subgraph of $\mathcal{F}_{u'}$. Hence, $\vars{u'} \not\subseteq Y$ would hold which is a contradiction to the assumption. Therefore, $u \in R(\mathcal{F}, \beta)$ and its children are in $R^{+}(\mathcal{F}, \beta)$ because the node $u$ meets both conditions of the set $R(\mathcal{F}, \beta)$.
	
	Now, we give a proof by contradiction in order to show that $\Phi$ is a partition. Suppose to the contrary that there would be an $\vee$-node $u$ of $\mathcal{F}$ and a (partial) assignment $\beta \in \beta(u)$ such that the described set of functions $\Phi$ is not a partition. So $\Phi$ has to violate at least one of the partition properties. It will be shown that the violation of at least one partition property leads to a contradiction.\\
	
	\textbf{Satisfiability}
	Suppose there would be a function $\varphi \in \Phi$ with $\varphi = \bot$. By definition of $R^{+}(\mathcal{F}, \beta)$ and $R^{+}(\overline{\mathcal{F}}, \beta)$ the nodes $u_1, \dots, u_k, v_1, \dots, v_l$ are no sinks. Therefore, an inner node $u_i$ or $v_j$ of $\mathcal{F}$ or $\overline{\mathcal{F}}$, respectively, represents the constant function $\bot$. This is a contradiction to the assumption of $\mathcal{F}$ and $\overline{\mathcal{F}}$ being simple.\\
	
	\textbf{Disjointness}
	Suppose there would be functions $\varphi_1, \varphi_2 \in \Phi$ with $\varphi_1 \wedge \varphi_2 \neq \bot$. For this purpose, consider the following cases.
	
	\begin{enumerate}
		\item The functions $\varphi_1, \varphi_2$ are represented by nodes of the same $\vee_1$-OBDD, i.e., either $\varphi_1 = \Phi_{u_i}$, $\varphi_2 = \Phi_{u_j}$ or $\varphi_1 = \Phi_{v_i}$, $\varphi_2 = \Phi_{v_j}$ holds for $i \neq j$. Suppose $\Phi_{u_i} \wedge \Phi_{u_j} \neq \bot$. According to the definition of $R^{+}(\mathcal{F}, \beta)$ the assignment $\beta$ can be extended (maybe differently) such that there are accepting paths for $\beta$ in $\mathcal{F}$ containing $u_i$ and $u_j$. As $\Phi_{u_i} \wedge \Phi_{u_j} \neq \bot$ holds, there is an assignment $\beta^*$ of $Y$ (variables not assigned by $\beta$) such that $\Phi_{u_i}[\beta^*] = 1$ and $\Phi_{u_j}[\beta^*] = 1$. However, if we extend $\beta$ by $\beta^*$ then there are accepting paths for $(\beta, \beta^*)$ in $\mathcal{F}$ containing $u_i$ and $u_j$ with $i \neq j$. Because of the maximality of $u_i$ and $u_j$ w.r.t. $Y$ ($\mathcal{F}_{u_i}$ can't be a subgraph of $\mathcal{F}_{u_j}$ or vice versa) we know that there must be two distinct accepting paths. This is a contradiction to the property of $\mathcal{F}$ being unambiguous. If $\Phi_{v_i} \wedge \Phi_{v_j} \neq \bot$ holds, the contradiction can be derived analogously.
		
		\item The functions $\varphi_1, \varphi_2$ are represented by nodes of $\mathcal{F}$ and $\overline{\mathcal{F}}$, i.e., $\Phi_{u_i} \wedge \Phi_{v_j} \neq \bot$. Hence, there is an assignment $\beta^*$ of $Y$ such that $\Phi_{u_i}[\beta^*] = 1$ and $\Phi_{v_j}[\beta^*] = 1$ leading to accepting paths for $\beta^*$ in the subgraphs $\mathcal{F}_{u_i}$ and $\overline{\mathcal{F}}_{v_j}$. By definition of $R^{+}(\mathcal{F}, \beta)$ and $R^{+}(\overline{\mathcal{F}}, \beta)$ the assignment $\beta$ can be extended such that there are accepting paths in $\mathcal{F}$ and $\overline{\mathcal{F}}$ containing $u_i$ and $v_j$, respectively. Like in the former case $\beta$ can be extended by $\beta^*$ such that there are accepting paths for $(\beta, \beta^*)$ in $\mathcal{F}$ and $\overline{\mathcal{F}}$ leading to a contradiction to $\Phi_{\mathcal{F}} = \overline{\Phi_{\overline{\mathcal{F}}}}$.
	\end{enumerate}
	
	\textbf{Cover}
	Suppose $\Phi_{u_1} \vee \dots \vee \Phi_{u_k} \vee \Phi_{v_1} \vee \dots \vee \Phi_{v_l} \neq \top$. Then, there exists an assignment $\beta^*$ of $Y$ such that $\Phi_{u_1}[\beta^*] = \dots = \Phi_{u_k}[\beta^*] = \Phi_{v_1}[\beta^*] = \dots = \Phi_{v_l}[\beta^*] = 0$. Hence, there is no accepting path for $\beta^*$ in $\mathcal{F}_{u_1}, \dots, \mathcal{F}_{u_k}, \overline{\mathcal{F}}_{v_1}, \dots, \overline{\mathcal{F}}_{v_l}$. Because every accepting path for $\beta$ in $\mathcal{F}$ and $\overline{\mathcal{F}}$ contains exactly one node from $u_1, \dots, u_k, v_1, \dots, v_l$, it is not possible to extend $\beta$ by $\beta^*$ resulting in an accepting path in $\mathcal{F}$ or $\overline{\mathcal{F}}$. This is a contradiction to $\Phi_{\mathcal{F}} \vee \overline{\Phi_{\overline{\mathcal{F}}}} = \top$.
	
	Now, we get the claimed lemma because the violation of at least one partition property leads to a contradiction. 
\end{proof}
\section*{Appendix B: Proof of Lemma \ref{lemma:hauptlemma_vee_one_obdd_to_sdd}}
\label{appendix:hauptlemma_vee_one_obdd_to_sdd}

\begin{proof}
	We give a proof by induction on the depth $l$ of the subgraph $C_{(u,\emptyset)}$ of the SDD $C$. Note that in the following proof we sometimes denote $C$ to be the Boolean function represented at the corresponding SDD. It will be clear from the context whether the SDD or the represented function is meant.\\
	
	\textbf{Base case $(l = 0):$}
	Since the depth of the subgraph $C_{(u,\emptyset)}$ is zero, it only consists of the node $(u, \emptyset)$. Therefore, $(u, \emptyset)$ was added to $C$ because of rule (a) from Simulation \ref{simulation:vtree_vee_one_obdds_to_sdds}. Otherwise, in case (b) or (c) the node $(u, \emptyset)$ would be connected to other nodes by outgoing edges resulting in an increase of depth.\\
	
	First, we will show that $C_{(u, \emptyset)}$ is a syntactically correct $\textnormal{SDD}$. According to rule (a) of Simulation \ref{simulation:vtree_vee_one_obdds_to_sdds} the node $(u, \emptyset)$ was added to $C$ because of a decision node $u \in (V \cup \overline{V})$ for a variable $x_i \in X$ that is connected only to sinks. In this particular case $(u, \emptyset)$ was labeled by a literal $x_i$ or $\overline{x_i}$ depending on the semantics of the decision node $u$. 
Then, we know that $\nodefunc{u} = v_i'$ and $C_{(u, \emptyset)}$ is an SDD 
representing a projective function as in the base case of Definition \ref{def:sdds}
respecting vtree $T_{v_i'}$ since it contains a leaf labeled by the variable $x_i$. It is evident from rule (a) that $C_{(u,\emptyset)}$ represents the same function as the node $u$ of $\mathcal{F}$ or $\overline{\mathcal{F}}$ because $(u,\emptyset)$ was labeled according to the semantics of $u$.\\
	
	\textbf{Induction hypothesis:}
	Each subgraph $C_{(u,\emptyset)}$ of $C$ with depth of at most $l$ is a syntactically correct $\textnormal{SDD}$ respecting vtree $T_v$ with $v = \nodefunc{u}$. Moreover, it represents the same Boolean function as the node $u$ of $\mathcal{F}$ or $\overline{\mathcal{F}}$.\\
	
	\textbf{Inductive step $(l \rightarrow l+1, l \geq 0)$:}
	In this particular case $(u, \emptyset)$ of $C$ was added because of rule (b) or (c). Otherwise, the depth of $C_{(u, \emptyset)}$ would be zero as mentioned in the base case. Subsequently, we will have a look at both cases.\\
	
	\underline{Case 1:} The node $(u, \emptyset)$ was added to $C$ due to rule (b) because of the decision node $u \in (V \cup \overline{V})$ for a variable $x_i \in X$. Then, $(u, \emptyset)$ is an $\vee$-node which is connected to the $\wedge$-nodes $(u, \wedge_0)$ and $(u, \wedge_1)$. The node $(u, \wedge_0)$ is connected to the node $(u, \overline{x_i})$ labeled by $\overline{x_i}$ and $(u, \wedge_1)$ is connected to $(u, x_i)$ labeled by $x_i$. Let $(u, u_0)$ and $(u, u_1)$ be the outgoing $0$- and $1$-edges of $u$, respectively. Then, $C$ also contains the edges $((u, \wedge_0), (u_0, \emptyset))$ and $((u, \wedge_1), (u_1, \emptyset))$. Since we have this setup of nodes and edges, $C_{(u, \emptyset)}$ is an 
inductively defined SDD constructed by smaller SDDs (see Definition \ref{def:sdds}).
Next, we will show that $C_{(u, \emptyset)}$ is a syntactically correct SDD respecting the vtree $T_{v_i'}$ with $v_i' = \nodefunc{u}$. For this purpose, we show that the smaller SDDs are syntactically correct and that they represent Boolean functions which form a partition.\\

$C_{p_1} = C_{(u,\overline{x_i})}$ and $C_{p_2} = C_{(u, x_i)}$ are SDDs representing a projective function and they consist of a single node labeled by $\overline{x_i}$ or $x_i$, respectively. According to the construction of $T$ in Simulation \ref{simulation:vtree_vee_one_obdds_to_sdds} the left subtree of $T_{v_i'}$ is a leaf labeled by $x_i$. Hence, $C_{p_1}$ and $C_{p_2}$ are SDDs respecting this left subtree. $C_{s_1} = C_{(u_0, \emptyset)}$ and $C_{s_2} = C_{(u_1, \emptyset)}$ are subgraphs of $C$ with a depth of at most $l-1$ since $C_{(u, \emptyset)}$ is a subgraph with depth of at most $l+1$ and $(u, \emptyset)$ is connected to the nodes $(u_0, \emptyset)$, $(u_1, \emptyset)$ by paths of length two. By induction hypothesis $C_{(u_0, \emptyset)}$ and $C_{(u_1, \emptyset)}$ are syntactically correct SDDs respecting vtrees $T_{\nodefunc{u_0}}$ and $T_{\nodefunc{u_1}}$, respectively. Since there are edges $(u, u_0)$ and $(u, u_1)$ in $\mathcal{F}$ or $\overline{\mathcal{F}}$ and the variable ordering is given by $x_1, \dots, x_n$, we know that $\nodefunc{u_0} = v_j$ or $\nodefunc{u_0} = v_j'$ holds for $j > i$. Otherwise, the variable ordering of $\mathcal{F}$ or $\overline{\mathcal{F}}$ would be violated. Analogously, we can derive $\nodefunc{u_1} = v_h$ or $\nodefunc{u_1} = v_h'$ for $h > i$. Therefore, both SDDs respect the right subtree $T_{v_{i+1}}$. Moreover, we know that the set of functions $\{C_{p_1}, C_{p_2}\}$ yield a partition since the following conditions are satisfied:

	\begin{itemize}
		\item $C_{p_1} = C_{(u,\overline{x_i})} = \overline{x_i} \neq \bot$, $C_{p_2} = C_{(u, x_i)} = x_i \neq \bot$, \hfill (satisfiability)
		\item $C_{p_1} \wedge C_{p_2} = \overline{x_i} \wedge x_i = \bot$, and \hfill (disjointness)
		\item $C_{p_1} \vee C_{p_2} = \overline{x_i} \vee x_i = \top$. \hfill (cover)
	\end{itemize}

Now, we want to show the equivalence of the represented functions. According to rule (b) of the simulation we have $C_{(u, \emptyset)} = \overline{x_i} C_{(u_0, \emptyset)} \vee x_i C_{(u_1, \emptyset)}$. W.l.o.g. let $u \in V$. Since $u$ is a decision node for the variable $x_i$, we know that $\Phi_{\mathcal{F}_u} = \overline{x_i}\Phi_{\mathcal{F}_{u_0}} \vee x_i\Phi_{\mathcal{F}_{u_1}}$ because of the Shannon decomposition rule. By induction hypothesis we get $C_{(u_0, \emptyset)} = \Phi_{\mathcal{F}_{u_0}}$ and $C_{(u_1, \emptyset)} = \Phi_{\mathcal{F}_{u_1}}$. Hence, $C_{(u_0, \emptyset)} = \Phi_{\mathcal{F}_{u}}$ holds. If $u \in \overline{V}$, we can derive the equivalence the same way.\\
		
	\underline{Case 2:} The node $(u, \emptyset)$ was added to $C$ due to rule (c) because of the $\vee$-node $u \in (V \cup \overline{V})$. W.l.o.g. suppose that $u \in V$ holds. According to rule (c) $(u, \emptyset)$ is an $\vee$-node which is connected to an $\wedge$-node $(u,v)$ for each $v \in (R^+ \cup \overline{R}^+)$. These $\wedge$-nodes are connected to further nodes based on rule (c). Thus, $C_{(u, \emptyset)}$ is an 
inductively defined SDD constructed by smaller SDDs.
Next, we will show that $C_{(u, \emptyset)}$ is a syntactically correct SDD respecting the vtree $T_{v_i}$ with $v_i = \nodefunc{u}$. For this purpose, we show that the smaller SDDs are syntactically correct and that they represent Boolean functions which form a partition.\\

The subgraph $C_{(v,\emptyset)}$ has at most depth $l-1$ for each $v \in (R^+ \cup \overline{R}^+)$ because by assumption $C_{(u, \emptyset)}$ is a subgraph of depth at most $l+1$ and $(u, \emptyset)$ is connected to $(v, \emptyset)$ by paths of length two. Thus, by the use of the inductive hypothesis $C_{(v, \emptyset)}$ is a syntactically correct $\textnormal{SDD}$ respecting the vtree $T_{\nodefunc{v}}$ for each $v \in (R^+ \cup \overline{R}^+)$. Since we have the edge $(u,v)$ in $\mathcal{F}$ and the given variable ordering is $x_1, \dots, x_n$, we know that $\nodefunc{v} = v_j'$ holds for $j \geq i$ because by assumption $v$ cannot be an $\vee$-node. Therefore, $C_{(v, \emptyset)}$ is an $\textnormal{SDD}$ respecting the vtree $T_{v_i'}$ as well. $C_{(u,\bot)}$ and $C_{(u, \top)}$ are $\textnormal{SDDs}$ representing $\bot$ and $\top$, respectively. By definition the right subtree of $T_{v_i}$ is a leaf labeled by the help variable $h_{x_i, \dots, x_n}$. Hence, $C_{(u,\bot)}$ and $C_{(u, \top)}$ are SDDs respecting this right subtree. Furthermore, the partition properties are satisfied because the set of functions $\{C_{(v, \emptyset)} \;|\; v \in (R^+ \cup \overline{R}^+) \}$ yield a partition: By induction hypothesis we have $C_{(v, \emptyset)} = \Phi_{\mathcal{F}_v}$ for each $v \in R^+$ and $C_{(v, \emptyset)} = \Phi_{\overline{\mathcal{F}}_v}$ for each $v \in \overline{R}^+$. Thus, we know that $\{C_{(v, \emptyset)} \;|\; v \in (R^+ \cup \overline{R}^+) \}$ is a partition using Lemma \ref{lemma:partitionlemma_vee_one_obdd_to_sdd}. Therefore, the desired properties are fulfilled:

\begin{itemize}
	\item for each $v \in (R^+ \cup \overline{R}^+):$ $C_{(v, \emptyset)} \neq \bot$, \hfill (satisfiability)
	\item for each $v, v' \in (R^+ \cup \overline{R}^+)$ with $v \neq v':$ $C_{(v, \emptyset)} \wedge C_{(v', \emptyset)} = \bot$, and \hfill (disjointness)
	\item we have $\bigvee_{v \in (R^+ \cup \overline{R}^+)} C_{(v, \emptyset)} = \top$. \hfill (cover)
\end{itemize}

Finally, we get the equivalence of $C_{(u, \emptyset)}$ and $\Phi_{\mathcal{F}_u}$ by applying the inductive hypothesis on the representation of $C_{(v, \emptyset)}$ for each $v \in (R^+ \cup \overline{R}^+)$. Since $C_{(u, \emptyset)}$ was constructed by rule (c), $C_{(u, \emptyset)}$ represents the following Boolean function:
\begin{eqnarray*}
	C_{(u, \emptyset)}  &=& \bigvee_{\substack{v \,\in\, R^+, \\ (u,v) \,\in\, E}}  C_{(v, \emptyset)}   C_{(u,\top)}  \,\vee\, \bigvee_{\substack{v \,\in\, R^+, \\ (u,v) \,\notin\, E}}  C_{(v, \emptyset)}   C_{(u,\bot)}  \,\vee\, \bigvee_{v \in \overline{R}^+}  C_{(v, \emptyset)}   C_{(u,\bot)}  \\
	&=& \bigvee_{\substack{v \,\in\, R^+, \\ (u,v) \,\in\, E}} ( C_{(v, \emptyset)}  \wedge \top) \,\vee\, \bigvee_{\substack{v \,\in\, R^+, \\ (u,v) \,\notin\, E}} ( C_{(v, \emptyset)}  \wedge \bot) \,\vee\, \bigvee_{v \in \overline{R}^+} ( C_{(v, \emptyset)}  \wedge \bot) \\
	&=& \bigvee_{\substack{v \,\in\, R^+, \\ (u,v) \,\in\, E}} ( C_{(v, \emptyset)}  \wedge \top) = \bigvee_{(u,v) \,\in\, E} C_{(v, \emptyset)} \overset{\textnormal{(ind.)}}{=} \bigvee_{(u,v) \,\in\, E} \Phi_{\mathcal{F}_{v}} \;=\; \Phi_{\mathcal{F}_{u}}
\end{eqnarray*}
	
\end{proof}
\section*{Appendix C: Proof of Lemma \ref{lemma:s_dnnf_to_vee_obdd_help_correct}}
\label{appendix:s_dnnf_to_vee_obdd_help_correct}

\begin{proof}
	We give a proof by contradiction adapting the proof of Lemma 5.4. from Beame and Liew \cite{BL15}. If necessary, we distinguish whether $i = j$ or $i \neq j$ holds.\\
	
	Suppose to the contrary that $u$ is a leaf of the given $\textnormal{DNNF}_T$ $\mathcal{D}$ labeled by a variable $x_i \in X$ and there exists a nontrivial path between $(u,s)$ and $(v,s')$ in $\mathcal{F}$ such that there exists a node in $\mathcal{D}_v$ labeled by a variable $x_j$ fulfilling $x_j \leq x_i$ w.r.t. $\pi^*$. We choose $v$ such that there exists no other node $v'$ in $\mathcal{D}$ for which there is a path from $(u,s)$ to $(v',s'')$ and $\mathcal{D}_v \subset \mathcal{D}_{v'}$ holds. Therefore, we call the chosen subgraph $\mathcal{D}_v$ to be \emph{maximal}. We know that $\mathcal{D}_v$ exists because by assumption $(v,s')$ is a node in $\mathcal{F}$ resulting from the node $v$ in $\mathcal{D}$.\\
	
	If the path from $(u,s)$ to $(v,s')$ only consists of one edge, then $v$ has to be a sink in $\mathcal{D}$ because $u$ is a leaf node and therefore $((u,s),(v,s'))$ was added to $\mathcal{F}$ because of the neutral edge $(u,v)$ in $\mathcal{D}$. This leads directly to a contradiction to the assumption that $\mathcal{D}_v$ contains a node labeled by a variable $x_j$. Now, consider paths from $(u,s)$ to $(v,s')$ in $\mathcal{F}$ consisting of at least two edges. Especially, consider the last edge of the path:
	\begin{eqnarray*}
		(u,s), \dots, (w,s''), (v,s')\,.
	\end{eqnarray*}
	Suppose that there would exist the edge $(w,v)$ in $\mathcal{D}$. This would lead to a contradiction to the assumed maximality of $\mathcal{D}_v$ because we had $\mathcal{D}_v \subset \mathcal{D}_w$ and $x_j$ would also occur in $\mathcal{D}_w$. Therefore, we know that the edge between $(w,s'')$ and $(v,s')$ has to be of the third type and was added to $\mathcal{F}$ because of a heavy edge in $\mathcal{D}$. Let $z$ be the corresponding $\wedge$-node in $\mathcal{D}$, $e = (z, v_l)$ the light edge and $e' = (z,v_r)$ the heavy edge. Since the edge is of the third type and $z$ is the corresponding $\wedge$-node, we have $v = v_r$ because the edge between $(w,s'')$ and $(v,s')$ was added to $\mathcal{F}$ by mapping the heavy edge $(z,v_r)$. Furthermore, for that reason we have $s'' = s' \cup \{e\}$. In the following we distinguish two cases at which point the light edge $e$ was added to the set of light edges $s''$.
(See Figure \ref{figure:example_help_lemma_syntactically_correctness} for a visualization of the two cases.)\\
	
	\begin{figure}[!b]
		\centering
		\begin{subfigure}[b]{0.49\textwidth}
			\resizebox{!}{0.3\textheight}{\begin{tikzpicture}[>= stealth', ]

\tikzstyle{vertex} = [draw, circle]
\tikzstyle{sink} = [draw, rectangle]
\tikzstyle{empty} = [inner sep = 0pt]

\node[vertex] (and) at (0,0) [label={[font = \footnotesize] above right:{$z$}}] {$\wedge$};

\node (top) [above = of and] {};

\node[vertex] (v_l) [below left = of and, label={[font = \footnotesize] left:{$v_l$}}] {};
\node[vertex] (v_r) [below right = of and, label={[font = \footnotesize] right:{$v = v_r$}}] {};

\node[empty] (v1_top) [below = 0.4 of v_l] {}; 
\node[empty] (v2_top) [below = 0.4 of v_r, label={[font = \footnotesize, label distance = 0.5cm] below :{$x_i$}}] {};
\node[empty] (v1_left) [below left = 1 and 0.5 of v1_top] {};
\node[empty] (v2_left) [below left = 1 and 0.5 of v2_top] {};
\node[empty] (v1_right) [below right = 1 and 0.5 of v1_top] {};
\node[empty] (v2_right) [below right = 1 and 0.5 of v2_top] {};

\node[vertex] (v1_zero) [below = 0.25 of v1_left, label={[font = \footnotesize] below left:{$u$}}] {};
\node[vertex] (v1_one) [below = 0.25 of v1_right, label={[font = \footnotesize] below left:{$w$}}] {};

\draw (and) to node [auto, swap] {$e$} (v_l);
\draw (and) to node [auto] {$e'$} (v_r);

\draw[dashed] (v_l) to (v1_top);
\draw[dashed] (v_r) to (v2_top);

\draw (v1_top) to (v1_left);
\draw (v1_top) to (v1_right);
\draw (v2_top) to (v2_left);
\draw (v2_top) to (v2_right);
\draw (v1_left) to (v1_right);
\draw (v2_left) to (v2_right);

\draw (v1_left) to (v1_zero);
\draw (v1_right) to (v1_one);

\draw[dashed] (top) to (and);

\end{tikzpicture}}
			\caption{}
			\label{figure:example_help_lemma_syntactically_correctness_case_1}
		\end{subfigure}
		\begin{subfigure}[b]{0.49\textwidth}
			\resizebox{!}{0.3\textheight}{\begin{tikzpicture}[>= stealth', ]

\tikzstyle{vertex} = [draw, circle]
\tikzstyle{sink} = [draw, rectangle]
\tikzstyle{empty} = [inner sep = 0pt]

\node[vertex] (and1) at (0,0) {$\wedge$};

\node (top) [above = 0.25 of and] {};

\node[vertex] (and2) [below right = of and1, label={[font = \footnotesize] above right:{$z$}}] {$\wedge$};

\node[empty] (v1_top) [below left = of and1] {}; 
\node[empty] (v1_left) [below left = 1 and 0.5 of v1_top] {};
\node[empty] (v1_right) [below right = 1 and 0.5 of v1_top] {};

\node[vertex] (v1_zero) [below = 0.25 of v1_left, label={[font = \footnotesize] below left:{$u$}}] {};
\node[vertex] (v_l) [below left = 1 and 0.25 of and2, label={[font = \footnotesize] left:{$v_l$}}] {};
\node[vertex] (v_r) [below right = 1 and 0.25 of and2, label={[font = \footnotesize] right:{$v = v_r$}}] {};

\node[empty] (v2_top) [below = 0.5 of v_l] {}; 
\node[empty] (v2_left) [below left = 1 and 0.5 of v2_top] {};
\node[empty] (v2_right) [below right = 1 and 0.5 of v2_top] {};
\node[vertex] (v2_zero) [below = 0.25 of v2_left, label={[font = \footnotesize] below left:{$w$}}] {};

\node[empty] (v3_top) [below = 0.5 of v_r, label={[font = \footnotesize, label distance = 0.5cm] below :{$x_i$}}] {}; 
\node[empty] (v3_left) [below left = 1 and 0.5 of v3_top] {};
\node[empty] (v3_right) [below right = 1 and 0.5 of v3_top] {};

\draw[dashed] (and1) to (v1_top);
\draw[dashed] (and1) to (and2);

\draw (v1_top) to (v1_left);
\draw (v1_top) to (v1_right);
\draw (v1_left) to (v1_right);
\draw (v1_left) to (v1_zero);

\draw (v2_top) to (v2_left);
\draw (v2_top) to (v2_right);
\draw (v2_left) to (v2_right);

\draw (v3_top) to (v3_left);
\draw (v3_top) to (v3_right);
\draw (v3_left) to (v3_right);

\draw (and2) to node [auto, swap] {$e$} (v_l);
\draw (and2) to node [auto] {$e'$} (v_r);

\draw[dashed] (top) to (and);
\draw[dashed] (v_l) to (v2_top);
\draw[dashed] (v_r) to (v3_top);

\draw (v2_left) to (v2_zero);

\end{tikzpicture}}
			\caption{}
			\label{figure:example_help_lemma_syntactically_correctness_case_2}
		\end{subfigure}	
		\caption{Subgraphs of $\mathcal{D}$ visualizing both cases concerning the proof of 
                         Lemma \ref{lemma:s_dnnf_to_vee_obdd_help_correct}.}
		\label{figure:example_help_lemma_syntactically_correctness}
	\end{figure}
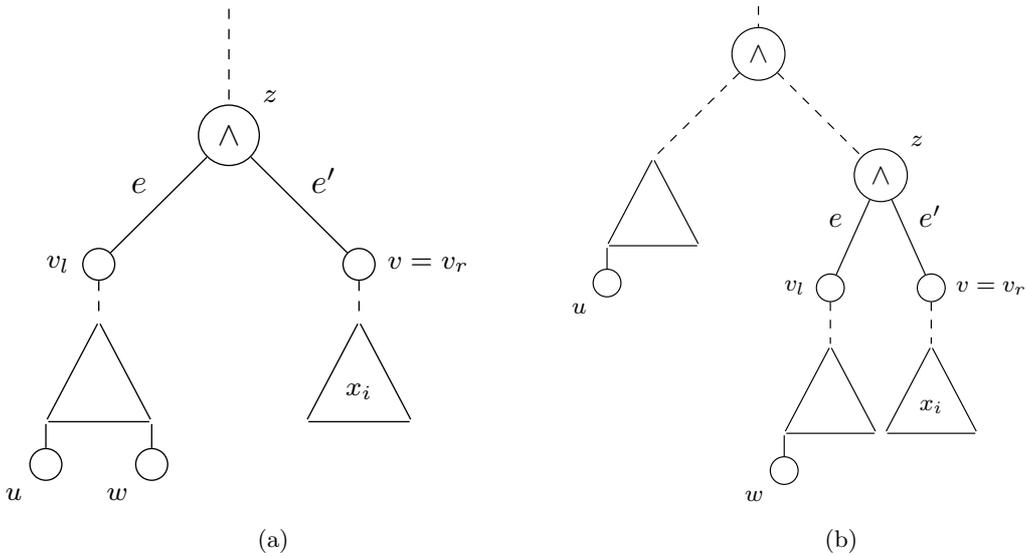
	
	\textbf{At the beginning of the path (a)}
	Suppose $e \in s$ holds. Hence, we know that there is a path containing the light edge $e$ from the root of $\mathcal{D}$ to $u$. There is a path from $\mathcal{D}$ to $v$ containing the heavy edge $e'$ as well. Subsequently, we differentiate whether $i = j$ or $i \neq j$ holds.\\
	
	Assume we have $i = j$. There is a node labeled by $x_i$ in the left subgraph $\mathcal{D}_{v_l}$, namely $u$. Additionally, by assumption the same variable $x_i = x_j$ appears in the right subgraph $\mathcal{D}_{v} = \mathcal{D}_{v_r}$. This is contradiction to the premise of $\mathcal{D}$ being a $\textnormal{DNNF}_T$ because for the $\wedge$-node $z$ we have: $\vars{v_l} \cap \vars{v_r} \neq \emptyset$, i.e., the decomposability is violated.\\
	
	Now, assume we have $i \neq j$. Let $\tau$ be the node of the vtree $T$ such that $\dnode{z} = \tau$. We can find such a node because $\mathcal{D}$ is a $\textnormal{DNNF}_T$. Let $\tau_l, \tau_r$ be the children of $\tau$. W.l.o.g. suppose $\vars{v_l} \subseteq \vars{\tau_l}$ and $\vars{v_r} \subseteq \vars{\tau_r}$. Otherwise, we could label the children of $\tau$ vice versa. Like in the preceding case we know that there is a node labeled by $x_i$ in $\mathcal{D}_{v_l}$ and a node labeled by $x_j$ in $\mathcal{D}_{v_r}$. So, we know that $x_i \in \vars{\tau_l}$ and $x_j \in \vars{\tau_r}$. By assumption we have $x_j < x_i$ w.r.t. $\pi^*$. Therefore, it must hold that $M^{\tau}_r < M^{\tau}_l$ by definition of $\pi^* = \pi(\mathcal{D},T)$. But now, we have a contradiction to the premise of $e = (z,v_l)$ being marked as a light edge which only holds for $M^{\tau}_l \leq M^{\tau}_r$.\\
	
	\textbf{During the course of the path (b)}
	Suppose $e \notin s$ holds. Since the edge from $(w,s'')$ to $(v,s')$ is one of the third type, we know $e \in s''$. Hence, there must exist an edge of the first type $((z,s_1),(v_l,s_1 \cup \{e\}))$ on the path $(u,s), \dots, (w,s''), (v,s')$. Therefore, there is also a path from $(u,s)$ to $(z,s_1)$ in $\mathcal{F}$ and $\mathcal{D}_v \subset \mathcal{D}_z$ holds because of the heavy edge $e' = (z,v_r) = (z,v)$ in $\mathcal{D}$. The subgraph $\mathcal{D}_z$ contains a node labeled by $x_j$ as well because we assumed that $\mathcal{D}_v$ contains such a node. Altogether, we get a contradiction to the maximality of $\mathcal{D}_v$.\\
	
	Now, the claimed lemma results from the contradictions of the individual cases. 
\end{proof}
\section*{Appendix D: Proof of Lemma \ref{lemma:s_dnnf_to_vee_obdd_semantic_correctness}}
\label{appendix:s_dnnf_to_vee_obdd_semantic_correctness}

\begin{proof}
	W.l.o.g. we assume that there is no $\wedge$- or $\vee$-node in $\mathcal{D}$ which uses constants as input. Otherwise, we could simplify $\mathcal{D}$ by propagating the constant according to the semantics of $\wedge$- and $\vee$-nodes. We give a proof by induction on the length $l$ (number of edges) of an accepting path and we represent a path by a list of its nodes.\\
	
	\textbf{Base case $(l = 1):$}
	Let $P = (u_1, s_1), (u_2, s_2)$ be an accepting path in $\mathcal{F}$ for a variable assignment $b$. Since $P$ is an accepting path, the node $u_2' := (u_2, s_2)$ has to be a $1$-sink of $\mathcal{F}$. Hence, by rule (v) of Simulation \ref{simulation:structured_dnnf_to_vee_obdd} the node $u_2$ is also a $1$-sink in $\mathcal{D}$ and $s_2 = \emptyset$ holds. Furthermore, we know that $u_1$ cannot be an $\vee$- or $\wedge$-node because we assumed that there are no $\vee$- or $\wedge$-nodes with constant inputs. Thus, $u_1$ is a decision node for a variable $x_i \in X$ in $\mathcal{D}$ and by rule (i) we know that $u_1'$ is a decision node for the same variable. The edge $(u_1',u_2')$ was added to $\mathcal{F}$ because of the neutral edge $(u_1,u_2)$ in $\mathcal{D}$. The node $u_1$ has to be the root of $\mathcal{D}$ by construction of $\mathcal{F}$. Thus, the decision node $u_1$, the $1$-sink $u_2$ and the edge $(u_1,u_2)$ form a $1$-certificate representing $b$. If the edge $(u_1,u_2)$ in $\mathcal{D}$ is labeled by $0$, we have $b_i = 0$. Otherwise, we have $b_i = 1$.\\
	
	\textbf{Induction hypothesis:}
	For each accepting path for a variable assignment $b$ in $\mathcal{F}$ with length at most $l$ there is a $1$-certificate of $\mathcal{D}$ representing $b$.\\
	
	\textbf{Inductive step $(l-1 \rightarrow l, l \geq 2)$:}
	Let $P = (u_1, s_1), \dots, (u_l, s_l), (u_{l+1}, s_{l+1})$ be an accepting path for a variable assignment $b$ in $\mathcal{F}$. We do the inductive step by considering the following two cases.\\
	
	\underline{Case 1:} $u_1' := (u_1, s_1)$ is an $\vee$-node of $\mathcal{F}$. We know that $u_1'$ is the root of $\mathcal{F}$ and by construction $u_1' = (\textnormal{root}(\mathcal{D}), \emptyset)$ holds. Furthermore, $u_1$ has to be an $\vee$-node in $\mathcal{D}$ as well since only $\vee$-nodes of $\mathcal{D}$ are mapped to $\vee$-nodes of $\mathcal{F}$ by the given simulation. Therefore, the edge $((u_1, s_1), (u_2, s_2))$ was added to $\mathcal{F}$ because of the neutral edge $(u_1,u_2)$ in $\mathcal{D}$.\\
	
	Consider the nondeterministic OBDD $\mathcal{F}'$ which results from the given simulation with input $\mathcal{D}_{u_2}$. $\mathcal{F}'$ corresponds to the nondeterministic OBDD with root $(u_2, s_2)$ which arises from $\mathcal{F}$ by removing all nodes and edges that cannot be reached from $(u_2, s_2)$. Now, consider the subpath $P' = (u_2, s_2), \dots, (u_{l+1}, s_{l+1})$ of $P$. The subpath $P'$ is an accepting path for $b$ in $\mathcal{F}'$. Otherwise, $P$ would be no accepting path for $b$ in $\mathcal{F}$. Furthermore, $P'$ contains an edge less than $P$. Thus, by the inductive hypothesis there exists a $1$-certificate of $\mathcal{D}_{u_2}$ representing $b$. Since $u_1$ is an $\vee$-node and the root of $\mathcal{D}$, we can expand the $1$-certificate of $\mathcal{D}_{u_2}$ by $u_1$ and the edge $(u_1, u_2)$ in order to get a $1$-certificate of $\mathcal{D}$.\\
	
	\underline{Case 2:} $u_1' := (u_1, s_1)$ is not an $\vee$-node of $\mathcal{F}$. Then, $u_1'$ has to be an unlabeled node resulting from the $\wedge$-node $u_1$ in $\mathcal{D}$. Suppose to the contrary that $u_1'$ would be a sink. Then, $P$ would be no computing path because $P$ contains two sinks. Moreover, suppose $u_1'$ would be a decision node for a variable $x_i \in X$. Then, by rule (i) of Simulation \ref{simulation:structured_dnnf_to_vee_obdd} the node $u_1$ is also a decision node for the same variable. But now, $u_1$ would be a leaf in $\mathcal{D}$ and therefore the length of the accepting path had to be $1$. Finally, consider $u_1'$ would be an unlabeled node which was created because of a $1$-sink in $\mathcal{D}$. Then, $\mathcal{D}$ would only consist of this $1$-sink and $P$ had length 0.\\
	
	Since $u_1'$ is the root of $\mathcal{F}$, we have $u_1' = (\textnormal{root}(D), \emptyset)$. As $P$ is an accepting path in $\mathcal{F}$, the node $u_{l+1}' := (u_{l+1}, s_{l+1})$ is a $1$-sink. We know by rule (v) of Simulation \ref{simulation:structured_dnnf_to_vee_obdd} that $s_{l+1} = \emptyset$ and $u_{l+1}$ is also a $1$-sink in $\mathcal{D}$. The edge $((u_1, s_1), (u_2, s_2))$ was added to $\mathcal{F}$ because of the light edge $e = (u_1,u_2)$ in $\mathcal{D}$ since $u_1$ is an $\wedge$-node. Therefore, we have $s_2 = s_1 \cup \{e\} = \{e\}$. Since $s_{l+1} = \emptyset$ holds, there must exist an edge $((u_i, s_i), (u_{i+1}, s_{i+1}))$ in $\mathcal{F}$ with $3 \leq i \leq l-1$ which was added because of the corresponding heavy edge $e' = (u_1, v_r)$ in $\mathcal{D}$. The bounds of $i$ emerge from the first and last position of an unlabeled node $(u_i, s_i)$ on $P$ that is connected to $(u_{i+1}, s_{i+1})$ by an edge of the third type. Otherwise, we would have $e \in s_{l+1}$ resulting in $(u_{l+1}, s_{l+1})$ not being a $1$-sink and $P$ not being accepting. As $((u_i, s_i), (u_{i+1}, s_{i+1}))$ is an edge of the third type, we have $u_{i+1} = v_r$ and $u_i$ is a $1$-sink in $\mathcal{D}$.\\
	
	Now, let $P' = (u_2, s_2), \dots, (u_i, s_i)$ and $P'' = (u_{i+1}, s_{i+1}), \dots, (u_{l+1}, s_{l+1})$ be subpaths of $P$ such that $i$ is chosen as described in the previous paragraph. Consider the nondeterministic OBDD $\mathcal{F}'$ resulting from the given simulation of the left subgraph $\mathcal{D}_{u_2}$. Alternatively, we can get $\mathcal{F}'$ from $\mathcal{F}$ by removing all nodes $(u,s)$ fulfilling $e \notin s$ and replacing unlabeled nodes without outgoing edges by $1$-sinks. Moreover, consider the nondeterministic OBDD $\mathcal{F}''$ resulting from the given simulation of the right subgraph $\mathcal{D}_{u_{i+1}}$. We can get $\mathcal{F}''$ from $\mathcal{F}$ by removing the root $(u_1, s_1)$ and each node $(u,s)$ for which $e \in s$ holds.\\
	
	Next, we want to derive $1$-certificates of $\mathcal{D}_{u_2}$ and $\mathcal{D}_{u_{i+1}}$ representing $b$ from the given subpaths $P'$ and $P''$, respectively. The root of $\mathcal{F}'$ is the fist node $(u_2, s_2)$ of $P'$. Each edge of $P'$ exists in $\mathcal{F}'$ since we only removed nodes $(u,s)$ for which $e \notin s$ holds. Furthermore, the node $(u_i, s_i)$ is a $1$-sink in $\mathcal{F}'$ because we split up $P$ such that $((u_i, s_i), (u_{i+1}, s_{i+1}))$ is an edge of the third type. Hence, $P'$ is an accepting path for $b$ in $\mathcal{F}'$ which is shorter than $P$. By induction hypothesis there is a $1$-certificate of $\mathcal{D}_{u_2}$ representing $b$.\\
	
	The root of $\mathcal{F}''$ is the first node $(u_{i+1}, s_{i+1})$ of $P''$. The path $P''$ is a proper subpath of $P$ and has to be an accepting path for $b$ in $\mathcal{F}''$ since otherwise $P$ would be no accepting path for $b$ in $\mathcal{F}$. By induction hypothesis there is a $1$-certificate of $\mathcal{D}_{u_{i+1}}$ representing $b$.\\
	
	Finally, we will combine the $1$-certificates of $\mathcal{D}_{u_2}$ and $\mathcal{D}_{u_{i+1}}$ in order to get a $1$-certificate of $\mathcal{D}$ representing $b$. At the beginning we observed that $u_1$ has to be the root of $\mathcal{D}$. The edge $((u_1, s_1),(u_2,s_2))$ of $P$ was added to $\mathcal{F}$ because of the light edge $(u_1,u_2)$ and $((u_i, s_i),(u_{i+1},s_{i+1}))$ was added because of the heavy edge $(u_1, u_{i+1})$. Thus, the node $u_1$, both edges $(u_1,u_2)$, $(u_1, u_{i+1})$, and the $1$-certificates of $\mathcal{D}_{u_2}$ and $\mathcal{D}_{u_{i+1}}$ give a $1$-certificate of $\mathcal{D}$ representing $b$. 
\end{proof}
\section*{Appendix E: Proof of Lemma \ref{lemma:s_dnnf_to_vee_obdd_vollstaendigkeit}}
\label{appendix:s_dnnf_to_vee_obdd_completeness}

\begin{proof}
	W.l.o.g. we assume that there is no $\wedge$- or $\vee$-node in $\mathcal{D}$ which uses constants as input. Otherwise, we could simplify $\mathcal{D}$ by propagating the constant according to the semantics of $\wedge$- and $\vee$-nodes. Furthermore, we assume that $\mathcal{D}$ consists not only of a sink. We give a proof by induction on the depth $l$ (longest path from the root to a leaf) of a $1$-certificate of $\mathcal{D}$.\\
	
	\textbf{Base case $(l = 1):$}
	Let $\mathcal{C}$ be a $1$-certificate of $\mathcal{D}$ of depth one representing the satisfying variable assignment $b$. By definition of a certificate we have $\rootset{\mathcal{C}} = \rootset{\mathcal{D}} =: u$. The root $u$ has to be a decision node for a variable $x_i \in X$. Suppose to the contrary that $u$ would be an $\wedge$- or an $\vee$-node. Then, the inputs of $u$ had to be constants as $\mathcal{C}$ is of depth one. This was precluded by assumption. Moreover, $u$ is not a sink since we also precluded it by assumption. Therefore, $\mathcal{C}$ consists of the root $u$, a $1$-sink $v$, and an edge $(u,v)$ which is labeled consistently with $b$. So, $P = (u,\emptyset), (v, \emptyset)$ is an accepting path for $b$ in $\mathcal{F}$.\\
	
	\textbf{Induction hypothesis:}
	For each $1$-certificate of $\mathcal{D}$ representing $b$ with depth of at most $l$, there exists an accepting path for $b$ in $\mathcal{F}$.\\
	
	\textbf{Inductive step $(l \rightarrow l+1, l \geq 1)$:}
	Let $\mathcal{C}$ be a $1$-certificate of $\mathcal{D}$ with depth $l+1$ representing the satisfying variable assignment $b$. Let $u:= \rootset{\mathcal{C}} = \rootset{\mathcal{D}}$. We do the inductive step by considering the following two cases.\\
	
	\underline{Case 1:} $u$ is an $\vee$-node. By definition of certificates, $\mathcal{C}$ contains exactly one child node of $u$, called $v$, and the edge $(u,v)$. The subtree $\mathcal{C}_v$ of $\mathcal{C}$ has to be a $1$-certificate of $\mathcal{D}_v$ since $\mathcal{C}$ would not be one of $\mathcal{D}$. Moreover, the depth of $\mathcal{C}_v$ is $l$. By induction hypothesis there exists an accepting path for $b$ in the nondeterministic OBDD $\mathcal{F}_v$ which results from the given simulation by input of $\mathcal{D}_v$. Since we have $S(v) = \{\emptyset\}$ for $v$ in $\mathcal{D}_v$ and $\emptyset \in S(v)$ for $v$ in $\mathcal{D}$, we know that $\mathcal{F}_v$ is a subgraph of $\mathcal{F}$. Apart from the nodes and edges of $\mathcal{F}_v$, $\mathcal{F}$ also contains the edge $((u,\emptyset), (v, \emptyset))$ because of the neutral edge $(u,v)$ in $\mathcal{D}$. We can extend $P_v$ to be an accepting path of $\mathcal{F}$ by adding $((u,\emptyset), (v, \emptyset))$ as a prefix.\\
	
	\underline{Case 2:} $u$ is not an $\vee$-node. The node $u$ has to be an $\wedge$-node. Suppose to the contrary that $u$ is a decision node. Then, the depth of $\mathcal{C}$ would be $1$ as in the base case. By definition of certificates, $\mathcal{C}$ contains both children of $u$, called $u_l$ and $u_r$. We assume that $(u,u_l)$ is the light edge. Otherwise, we rename the child nodes of $u$. The subtrees $\mathcal{C}_{u_l}$ and $\mathcal{C}_{u_r}$ have to be $1$-certificates of $\mathcal{D}_{u_l}$ and $\mathcal{D}_{u_r}$, respectively, because otherwise $\mathcal{C}$ would be no $1$-certificate of $\mathcal{D}$. Moreover, we know that $\mathcal{C}_{u_l}$ and $\mathcal{C}_{u_r}$ have a depth of at most $l$. By induction hypothesis there are accepting paths for $b$ in $\mathcal{F}_{u_l}$ and $\mathcal{F}_{u_r}$ which are nondeterministic OBDDs resulting from the simulation of $\mathcal{D}_{u_l}$ and $\mathcal{D}_{u_r}$, respectively.\\
	
	Let $P' = (u_1', s_1'), \dots, (u_g', s_g')$ and $P'' = (u_1'', s_1''), \dots, (u_h'', s_h'')$ be the accepting paths for $b$ in $\mathcal{F}_{u_l}$ and $\mathcal{F}_{u_r}$, respectively. According to the simulation we know that $(u_1', s_1') = (\rootset{\mathcal{D}_{u_l}}, \emptyset) = (u_l, \emptyset)$ and $(u_1'', s_1'') = (\rootset{\mathcal{D}_{u_r}}, \emptyset) = (u_r, \emptyset)$. Furthermore, $(u_g', s_g')$ and $(u_h'', s_h'')$ have to be $1$-sinks and $s_g' = s_h'' = \emptyset$. It is our aim to identify $P'$ and $P''$ in $\mathcal{F}$ and to extend them with two further edges to an accepting path for $b$.\\
	
	Since $(u, u_r)$ is a heavy edge of $\mathcal{D}$ leading to the root of $\mathcal{D}_{u_r}$, we have $S(u_r) = \{\emptyset\}$ in $\mathcal{D}$. If there was any other set of light edges in $S(u_r)$, then the decomposability property would be violated at the $\wedge$-node $u$: one of the light edges of a set of $S(u_r)$ has to connect a node of $\mathcal{D}_{u_l}$ with $u_r$. Otherwise, $\mathcal{D}_{u_r}$ would be cyclic. Furthermore, we have $S(u_r) = \{\emptyset\}$ in $\mathcal{D}_{u_r}$ since $u_r$ is the root of $\mathcal{D}_{u_r}$. Hence, $\mathcal{F}_{u_r}$ is a subgraph of $\mathcal{F}$. Thus, $P''$ is a path from $(u_r, \emptyset)$ to a $1$-sink in $\mathcal{F}$.\\
	
	However, $(u, u_l)$ is a light edge in $\mathcal{D}$ such that $\{e\} \in S(u_l)$ holds in $\mathcal{D}$. Further, we also know that $S(u_l) = \{\{e\}\}$ holds in $\mathcal{D}$ because otherwise the decomposability of $\mathcal{D}$ would be violated. But we have $S(u_l) = \emptyset$ in $\mathcal{D}_{u_l}$ because $u_l$ is the root of $\mathcal{D}_{u_l}$. Hence, there exists an isomorphism between $\mathcal{F}_{u_l}$ and the subgraph of $\mathcal{F}$ which was added because of $\mathcal{D}_{u_l}$ since $((u, s), (v, s'))$ is an edge of $\mathcal{F}_{u_l}$ if and only if $((u, s \cup \{e\}), (v, s' \cup \{e\}))$ is an edge of $\mathcal{F}$.\\
	
	Finally, consider $P = (u, \emptyset), (u_1', s_1' \cup \{e\}), \dots, (u_g', s_g' \cup \{e\}), (u_1'', s_1''), \dots, (u_h'', s_h'')$. We get $P$ by concatenating a modified version of $P'$, $P''$, and two more edges. The first edge $((u, \emptyset), (u_1', s_1' \cup \{e\})) = ((u, \emptyset), (u_l, \{e\}))$ exists in $\mathcal{F}$ because of the light edge $(u, u_l)$ in $\mathcal{D}$. The sequence of edges $(u_1', s_1' \cup \{e\}), \dots, (u_g', s_g' \cup \{e\})$ exist in $\mathcal{F}$ since $P'$ is an accepting path of $\mathcal{F}_{u_l}$ and there exists the isomorphism between the nodes of $\mathcal{F}$ and $\mathcal{F}_{u_l}$. Furthermore, we have $(u_g', s_g' \cup \{e\}) = (u_g', \{e\})$ since $P'$ is an accepting path and therefore $(u_g', s_g')$ is a $1$-sink in $\mathcal{F}_{u_l}$ with $s_g' = \emptyset$. So, the edge $((u_g', s_g' \cup \{e\}), (u_1'', s_1'')) = ((u_g', \{e\}), (u_r, \emptyset))$ exists because of the heavy edge $(u, u_r)$ in $\mathcal{D}$. Finally, the path $(u_1'', s_1''), \dots, (u_h'', s_h'')$ ends in a $1$-sink of $\mathcal{F}$. Hence, $P$ is an accepting path for $b$ in $\mathcal{F}$.
\end{proof}

\section*{Appendix F: Proof of Lemma \ref{lem:kOBDDtransform}\label{appendix:E}
}

\begin{proof}
Our aim is to prove that each function representable by a $k$-OBDD of polynomial size,
where $k$ is an arbitrary constant, 
can also be represented by an unambiguous nondeterministic OBDD of polynomial size 
with only one nondeterministic node at the beginning.
For this reason we present a polynomial transformation from $k$-OBDDs into 
equivalent restricted unambiguous nondeterministic OBDDs. The following construction was first used in
\cite{BSSW98} proving that the satisfiability problem can be solved in polynomial time
for functions represented by $k$-OBDDs. Later it was also used in \cite{BW99} in order 
to prove that $k$-OBDDs can be polynomially transformed into OBDDs which use 
so-called parity nondeterminism.

Let $f$ be the function represented by a given $k$-OBDD $G$ and let $k$ be a constant.
We start with the observation 
that there is exactly one accepting path for each $1$-input
in a $k$-OBDD since it is a deterministic model. 
Now, the crucial idea is a suitable decomposition of a given 
$k$-OBDD $G$.
For this we consider the at most $s = |G|^{k-1}$ possibilities to
 switch between the layers of $G$.
The $i$-th auxiliary function, $1 \le i \le s$, equals 1 for the $1$-inputs of $f$
that choose the $i$-th possibility which means that the accepting paths for these inputs 
run through the layers of the given $k$-OBDD $G$ in the chosen way.
Such an auxiliary function can be represented by an OBDD of size $|G|^k$
by combining parts of the $k$-OBDD via conjunction. 
Here we use the fact that in a $k$-OBDD all layers respect
the same variable ordering.
(OBDDs in general do not have nice algorithmic properties. There are examples known
such that $g_n$ and $h_n$ are two Boolean functions which have OBDDs of linear size (for
different variable orderings) but $f_n=g_n \wedge h_n$ has even exponential nondeterministic FBDD
size. 
The so-called {\it permutation test function} 
is an example of such a function $f_n$. 
If only OBDDs respecting the same variable ordering
are considered, all important operations can be performed efficiently.
For more details see, e.g., \cite{Weg00}.)

Next, we describe these ideas more precisely. Let $G_1, \ldots, G_k$ be the layers of $G$.
If $b$ is a $1$-input, the accepting path for $b$ leads 
through some layers $\ell(1)=1< \ell(2)< \cdots < \ell(r)\leq k$ of $G$, 
where $v_1$ is the source of $G$, $G_{\ell(i)}$ is reached at some node $v_i$, and from some node in $G_{\ell(r)}$ the 
sink labeled by $1$ is reached. There are at most 
$|G|^{k-1}$ possibilities to choose $r, \ell(2), \ldots, \ell(r), v_2, \ldots, v_r$.
For an arbitrary but fixed choice of these parameters we consider the layers $G_{\ell(1)}, \ldots, G_{\ell(r)}$ and 
the sinks. We transform $G_{\ell(i)}$, $i\in \{1, \ldots, r\}$, into an OBDD $G'_{\ell(i)}$ 
with source $v_i$ in the following way.
An edge leaving $G_{\ell(i)}$ is replaced by an edge to a $1$-sink if either $i <r$ 
and the edge leads to $v_{i+1}$ or $i=r$ and the edge leads to the $1$-sink. All other edges 
leaving a node in $G_{\ell(i)}$ are replaced by edges to the $0$-sink. Now, $G'_{\ell(i)}$ consists of all
nodes (and corresponding edges) reachable from $v_i$.
The function represented by $G$ has a $1$-input iff for some 
$r, \ell(2), \ldots, \ell(r), v_2, \ldots, v_r$ the corresponding OBDDs $G'_{\ell(1)}, \ldots, G'_{\ell(r)}$
have a common $1$-input. Since all these OBDDs respect the same variable ordering,
Bryant's apply algorithm \cite{Bry86} can be used to obtain 
an OBDD of size $\mathcal{O}(|G|^k)$ for the conjunction  of the functions represented
by $G'_{\ell(1)}, \ldots, G'_{\ell(r)}$
in time $\mathcal{O}(|G|^k)$.
Considering all choices of the parameters 
$r, \ell(2), \ldots, \ell(r), v_2, \ldots, v_r$
we obtain a unambiguous nondeterministic OBDD of size $\mathcal{O}(|G|^{2k-1})$
which has only one nondeterministic node at the beginning.
\end{proof}

\end{document}